\DeclareMathAlphabet{\mathcal}{OMS}{cmsy}{m}{n}
\title{Properties and Construction of Polar Codes}
\author{Ryuhei Mori\\\vspace{1em} \large Supervisor: Toshiyuki Tanaka}
\address{Department of Systems Science, Graduate School of Informatics, Kyoto University}
\date{1 February, 2009}
\newtheorem{theorem}{Theorem}
\newtheorem{definition}[theorem]{Definition}
\newtheorem{lemma}[theorem]{Lemma}
\newtheorem{corollary}[theorem]{Corollary}
\newtheorem{proposition}[theorem]{Proposition}
\numberwithin{theorem}{chapter}
\numberwithin{section}{chapter}
\numberwithin{equation}{chapter}
\begin{document}
\frontmatter
\begin{titlepage}
\begin{center}
{\LARGE\bf
Properties and Construction of Polar Codes
}

\vspace{3em}
{\Large
Ryuhei Mori
}

\vspace{2em}
{\large
Supervisor: Toshiyuki Tanaka
}

\vspace{30em}
{
\scshape
Department of Systems Science, Graduate School of Informatics

Kyoto University
}

\vspace{2em}
{
\large
Feb 1, 2010
}
\end{center}
\end{titlepage}
\setcounter{page}{0}
\thispagestyle{empty}
\mbox{}\newpage
\chapter*{Abstract}
{\normalsize
Recently, Ar{\i}kan introduced the method of channel polarization on which one can construct
efficient capacity-achieving codes, called polar codes, for any binary discrete memoryless channel.
In the thesis, we show that decoding algorithm of polar codes, called successive cancellation decoding,
can be regarded as belief propagation decoding, which has been used for decoding of low-density parity-check codes, on a tree graph.
On the basis of the observation, we show an efficient construction method of polar codes using density evolution,
which has been used for evaluation of the error probability of belief propagation decoding on a tree graph.
We further show that channel polarization phenomenon and polar codes can be generalized
to non-binary discrete memoryless channels.
Asymptotic performances of non-binary polar codes, which use non-binary matrices called the Reed-Solomon matrices,
are better than asymptotic performances of the best explicitly known binary polar code.
We also find that
the Reed-Solomon matrices are considered to be natural generalization of the original binary channel polarization introduced by Ar{\i}kan.
}

\chapter*{Acknowledgment}
{\normalsize
I would like to thank the supervisor Toshiyuki Tanaka for insightful suggestions and creative comments.
I thank all members of the laboratory for their encouragement, and friendship.
}

\tableofcontents

\mainmatter
\chapter{Introduction}
\section{Overview}
The channel coding problem, in which one attempts to realize reliable communication on an unreliable channel,
is one of the most central problems of information theory.
Although it had been considered that unlimited amounts of redundancy is needed for reliable communication,
Shannon showed that in large systems, one only has to pay limited amounts of redundancy for reliable communication~\cite{shannon1948mtc}.
Shannon's result is referred to as ``the channel coding theorem''.
Although the theorem shows the existence of a good channel code,
we have to explicitly find desired codes and practical encoding and decoding algorithms,
in order to realize reliable efficient communication.

\section{Channel Model and Channel Coding Problem}
\subsection{Channel model}
Let $\mathcal{X}$ and $\mathcal{Y}$ denote sets of input and output alphabets.
Assume that $\mathcal{X}$ is finite and that $\mathcal{Y}$ is at most countable.
A discrete memoryless channel $W$ is defined as conditional probability distributions
$W(y \mid x)$ of $y\in\mathcal{Y}$ for all $x\in\mathcal{X}$
which represent probability that a channel output is $y$ when $x$ is transmitted.

\subsection{Channel coding problem}
Let $\mathcal{M}$ and $M$ denote a set of messages and its cardinality, respectively.
When $M = |\mathcal{X}|$, we can make a one-to-one correspondence between $\mathcal{M}$ and $\mathcal{X}$.
Let us consider communication where a sender transmits $x\in\mathcal{X}$ which represents a corresponding message $m\in\mathcal{M}$
and a receiver estimates $m$ (equivalently $x$) from received alphabet $y$.
Let $\psi(y)\in\mathcal{M}$ denote an estimation given $y$.
In this communication, an \textit{error probability of a channel} $W$ is 
\begin{equation*}
\frac1M \sum_{x\in\mathcal{X}}\sum_{y\in\mathcal{Y}} W(y\mid x)
\mathbb{I}\{\psi(y)\ne x\}
\end{equation*}
where $\mathbb{I}$ is the indicator function.

When an error probability of $W$ is larger than desired even if an estimator $\psi(y)$ is optimal,
we have to consider using a channel $W$ multiple times
in order to improve reliability of communication.
Let $z_0^{n-1}$ denote a vector $(z_0,\dotsc,z_{n-1})$ and $z_i^j$ denote subvector $(z_i,\dotsc,z_j)$ of $z_1^n$.
If one sends $x_0^{n-1}\in\mathcal{X}^n$ by using a channel $W$ $n$ times,
we assume that the transition probability is $W^n(y_0^{n-1} \mid x_0^{n-1}) := \prod_{i=0}^{n-1} W(y_i\mid x_i)$ 
for all $y_0^{n-1}\in\mathcal{Y}^n$.
This property of channel is referred to as \textit{memoryless}.

Mappings $\phi: \mathcal{M} \to \mathcal{X}^n$ and $\psi: \mathcal{Y}^n \to \mathcal{M}$ denote \textit{encoder} and \textit{decoder}, respectively
for some $n\in\mathbb{N}$ called the \textit{blocklength}.
An image of $\phi$ and its elements are called \textit{code} and \textit{codewords}, respectively.
An \textit{error probability of a code} is defined as 
\begin{equation*}
\frac1M\sum_{a\in\mathcal{M}}\sum_{y_0^{n-1}\in\mathcal{Y}^n} W^n(y_0^{n-1}\mid \phi(a))\mathbb{I}\{\psi(y_0^{n-1})\ne a\}.
\end{equation*}

In order to measure efficiency of communication,
\textit{coding rate}, defined as $\log M/n$,
is considered.
Shannon and other researchers showed that there exists the asymptotically best trade-off between coding rate and error probability of code.

\begin{theorem}[Channel coding theorem]
There exists a quantity $C(W)\in(0,1)$, called capacity of a channel $W$, which has the following properties.

There exists sequences of encoders $\phi_i: \mathcal{M}_i\to X^{n_i}$
and decoders $\psi_i: Y^{n_i}\to\mathcal{M}_i$ such that error probabilities tend to 0 
and limit superior of $\log |\mathcal{M}_i|/n_i$ is smaller than $C(W)$.

Conversely, 
for any sequences of encoders $\phi_i: \mathcal{M}_i\to X^{n_i}$ and decoders $\psi_i: Y^{n_i}\to\mathcal{M}_i$,
where limit inferior of $\log |\mathcal{M}_i|/n_i$ is larger than $C(W)$, error probabilities tend to 1.
\end{theorem}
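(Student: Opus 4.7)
The plan is to make the quantity $C(W)$ explicit as $C(W) := \max_{P_X} I(X;Y)$, where $P_X$ is a probability distribution on $\mathcal{X}$ and $I(X;Y)$ denotes the mutual information between $X\sim P_X$ and $Y$ with conditional distribution $W(y\mid x)$. The theorem then splits cleanly into a direct part (existence of good codes) and a converse (impossibility of coding above capacity), and I would prove them by entirely different methods.

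For the direct part I would use Shannon's random coding argument. Fix a capacity-achieving input distribution $P_X^*$ and, for each blocklength $n$ and each message set $\mathcal{M}_n$ with $\log|\mathcal{M}_n|/n \le C(W)-\varepsilon$, draw the codewords $\phi_n(a)$, $a\in\mathcal{M}_n$, independently from the product distribution $\prod_{i=0}^{n-1} P_X^*(x_i)$. The decoder $\psi_n$ returns the unique $a$ such that $(\phi_n(a),y_0^{n-1})$ lies in the weakly jointly typical set $A^{(n)}_\varepsilon$ for $(X,Y)$, and an arbitrary symbol otherwise. By the asymptotic equipartition property the true pair fails to be typical with vanishing probability, and by independence of codewords the expected number of spurious jointly typical codewords is at most $|\mathcal{M}_n|\,2^{-n(I(X;Y)-3\varepsilon)}$, which tends to $0$. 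The expected error probability over the ensemble therefore tends to $0$, so at least one deterministic $(\phi_n,\psi_n)$ achieves the same bound; a standard expurgation step handles maximal as opposed to average error if required.

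For the converse I would start with Fano's inequality. Writing $A$ for a uniform message in $\mathcal{M}_n$ and $\hat A := \psi_n(Y_0^{n-1})$, Fano gives $H(A\mid\hat A) \le 1 + P_e\log|\mathcal{M}_n|$, and combining this with $\log|\mathcal{M}_n| = H(A) = I(A;\hat A) + H(A\mid\hat A)$, the data-processing inequality $I(A;\hat A)\le I(X_0^{n-1};Y_0^{n-1})$, and the single-letterization $I(X_0^{n-1};Y_0^{n-1})\le n\,C(W)$ for the memoryless product channel, yields
\begin{equation*}
P_e \;\ge\; 1 \;-\; \frac{n\,C(W) + 1}{\log|\mathcal{M}_n|}.
\end{equation*}
This is the weak converse: it shows only that $\liminf P_e > 0$ when the rate exceeds $C(W)$.

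The main obstacle is the step from the weak converse to the strong converse claimed in the statement, namely that $P_e \to 1$. For this I would invoke Wolfowitz's argument: restrict to constant-composition codes using the method of types, bound the size of the decoding region of each codeword by the cardinality of a conditional type class, and apply the resulting sphere-packing estimate to show that any code of rate $C(W)+\delta$ has error probability exceeding $1 - e^{-n\eta(\delta)}$ for some $\eta(\delta)>0$. Alternatively, the blowing-up lemma applied to the weak converse gives the same conclusion. Everything else in the proof is routine asymptotic analysis, but the strong converse genuinely requires one of these more refined combinatorial or concentration tools.
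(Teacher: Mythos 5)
You should first note that the paper offers no proof of this theorem at all: it is stated as background in the introduction and attributed to Shannon's 1948 paper and to Gallager's book, so there is no internal argument to compare yours against. Your sketch is the standard textbook route and is essentially correct. Taking $C(W)=\max_{P_X}I(X;Y)$, proving the direct part by random coding with joint-typicality decoding (plus expurgation for maximal error), and proving the converse via Fano's inequality, data processing, and single-letterization of $I(X_0^{n-1};Y_0^{n-1})\le nC(W)$ for the memoryless channel is exactly how this is done; and you are right to flag that the statement as written asserts the \emph{strong} converse ($P_e\to 1$ above capacity), which Fano alone does not deliver and which genuinely requires Wolfowitz's constant-composition/type-counting argument or the blowing-up lemma for a general DMC --- identifying that gap between the weak and strong converse is the one nontrivial observation here, and you got it. Two minor remarks: the quantity the thesis works with everywhere else is the symmetric capacity $I(W)$ under the uniform input distribution, which coincides with $C(W)$ only for symmetric channels, so your choice of the maximized mutual information is the correct one for this theorem; and the phrase ``limit superior of $\log|\mathcal{M}_i|/n_i$ is smaller than $C(W)$'' should be read as ``for every rate $R<C(W)$ there is such a sequence,'' which is what your construction provides. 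In short, your proposal is a complete and correct plan for a result the paper merely quotes.
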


The channel coding theorem only shows existence of sequences of encoders and decoders on which reliable efficient communication is possible.
One of the goals of coding theory is to find practical encoders and decoders
which achieve the best trade-off described in the channel coding theorem.

\section{Preview of Polar Codes}
Polar codes, introduced by Ar{\i}kan~\cite{5075875}, are the first provably capacity achieving codes for
any symmetric binary-input discrete memoryless channels (B-DMC) which have low complexity encoding and decoding algorithms.
Complexities of encoding and decoding are both $O(N\log N)$ where $N$ is the blocklength.
Polar codes are based on \textit{channel polarization} phenomenon.

Ar{\i}kan and Telatar showed that asymptotic error probability of polar codes whose coding rate is smaller than
capacity is $o(2^{-N^\beta})$ for any $\beta<1/2$ and $\omega(2^{-N^\beta})$ for any $\beta>1/2$~\cite{arikan2008rcp}.
Since error probabilities of the best codes decay exponentially in the blocklength~\cite{gallager1968information},
polar codes are not optimal in the asymptotic region.

In the original work of Ar{\i}kan, generator matrices of polar codes are constructed by
choosing rows of $G^{\otimes n}$, where $G=\begin{bmatrix}1&0\\1&1\end{bmatrix}$ and where ${}^{\otimes n}$ denotes the Kronecker power.
On the other hand, Korada, {\c S}a{\c s}o{\u g}lu, and Urbanke generalized polar codes
which are constructed from larger matrices instead of $G$~\cite{korada2009pcc}.
Further, they showed that asymptotic performance of polar codes is improved by using larger matrices.

Korada and Urbanke showed that polar codes also achieve symmetric rate-distortion trade-off as lossy source codes~\cite{korada2009pco}.
They also showed that polar codes achieve optimal rate of Wyner-Ziv and Gelfand-Pinsker problems.

\section{Contribution of the Thesis}
\subsection{Construction of polar codes}
In Ar{\i}kan's original work, complexity of construction of polar codes grows exponentially in the blocklength.
We show a novel construction method whose complexity is linear in the blocklength.
The construction method is based on \textit{density evolution}, which has been used for
calculation of the large blocklength limit of the bit error probability of low-density parity-check (LDPC) codes~\cite{RiU05/LTHC}.

\subsection{Generalization of polar codes}
Non-binary polar codes are considered.
When a set of input alphabets is a finite field, we obtain sufficient conditions for a matrix
on which capacity-achieving polar codes can be constructed for any DMC\@.
We also consider polar codes constructed from a non-linear mapping instead of a linear mapping.

\section{Organization of the Thesis}
In Chapter~\ref{chap:bpolar}, channel polarization phenomenon for B-DMC, introduced by Ar{\i}kan~\cite{5075875}, is considered.
In Chapter~\ref{chap:speed}, the speed of channel polarization, shown by Ar{\i}kan and Telatar~\cite{arikan2008rcp}, is considered.
In Chapter~\ref{chap:pcodes}, we define polar codes which are based on the channel polarization phenomenon~\cite{5075875}.
It is shown that complexities of encoding and decoding are
$O(N\log N)$ where $N$ is the blocklength.
We show a novel construction method whose complexity is linear in the blocklength~\cite{5205857} for symmetric B-DMC\@.
In Chapter~\ref{chap:qary}, channel polarization of $q$-ary channels is considered.
Sufficient conditions for channel polarization matrices and a simple example are shown.

\section{Notations and Useful Facts}
In the thesis, we use the following notations.
Let $x_0^{n-1}$ and $x_i^j$ denote a row vector $(x_0,\dots,x_{n-1})$ and its subvector $(x_i,\dots,x_j)$.
For $\mathcal{A}=(a_0,\dotsc,a_{m-1})\subseteq \{0,\dotsc,n-1\}$, $x_\mathcal{A}$ denotes a subvector $(x_{a_0},\dotsc,x_{a_{m-1}})$.
Let $\mathcal{F}^c$ denote the complement of a set $\mathcal{F}$, and $|\mathcal{F}|$ denote cardinality of $\mathcal{F}$.
Let $G_{ij}$ denote $(i,j)$ element of a matrix $G$.

Let $X$, $Y$ and $Z$ be random variables on a probability space $(\Omega,\, \mathcal{F},\, P)$
ranging on discrete sets $\mathcal{A}$, $\mathcal{B}$ and $\mathcal{C}$, respectively.
The mutual information between $X$ and $Y$ is defined as
\begin{equation*}
I(X; Y) := \sum_{x\in\mathcal{A}, y\in\mathcal{B}}P(X=x, Y=y) \log \frac{P(X=x, Y=y)}{P(X=x)P(Y=y)}.
\end{equation*}
Similarly, the mutual information between $X$ and $(Y,Z)$ is defined as
\begin{equation*}
I(X; YZ) := \sum_{x\in\mathcal{A}, y\in\mathcal{B}, z\in\mathcal{C}}P(X=x, Y=y, Z=z) \log \frac{P(X=x, Y=y, Z=z)}{P(X=x)P(Y=y, Z=z)}.
\end{equation*}
The conditional mutual information between $X$ and $Y$ given $Z$ is defined as
\begin{equation*}
I(X; Y\mid Z) := \sum_{x\in\mathcal{A}, y\in\mathcal{B}, z\in\mathcal{C}}P(X=x, Y=y, Z=z) \log
\frac{P(X=x, Y=y\mid Z=z)}{P(X=x\mid Z=z)P(Y=y\mid Z=z)}.
\end{equation*}
The most fundamental fact in the thesis, called \textit{the chain rule for mutual information}, is the following.
\begin{proposition}\label{prop:chain}\cite{gallager1968information}
\begin{equation*}
I(X;YZ) = I(X;Y) + I(X;Z\mid Y)
\end{equation*}
\end{proposition}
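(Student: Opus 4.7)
The plan is to begin from the definition of $I(X;YZ)$, rewrite the argument of the logarithm as the product of two factors---one matching the definition of $I(X;Y)$, the other matching the definition of $I(X;Z\mid Y)$---and then use linearity of the summation to separate the two contributions into the two terms on the right-hand side.

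The key algebraic identity is
\begin{equation*}
\frac{P(X=x, Y=y, Z=z)}{P(X=x)\, P(Y=y, Z=z)} = \frac{P(X=x, Y=y)}{P(X=x)\, P(Y=y)} \cdot \frac{P(X=x, Z=z \mid Y=y)}{P(X=x \mid Y=y)\, P(Z=z \mid Y=y)},
\end{equation*}
which one verifies by substituting $P(x,z\mid y) = P(x,y,z)/P(y)$, $P(x\mid y) = P(x,y)/P(y)$, and $P(z\mid y) = P(y,z)/P(y)$ on the right-hand side and cancelling the three copies of $P(Y=y)$.

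Taking logarithms of both sides and summing against $P(X=x, Y=y, Z=z)$ over all $(x,y,z)\in\mathcal{A}\times\mathcal{B}\times\mathcal{C}$ splits $I(X;YZ)$ into two sums. The second sum is precisely $I(X;Z\mid Y)$ by the definition given in the excerpt. For the first sum, the summand depends only on $x$ and $y$, so performing the sum over $z$ first and using the marginalization $\sum_{z\in\mathcal{C}} P(X=x, Y=y, Z=z) = P(X=x, Y=y)$ collapses it to $I(X;Y)$.

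Since the argument is essentially a substitution followed by a marginalization, there is no genuine obstacle. The one point requiring care is choosing the form of the product decomposition whose second factor matches the stated definition of $I(X; Z \mid Y)$ exactly, so that no further manipulation is needed after the sums are separated; any other equivalent factorization (for instance using $P(z\mid x,y)/P(z\mid y)$) would force an additional identification step at the end.
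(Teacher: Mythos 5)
Your proof is correct: the factorization you write down is easily verified, the second factor matches the paper's definition of $I(X;Z\mid Y)$ verbatim, and marginalizing over $z$ collapses the first sum to $I(X;Y)$ (with the usual convention that terms with $P(X=x,Y=y,Z=z)=0$ contribute nothing). The paper itself gives no proof of this proposition, deferring entirely to the cited reference, and your argument is the standard one found there, so there is nothing to compare beyond noting that you have supplied the omitted details.
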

The cutoff rate of $(X, Y)$ is defined as
\begin{equation*}
R_0(X; Y) := -\log \sum_{y\in\mathcal{B}}\left[\sum_{x\in\mathcal{A}} P(X=x)\sqrt{P(Y=y\mid X=x)}\right]^2.
\end{equation*}
Similarly, the conditional cutoff rate of $(X, Y)$ given $Z$ is defined as
\begin{equation*}
R_0(X; Y\mid Z) := -\log \sum_{y\in\mathcal{B},z\in\mathcal{C}}P(Z=z)
\left[\sum_{x\in\mathcal{A}} P(X=x\mid Z=z)\sqrt{P(Y=y\mid X=x, Z=z)}\right]^2.
\end{equation*}
In the thesis, the cutoff rate is used for bounding the mutual information by the following proposition.
\begin{proposition}\label{prop:cutoff}\cite{gallager1968information}
\begin{align*}
I(X; Y) &\ge R_0(X; Y)\\
I(X; Y\mid Z) &\ge R_0(X; Y\mid Z)
\end{align*}
\end{proposition}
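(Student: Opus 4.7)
The plan is to embed $R_0(X;Y)$ into Gallager's one-parameter family
\[
E_0(\rho, P_X) := -\log \sum_y \left[\sum_x P_X(x)\, P(Y=y \mid X=x)^{1/(1+\rho)}\right]^{1+\rho},
\]
which satisfies $E_0(0, P_X)=0$ by direct evaluation (the double sum collapses to $1$) and $E_0(1, P_X) = R_0(X;Y)$ by inspection. I will then argue that $E_0$ is a concave function of $\rho$ on $[0,\infty)$ with derivative $I(X;Y)$ at the origin. The tangent-line inequality at $\rho = 0$ yields $E_0(1,P_X) \le E_0(0,P_X) + 1\cdot E_0'(0,P_X)$, which is exactly $R_0(X;Y) \le I(X;Y)$.

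To see $E_0'(0,P_X) = I(X;Y)$, I would differentiate under the sum via the product rule (treating the outer exponent $1+\rho$ and the inner power $1/(1+\rho)$ separately), evaluate at $\rho=0$, and regroup the resulting terms as $-H(Y\mid X) + H(Y)$. The main obstacle is establishing concavity in $\rho$: I would handle this by computing $\partial^2 E_0/\partial \rho^2$ explicitly and recognizing the resulting expression as minus the variance of $\log(\cdot)$ against a suitably tilted distribution on $\mathcal{A}\times\mathcal{B}$; an alternative route is a direct Hölder-type inequality on the inner norm. This is the only substantive piece of the argument; everything else is routine.

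For the conditional inequality, I would reduce pointwise to the unconditional case. Observe that
\begin{equation*}
R_0(X;Y\mid Z) = -\log \sum_z P(Z=z)\, e^{-R_0(X;Y\mid Z=z)},
\end{equation*}
where $R_0(X;Y\mid Z=z)$ denotes the unconditional cutoff rate for $(X,Y)$ under the law conditioned on $Z=z$. Applying Jensen's inequality to the convex function $-\log$ gives
\begin{equation*}
R_0(X;Y\mid Z) \le \sum_z P(Z=z)\, R_0(X;Y\mid Z=z).
\end{equation*}
The already-established unconditional bound, applied at each fixed $z$, yields $R_0(X;Y\mid Z=z) \le I(X;Y\mid Z=z)$, and averaging over $z$ against $P(Z=z)$ produces $I(X;Y\mid Z)$ on the right-hand side, completing the proof.
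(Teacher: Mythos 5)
Your argument is correct, but it takes a genuinely different route from the one in the thesis. The thesis proves $I(X;Y)\ge R_0(X;Y)$ directly by two successive applications of Jensen's inequality: the concavity of $\log$ is first used to move the sum over $x$ (weighted by $P(X=x\mid Y=y)$) inside the logarithm, and then used again to move the sum over $y$ (weighted by $P(Y=y)$) inside; after simplification the right-hand side is exactly $R_0(X;Y)$. You instead embed $R_0(X;Y)=E_0(1,P_X)$ into Gallager's exponent family and deduce the bound from $E_0(0,P_X)=0$, $E_0'(0,P_X)=I(X;Y)$, and concavity of $E_0$ in $\rho$. This is a legitimate proof (it is essentially Gallager's Theorem 5.6.3, from the very reference the proposition cites), and it buys more than is needed here: it yields $E_0(\rho,P_X)\le \rho\, I(X;Y)$ for every $\rho\ge 0$, not just $\rho=1$. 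The price is that the two facts you defer --- the derivative formula at $\rho=0$ and the concavity in $\rho$ --- each require at least as much computation as the thesis's entire two-step Jensen argument, so for this proposition alone the direct route is the more economical one; your sketch of how to get concavity (second derivative as minus a variance under a tilted distribution, or a H\"older argument) is the standard and workable one, but it is left as a sketch. Your treatment of the conditional inequality is correct and in fact more explicit than the thesis's one-sentence remark that it is ``an immediate consequence'': the identity $R_0(X;Y\mid Z)=-\log\sum_z P(Z=z)\,\mathrm{e}^{-R_0(X;Y\mid Z=z)}$, followed by Jensen for $-\log$ and averaging of the per-$z$ unconditional bounds, is exactly the right way to make that remark precise.
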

\begin{proof}
The second inequality is an immediate consequence of the first inequality.
\begin{align*}
I(X; Y) &= \sum_{x\in\mathcal{A}, y\in\mathcal{B}}P(X=x, Y=y) \log \frac{P(X=x, Y=y)}{P(X=x)P(Y=y)}\\
&= -2\sum_{x\in\mathcal{A}, y\in\mathcal{B}}P(X=x, Y=y) \log \sqrt{\frac{P(X=x)P(Y=y)}{P(X=x, Y=y)}}\\
&\ge -2\sum_{y\in\mathcal{B}}P(Y=y) \log \sum_{x\in\mathcal{A}} P(X=x\mid Y=y)\sqrt{\frac{P(X=x)P(Y=y)}{P(X=x, Y=y)}}\\
&= -\sum_{y\in\mathcal{B}}P(Y=y) \log \left[\sum_{x\in\mathcal{A}} P(X=x\mid Y=y)\sqrt{\frac{P(X=x)P(Y=y)}{P(X=x, Y=y)}}\right]^2\\
&\ge - \log \sum_{y\in\mathcal{B}}P(Y=y)\left[\sum_{x\in\mathcal{A}} P(X=x\mid Y=y)\sqrt{\frac{P(X=x)P(Y=y)}{P(X=x, Y=y)}}\right]^2\\
&= -\log \sum_{y\in\mathcal{B}}\left[\sum_{x\in\mathcal{A}} P(X=x)\sqrt{P(Y=y\mid X=x)}\right]^2 = R_0(X;Y)
\end{align*}
The above inequalities are obtained from Jensen's inequality.
\end{proof}

\chapter{Channel Polarization of B-DMCs by Linear Kernel}\label{chap:bpolar}
\section{Introduction}
Ar{\i}kan introduced polar codes whose generator matrix is constructed
by choosing rows from $\begin{bmatrix}1&0\\1&1\end{bmatrix}^{\otimes n}$~\cite{5075875}.
Korada, {\c S}a{\c s}o{\u g}lu, and Urbanke generalized the result for an arbitrary full-rank matrix~\cite{korada2009pcc}.
Ar{\i}kan explained that polar codes are constructed on \textit{channel polarization} phenomenon.
This explanation is useful for understanding polar codes.
In this chapter, we consider the channel polarization phenomenon of B-DMC induced by an arbitrary linear mapping.

\section{Preliminaries}\label{sec:prel}
Let $\mathcal{X}$ and $\mathcal{Y}$ be sets of input alphabets and output alphabets.
In the thesis, we assume that $\mathcal{X}$ is a finite set and $\mathcal{Y}$ is at most a countable set.
A DMC is defined as a conditional probability distribution $W(y\mid x)$ over $\mathcal{Y}$ for all $x\in\mathcal{X}$.
We write $W:\mathcal{X}\to\mathcal{Y}$ to mean a DMC with
a set of input alphabets $\mathcal{X}$ and a set of output alphabets $\mathcal{Y}$.
In this chapter, we deal with B-DMC, i.e., $\mathcal{X}=\{0,1\}$
and assume that the base of logarithm is 2.
\begin{definition}
The symmetric capacity of a B-DMC $W:\mathcal{X}\to\mathcal{Y}$ is defined as
\begin{equation*}
I(W) := \sum_{x\in\mathcal{X}}\sum_{y\in\mathcal{Y}} \frac12 W(y\mid x)
\log\frac{W(y\mid x)}{\frac12 W(y\mid 0) + \frac12 W(y\mid 1)}.
\end{equation*}
Note that $I(W)\in[0,1]$.
\end{definition}

\begin{definition}
The Bhattacharyya parameter of a B-DMC $W$ is defined as
\begin{equation*}
Z(W) := \sum_{y\in\mathcal{Y}} \sqrt{W(y\mid 0)W(y\mid 1)}.
\end{equation*}
Note that $Z(W) \in [0,1]$.
\end{definition}
\begin{lemma}\cite{5075875}\label{lem:bIZ}
The symmetric capacity and the Bhattacharyya parameter satisfy the following relations.
\begin{align*}
I(W)+Z(W)&\ge 1\\
I(W)^2+Z(W)^2&\le 1
\end{align*}
\end{lemma}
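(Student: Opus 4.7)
The plan is to re-express both $I(W)$ and $Z(W)$ as expectations of simple functions of a single $[0,1]$-valued random variable, namely the posterior probability of $X=0$ given the channel output (under uniform input), and then reduce each of the two inequalities to an elementary pointwise bound on the binary entropy function $h_2(t):=-t\log t-(1-t)\log(1-t)$ combined with Jensen's inequality.

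First I would set $p(y):=\tfrac12\bigl(W(y\mid 0)+W(y\mid 1)\bigr)$ and, whenever $p(y)>0$, $t_y:=W(y\mid 0)/\bigl(W(y\mid 0)+W(y\mid 1)\bigr)$. A direct algebraic manipulation of the defining sums will yield
\begin{equation*}
I(W)=1-\sum_y p(y)\,h_2(t_y),\qquad Z(W)=2\sum_y p(y)\sqrt{t_y(1-t_y)},
\end{equation*}
so, writing $T$ for the random variable that takes value $t_y$ with probability $p(y)$, we get $I(W)=1-\mathbb{E}[h_2(T)]$ and $Z(W)=2\,\mathbb{E}[\sqrt{T(1-T)}]$. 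For the first inequality I would invoke the pointwise bound $h_2(t)\leq 2\sqrt{t(1-t)}$ on $[0,1]$; taking expectations gives $\mathbb{E}[h_2(T)]\leq Z(W)$, whence $I(W)\geq 1-Z(W)$. For the second inequality I would use the complementary pointwise bound $h_2(t)\geq 1-|1-2t|$, equivalently $1-h_2(t)\leq|1-2t|$, followed by two applications of Cauchy--Schwarz:
\begin{equation*}
I(W)^2\leq \mathbb{E}[|1-2T|]^2\leq \mathbb{E}[(1-2T)^2]=1-4\,\mathbb{E}[T(1-T)],
\end{equation*}
\begin{equation*}
Z(W)^2=4\,\mathbb{E}[\sqrt{T(1-T)}]^2\leq 4\,\mathbb{E}[T(1-T)].
\end{equation*}
Adding the two would produce $I(W)^2+Z(W)^2\leq 1$.

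The main obstacle is establishing the two one-dimensional inequalities for $h_2$. The lower bound $h_2(t)\geq 1-|1-2t|$ is essentially free: by symmetry it suffices to verify $h_2(t)\geq 2t$ on $[0,1/2]$, and because $g(t):=h_2(t)-2t$ is concave with $g(0)=g(1/2)=0$, concavity immediately yields $g\geq 0$ on $[0,1/2]$. The upper bound $h_2(t)\leq 2\sqrt{t(1-t)}$ is a classical entropy estimate; both sides are symmetric about $t=1/2$, both vanish at $t=0,1$ and both equal $1$ at $t=1/2$, and the inequality is verified by standard one-variable calculus (for instance by analyzing the sign of $\tfrac{d}{dt}\bigl(2\sqrt{t(1-t)}-h_2(t)\bigr)$ on $[0,1/2]$). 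Once these two scalar facts are in hand, the claimed inequalities follow at once from the reformulation above.
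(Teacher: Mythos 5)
The thesis states this lemma without proof, deferring to Ar{\i}kan's paper, so there is no in-paper argument to compare against; judged on its own terms, your proof is correct and complete. The reformulation $I(W)=1-\mathbb{E}[h_2(T)]$ and $Z(W)=2\,\mathbb{E}[\sqrt{T(1-T)}]$, with $T$ the posterior probability of input $0$ given the output under uniform input, is exact; the two scalar bounds $1-|1-2t|\le h_2(t)\le 2\sqrt{t(1-t)}$ (base-$2$ entropy) are both true, the lower one by the chord argument you give and the upper one being a classical entropy estimate (it follows, e.g., from Tops{\o}e-type bounds, and is tight to second order at $t=1/2$ where both sides equal $1-2\varepsilon^2+O(\varepsilon^4)$ versus $1-(2/\ln 2)\varepsilon^2+O(\varepsilon^4)$); and the two Jensen/Cauchy--Schwarz steps combining $I(W)^2\le 1-4\,\mathbb{E}[T(1-T)]$ with $Z(W)^2\le 4\,\mathbb{E}[T(1-T)]$ close the second inequality cleanly. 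Your route is genuinely different from the cited source: Ar{\i}kan obtains $I(W)\le\sqrt{1-Z(W)^2}$ through an auxiliary-channel/variational-distance argument, and his companion lower bound is $I(W)\ge\log_2\frac{2}{1+Z(W)}$, which (since $\log_2(1+Z)\ge Z$ on $[0,1]$) does \emph{not} by itself imply $I(W)+Z(W)\ge 1$; your pointwise bound $h_2(t)\le 2\sqrt{t(1-t)}$ is therefore doing real work and delivers the stated inequality directly. The only step you assert rather than carry out is the calculus verification of that upper bound on $h_2$; since it is a standard one-variable fact, this is a presentational omission rather than a gap.
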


\section{Channel Polarization}\label{sec:polarization}
We consider recursive channel transform using a full-rank square matrix $G$ on $\mathbb{F}_2$.
In~\cite{5075875}, Ar{\i}kan chose 
\begin{equation}
G=\begin{bmatrix}1&0\\1&1\end{bmatrix}.\label{eq:2x2}
\end{equation}
In this chapter, following Korada, {\c S}a{\c s}o{\u g}lu, and Urbanke~\cite{korada2009pcc},
we assume that $G$ is an arbitrary full-rank square matrix.
Let $\ell$ be the size of $G$.
Channel transform procedure is defined as follows.
\begin{definition}
\begin{align*}
W^\ell(y_0^{\ell-1}\mid x_0^{\ell-1}) &:=\prod_{i=0}^{\ell-1} W(y_i\mid x_i)\\
W^{(i)}(y_0^{\ell-1},u_0^{i-1}\mid u_i) &:= \frac1{2^{\ell-1}}
\sum_{u_{i+1}^{\ell-1}} W^{\ell}(y_0^{\ell-1}\mid u_0^{\ell-1}G).
\end{align*}
\end{definition}
\noindent
In the above definition, $W^{(i)}$ is called a \textit{subchannel} of $W$.
Let $U_0^{\ell-1}$, $X_0^{\ell-1}$ and $Y_0^{\ell-1}$ denote random variables taking values on
$\mathcal{X}^\ell$, $\mathcal{X}^\ell$ and $\mathcal{Y}^\ell$, respectively,
and obeying distribution
\begin{equation*}
P(U_0^{\ell-1} = u_0^{\ell-1},~X_0^{\ell-1} = x_0^{\ell-1},~Y_0^{\ell-1} = y_0^{\ell-1}) = \frac1{2^\ell} W^\ell(y_0^{\ell-1}\mid u_0^{\ell-1} G)
\mathbb{I}\left\{x_0^{\ell-1} V = u_0^{\ell-1} \right\}
\end{equation*}
where $V$ is an $\ell\times\ell$ full-rank upper triangle matrix.
Since there exists a one-to-one correspondence between $U_0^i$ and $X_0^i$ for all $i\in\{0,\dotsc,\ell-1\}$,
statistical properties of $W^{(i)}$ are invariant under an operation $G\to VG$.
Further, a permutation of columns of $G$ does not change statistical properties of $W^{(i)}$.
Since any full-rank matrix can be decomposed as $VLP$ where $V$, $L$, and $P$ are
upper triangle, lower triangle, and permutation matrices, respectively,
without loss of generality we assume that $G$ is a lower triangle matrix.

Assume that $\{B_i\}_{i\in\mathbb{N}}$ is a sequence of independent uniform random variables taking values on $\{0,\dots,\ell-1\}$.
Let $I_n := I(W^{(B_1)\dotsm(B_n)})$.
Channel polarization phenomenon is described in the following theorem.
\begin{theorem}\cite{5075875}, \cite{korada2009pcc}\label{thm:polarization}
If $G$ is not diagonal,
$I_n\to I_\infty$ almost surely, where $I_\infty$ satisfies
\begin{equation*}
I_\infty = \begin{cases}
0, &\text{\rm with probability } 1-I(W)\\
1, &\text{\rm with probability } I(W).
\end{cases}
\end{equation*}
\end{theorem}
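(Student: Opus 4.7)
The plan is to establish channel polarization via a martingale argument on the capacity sequence $\{I_n\}$. First I would observe that, by the chain rule (Proposition~\ref{prop:chain}) applied to $I(U_0^{\ell-1}; Y_0^{\ell-1})$ with $U_0^{\ell-1}$ uniform (so that $X_0^{\ell-1}=U_0^{\ell-1}G$ is also uniform since $G$ has full rank), one obtains the conservation identity
\begin{equation*}
\sum_{i=0}^{\ell-1} I(W^{(i)}) = I(U_0^{\ell-1}; Y_0^{\ell-1}) = \ell\, I(W).
\end{equation*}
Since $B_{n+1}$ is uniform on $\{0,\dots,\ell-1\}$ and independent of $B_1,\dots,B_n$, this yields $E[I_{n+1}\mid B_1,\dots,B_n]=I_n$. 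Because $I_n\in[0,1]$, the sequence $\{I_n\}$ is a bounded martingale, and therefore converges almost surely (and in $L^1$) to some limit $I_\infty$.

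Next I would identify the limit. By the $L^1$ convergence, $E[|I_{n+1}-I_n|]\to 0$, so, with positive probability on every sample path along which $\{I_n\}$ converges, each of the $\ell$ coordinate transforms must asymptotically preserve the capacity of the current channel. I would then invoke Lemma~\ref{lem:bIZ} together with the Bhattacharyya parameter $Z(W)$: for the basic $2\times 2$ kernel $\begin{bmatrix}1&0\\1&1\end{bmatrix}$ one has the well-known bounds $Z(W^{(0)})\le 2Z(W)-Z(W)^2$ and $Z(W^{(1)})=Z(W)^2$, so the product $Z(W^{(0)})Z(W^{(1)})$ is much smaller than $Z(W)^2$ unless $Z(W)\in\{0,1\}$, i.e.\ $I(W)\in\{1,0\}$. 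Since $G$ is assumed not diagonal, we may write $G=VLP$ as in Section~\ref{sec:polarization} with $L$ strictly lower triangular, and then some pair of rows of $L$ realizes exactly this elementary kernel on a suitable reduced channel. This gives a positive polarization gap of the form: for every $\epsilon>0$ there is $\delta=\delta(\epsilon)>0$ such that $I(W)\in(\epsilon,1-\epsilon)$ forces $\max_i |I(W^{(i)})-I(W)|\ge\delta$.

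Combining the two ingredients, the martingale can only converge on paths along which $I_n$ eventually stays outside every fixed interval $(\epsilon,1-\epsilon)$. By monotone exhaustion in $\epsilon$ this forces $I_\infty\in\{0,1\}$ almost surely. Finally, from $E[I_\infty]=E[I_0]=I(W)$ (bounded convergence) together with $I_\infty\in\{0,1\}$, we deduce $P(I_\infty=1)=I(W)$ and $P(I_\infty=0)=1-I(W)$, which is the stated conclusion.

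The routine step is the martingale setup and the computation of $E[I_\infty]$; the main obstacle is establishing the uniform polarization gap $\delta(\epsilon)$ for a general non-diagonal $G$, rather than only for the $2\times 2$ Ar{\i}kan kernel. The essential trick is the reduction via $G\mapsto VG$ and column permutations (which leave the subchannel statistics invariant) to a lower-triangular form, followed by isolating an elementary $2\times 2$ combining block within $G$ and transferring the Bhattacharyya-based gap estimate from the binary case, using Lemma~\ref{lem:bIZ} to translate between $Z$ and $I$. Verifying that this transfer is uniform in the underlying channel and in the location of the active pair of rows is the technical crux of the argument.
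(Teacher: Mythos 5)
Your martingale setup (chain-rule conservation, bounded martingale, a.s.\ convergence, and the identification $P(I_\infty=1)=E[I_\infty]=I(W)$ once $I_\infty\in\{0,1\}$ is known) matches the paper's Lemma~\ref{lem:asc} and the closing step of its proof, and is fine. The genuine gap is exactly where you flag it: the ``uniform polarization gap'' $\delta(\epsilon)$ is asserted, not proved, and the two devices you propose for it do not work as stated. First, the reduction ``some pair of rows of $L$ realizes exactly this elementary kernel'' is not available for a general non-diagonal $G$ (also, $L$ cannot be \emph{strictly} lower triangular, since $G$ is full rank); what one actually gets, and what the paper uses, is weaker but sufficient: taking $k$ to be the largest row index of the (WLOG lower-triangular) $G$ with Hamming weight at least $2$, the subchannel $W^{(k)}$ is upgraded with respect to the two-observation repetition channel $W'(y_1,y_2\mid u)=W(y_1\mid u)W(y_2\mid u)$, so $I(W)\le I(W')\le I(W^{(k)})$, and the vanishing of the martingale increments (along the a.s.\ infinitely many times the index $k$ is drawn) forces $I(W')-I(W)=I(Y_2;U\mid Y_1)\to 0$. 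Second, your $Z$-based gap does not transfer: knowing $I(W^{(i)})-I(W)\to 0$ does not give $Z(W^{(i)})-Z(W)\to 0$, because Lemma~\ref{lem:bIZ} is a two-sided inequality rather than a bijection between $I$ and $Z$; so the comparison of $Z(W^{(0)})Z(W^{(1)})$ with $Z(W)^2$ cannot be brought to bear on the $I$-martingale. The classical argument of that flavor runs a separate supermartingale argument on $Z_n$ itself, which you have not set up.

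What the paper does at this point is precisely the missing ingredient: it lower-bounds $I(Y_2;U\mid Y_1)$ by the conditional cutoff rate (Proposition~\ref{prop:cutoff}) and computes
\begin{equation*}
I(Y_2;U\mid Y_1)\ \ge\ -\log\Bigl[1-\tfrac12 Z(W_n)^2\bigl(1-Z(W_n)\bigr)\Bigr]\ \ge\ -\log\Bigl[1-\tfrac12\bigl(1-I(W_n)\bigr)^2\bigl(1-\sqrt{1-I(W_n)^2}\bigr)\Bigr],
\end{equation*}
the last step by Lemma~\ref{lem:bIZ}. The right-hand side is bounded away from $0$ whenever $I(W_n)\in(\epsilon,1-\epsilon)$, which is the quantitative gap $\delta(\epsilon)$ your outline requires; combined with your martingale argument it yields $I_\infty\in\{0,1\}$ almost surely. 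Until you supply an estimate of this kind (or a correctly generalized $Z$-supermartingale argument), the proof is incomplete.
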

\noindent
Theorem~\ref{thm:polarization} says that $\ell^n$ subchannels $\{W^{(b_1)\dotsm(b_n)}\}_{(b_1,\dotsc,b_n)\in\{0,\dotsc,\ell-1\}^n}$
are polarized between noiseless channels and pure noisy channels for sufficiently large $n$.
The first part of Theorem~\ref{thm:polarization} is proven by the martingale convergence theorem
without using the assumption that $G$ is not diagonal.

\begin{lemma}\label{lem:asc}
$\lim_{n\to\infty} I_n$ exists almost surely.
\end{lemma}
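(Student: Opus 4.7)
The plan is to show that $\{I_n\}_{n\in\mathbb{N}}$ is a bounded martingale with respect to the natural filtration $\mathcal{F}_n := \sigma(B_1,\dotsc,B_n)$, and then invoke Doob's martingale convergence theorem. Since $I(W^{(\cdot)}) \in [0,1]$ by definition, boundedness is automatic, so the only real task is to establish the martingale property.

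The key ingredient I would need is \emph{conservation of symmetric capacity under the one-step channel transform}, namely
\begin{equation*}
\sum_{i=0}^{\ell-1} I(W^{(i)}) = \ell\, I(W).
\end{equation*}
To prove this, I would unwrap the definitions of $U_0^{\ell-1}, X_0^{\ell-1}, Y_0^{\ell-1}$: because $G$ is full-rank, the map $u_0^{\ell-1} \mapsto x_0^{\ell-1}=u_0^{\ell-1}G$ (composed with $V$) is a bijection, and under the uniform input distribution $U_0^{\ell-1}$ and $X_0^{\ell-1}$ are both uniform on $\mathcal{X}^\ell$. Hence $I(U_0^{\ell-1}; Y_0^{\ell-1}) = I(X_0^{\ell-1}; Y_0^{\ell-1}) = \ell\,I(W)$ by the memoryless property. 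On the other hand, the chain rule for mutual information (Proposition~\ref{prop:chain}) applied iteratively gives
\begin{equation*}
I(U_0^{\ell-1}; Y_0^{\ell-1}) = \sum_{i=0}^{\ell-1} I(U_i; Y_0^{\ell-1} \mid U_0^{i-1}) = \sum_{i=0}^{\ell-1} I(W^{(i)}),
\end{equation*}
where the last identity follows by checking that the joint distribution of $(U_i, Y_0^{\ell-1}, U_0^{i-1})$ realises exactly the channel $W^{(i)}$ with uniform input.

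Given this identity, the martingale property is immediate. Since the transform that yields $W^{(B_1)\dotsm(B_n)(B_{n+1})}$ from $W^{(B_1)\dotsm(B_n)}$ depends only on $B_{n+1}$, which is independent of $\mathcal{F}_n$ and uniform on $\{0,\dotsc,\ell-1\}$,
\begin{equation*}
\mathbb{E}[I_{n+1}\mid \mathcal{F}_n] = \frac{1}{\ell}\sum_{i=0}^{\ell-1} I\!\left(\left(W^{(B_1)\dotsm(B_n)}\right)^{(i)}\right) = I\!\left(W^{(B_1)\dotsm(B_n)}\right) = I_n.
\end{equation*}
Thus $\{I_n\}$ is a martingale taking values in $[0,1]$, and Doob's convergence theorem yields almost sure convergence.

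I expect the main (and really only) obstacle to be the careful verification of the capacity-conservation identity, in particular the bookkeeping showing that $I(U_i; Y_0^{\ell-1} \mid U_0^{i-1})$ coincides with $I(W^{(i)})$ under the uniform prior; everything after that is a one-line appeal to a standard theorem.
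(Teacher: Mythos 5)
Your proposal is correct and follows essentially the same route as the paper: the chain rule for mutual information gives the conservation identity $\sum_{i=0}^{\ell-1} I(W^{(i)}) = \ell I(W)$, from which $I_n$ is a bounded martingale, and the martingale convergence theorem concludes. The extra details you flag (the bijection $u_0^{\ell-1}\mapsto u_0^{\ell-1}G$ under the uniform prior, and the identification $I(U_i; Y_0^{\ell-1}\mid U_0^{i-1}) = I(U_i; Y_0^{\ell-1}, U_0^{i-1}) = I(W^{(i)})$ using independence of the $U_i$) are exactly the steps the paper compresses into one displayed line.
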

\begin{proof}
Let $U_0^{\ell-1}$ and $Y_0^{\ell-1}$ denote random variables taking values on $\mathcal{X}^\ell$ and $\mathcal{Y}^\ell$, respectively,
and obeying the distribution
\begin{equation*} 
P(U_0^{\ell-1} = u_0^{\ell-1}, Y_0^{\ell-1} = y_0^{\ell-1}) = \frac1{2^\ell} W^\ell(y_0^{\ell-1}\mid u_0^{\ell-1} G).
\end{equation*}
From the chain rule for mutual information, shown in Proposition~\ref{prop:chain}, one obtains
\begin{equation*}
\ell I(W) = I(U_0^{\ell-1}; Y_0^{\ell-1}) = \sum_{i=0}^{\ell-1} I(U_i; Y_0^{\ell-1} \mid U_0^{i-1})
= \sum_{i=0}^{\ell-1} I(U_i; Y_0^{\ell-1}, U_0^{i-1})
= \sum_{i=0}^{\ell-1} I(W^{(i)}).
\end{equation*}
Hence, $I_n$ is a bounded martingale.
From the martingale convergence theorem,
$\lim_{n\to\infty} I_n$ exists almost surely~\cite{billingsley1995probability}.
\end{proof}

\begin{proof}[proof of Theorem~\ref{thm:polarization}]
Let $k$ denote the largest number where Hamming weight of $k$-th row of $G$ is larger than 1.
Hence, 
\begin{equation*}
 W^{(k)}(y_0^{\ell-1}, u_0^{k-1}\mid u_k)
= \frac1{2^{\ell-1}} \prod_{j\in S_0} W(y_j\mid x_j)\prod_{j\in S_1} W(y_j\mid u_k + x_j)
\prod_{j=k+1}^{\ell-1} \left(W(y_j\mid 0) + W(y_j\mid 1)\right)
\end{equation*}
where $S_0 := \{i \in \{0,\dotsc,k\}\mid G_{ki} = 0\}$,
$S_1 := \{i \in \{0,\dotsc,k\}\mid G_{ki} = 1\}$,
 and $x_j$ is $j$-th element of $(u_0^{k-1},0_k^{\ell-1})G$.
Let 
\begin{equation*}
W^{(k)'}(y_i, y_k\mid u_k) := W(y_i \mid u_k) W(y_k \mid u_k) 
\end{equation*}
where $i\in S_0$.

From Lemma~\ref{lem:asc},
\begin{equation*}
\lim_{n\to\infty} |I(W_{n+1})-I(W_n)|=0, \hspace{2em} \text{with probability 1.}
\end{equation*}
Hence,
\begin{equation}\label{eq:ctz}
\lim_{n\to\infty} I(W_n^{(k)'})-I(W_n)=0, \hspace{2em} \text{with probability 1.}
\end{equation}
Let $(\Omega = \mathcal{X}\times\mathcal{Y}^2,~2^\Omega,~P)$ denote a probability space where
\begin{equation*}
P((u,y_1,y_2)) := \frac12 W_n(y_1 \mid u) W_n(y_2 \mid u) 
\end{equation*}
for $(u,y_1,y_2)\in\Omega$, and $(U, Y_1, Y_2)$ denote random variables obeying the distribution $P$.
From~\eqref{eq:ctz},
$I(Y_1, Y_2; U) - I(Y_1; U) = I(Y_2; U\mid Y_1)\to 0$ for all $x\in\mathcal{X}$.
Since mutual information is lower bounded by cutoff rate as shown in Proposition~\ref{prop:cutoff}, one obtains
\begin{align*}
I(Y_2; U\mid Y_1) &\ge -\log\sum_{y_1\in\mathcal{Y}_n,y_2\in\mathcal{Y}_n}P(Y_1=y_1)
\left(\sum_{u\in\mathcal{X}} P(U=u\mid Y_1=y_1)\sqrt{P(Y_2=y_2\mid U=u, Y_1=y_1)}\right)^2\\
&=-\log \sum_{y_1\in\mathcal{Y}_n} P(Y_1=y_1)\left[1-2P(U=0\mid Y_1=y_1)P(U=1\mid Y_1=y_1)(1-Z(W_n))\right]\\
&=-\log \left[1-2\sum_{y_1\in\mathcal{Y}_n} P(Y_1=y_1)\left(\sqrt{P(U=0\mid Y_1=y_1)P(U=1\mid Y_1=y_1)}\right)^2(1-Z(W_n))\right]\\
&\ge-\log \left[1-2\left(\sum_{y_1\in\mathcal{Y}_n} P(Y_1=y_1)\sqrt{P(U=0\mid Y_1=y_1)P(U=1\mid Y_1=y_1)}\right)^2(1-Z(W_n))\right]\\
&=-\log \left[1-2\left(\frac12 Z(W_n)\right)^2(1-Z(W_n))\right]\\
&=-\log \left[1-\frac1{2}Z(W_n)^2(1-Z(W_n))\right]\\
&\ge-\log \left[1-\frac1{2}(1-I(W_n))^2\left(1-\sqrt{1-I(W_n)^2}\right)\right].
\end{align*}
The last inequality is obtained from Lemma~\ref{lem:bIZ}.
Since the left-hand side of the above inequality tends to 0 with probability 1,
we conclude $I_\infty \in \{0,1\}$ with probability $1$.
Since $I_n$ is a martingale, $I_\infty = 1$ with probability $I(W)$.
\end{proof}

\chapter{Speed of Polarization}\label{chap:speed}
\section{Introduction}
In this chapter, we consider how fast $W_n$ are polarized between noiseless channel and pure noisy channel.
Instead of $I(W_n)$, we evaluate Bhattacharyya parameter $Z(W_n)$ which has the relation with $I(W_n)$ as shown in Lemma~\ref{lem:bIZ}.
Let $Z_n := Z(W^{(B_1)\dotsm(B_n)})$. From Theorem~\ref{thm:polarization} and Lemma~\ref{lem:bIZ},
$Z_n\to Z_\infty$ almost surely where $Z_\infty$ satisfies $Z_\infty=0$ with probability $I(W)$, and $Z_\infty=1$ with probability $1-I(W)$.
Hence, for any $\epsilon\in(0,1)$
\begin{equation*}
\lim_{n\to\infty} P(Z_n < \epsilon) = I(W).
\end{equation*}
Ar{\i}kan and Telatar showed a stronger result when $G$ is the $2\times 2$ matrix~\eqref{eq:2x2} as follows~\cite{arikan2008rcp}, \cite{5205856}.
\begin{proposition}
For any $\beta<1/2$,
\begin{equation*}
\lim_{n\to\infty} P(Z_n < 2^{-2^{\beta n}}) = I(W).
\end{equation*}
For any $\beta>1/2$,
\begin{equation*}
\lim_{n\to\infty} P(Z_n < 2^{-2^{\beta n}}) = 0.
\end{equation*}
\end{proposition}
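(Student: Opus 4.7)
The plan is to analyze the random process $(Z_n)$ directly using the Bhattacharyya recursions for Ar{\i}kan's $2\times 2$ kernel: $Z_{n+1}=Z_n^2$ when $B_{n+1}=1$ and $Z_n \le Z_{n+1} \le 2Z_n-Z_n^2$ when $B_{n+1}=0$. Let $S_n := B_1 + \cdots + B_n$ count squaring steps; since $S_n$ is a sum of i.i.d.\ Bernoulli$(1/2)$ random variables, its concentration about the mean $n/2$ drives the separation between $\beta<1/2$ and $\beta>1/2$.

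For the converse direction ($\beta > 1/2$), I will use the lower bound $Z_{n+1}\ge Z_n$ at the doubling step together with $Z_{n+1}=Z_n^2$ at the squaring step. Iteration then gives $Z_n \ge Z_0^{2^{S_n}}$, so that $\{Z_n < 2^{-2^{\beta n}}\}$ forces $S_n > \beta n - \log\log(1/Z_0)$. Since $\beta > 1/2 = E[S_n]/n$, Chernoff's large-deviation bound yields $P(S_n > \beta n - O(1))\to 0$ exponentially fast, which gives the second claim.

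For the direct direction ($\beta < 1/2$), I will fix $\beta' \in (\beta, 1/2)$ and $\epsilon > 0$, and proceed in three steps. Step~1: by Theorem~\ref{thm:polarization} together with the fact that $Z_n\to Z_\infty \in \{0,1\}$ almost surely with $P(Z_\infty=0) = I(W)$, for any threshold $\zeta \in (0,1)$ there exists a deterministic $m_0$ such that $P(Z_n \le \zeta~\text{for all}~n \ge m_0) \ge I(W)-\epsilon$. Step~2: introduce the ``double-log'' Lyapunov function $V_n := \log(-\log Z_n)$, whose usefulness stems from the identity $\{V_n \ge \beta n\} = \{Z_n \le 2^{-2^{\beta n}}\}$. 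A squaring step increments $V_n$ by exactly $1$, while a doubling step satisfies $V_{n+1}\ge V_n + \log(1 - 2^{-V_n})$; on the event of Step~1 one has $V_n \ge K:=\log\log(1/\zeta)$ throughout, so the per-doubling decrease is bounded by a constant $\delta(K)\to 0$ as $K\to\infty$. Step~3: choose $\zeta$ small enough (equivalently $K$ large) that $\beta' - \delta(K)(1-\beta') > \beta$; by Chernoff applied to $S_n - S_{m_0}$, the event $\{S_n-S_{m_0} \ge \beta'(n-m_0)\}$ has probability tending to $1$. Telescoping the per-step bounds on the intersection gives $V_n \ge V_{m_0} + \beta'(n-m_0) - \delta(K)(1-\beta')(n-m_0) \ge \beta n$ for $n$ large, hence $Z_n \le 2^{-2^{\beta n}}$, with total probability at least $I(W) - 2\epsilon$.

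The hardest part will be maintaining $V_n \ge K$ uniformly on $[m_0, n]$: since the doubling-step decrease is small only when $V_n$ is itself large, a single excursion of $V_n$ below $K$ would break the telescoping bound. The strengthened Step~1 (requiring $Z_n \le \zeta$ for \emph{all} $n \ge m_0$, not just $Z_{m_0}\le\zeta$) is precisely what avoids this, since $Z_n \le \zeta$ is equivalent to $V_n \ge K$. This uniform-in-$n$ strengthening is available because $Z_n$ converges almost surely to its limit, at the price of only an arbitrarily small loss of probability; an alternative would be a stopping-time/restart argument each time $V_n$ threatens to cross $K$, which I expect to be more cumbersome to execute.
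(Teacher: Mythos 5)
Your proposal is correct and follows essentially the same route as the paper, which proves this proposition as the special case $E(G)=1/2$ of Theorem~\ref{thm:rop} via Propositions~\ref{prop:direct} and~\ref{prop:conv}: your double-logarithm process $V_n$ with the per-doubling-step loss $\delta(K)$ is exactly the paper's $L_{i+1}=\log(S_i-\epsilon)+L_i$ device, your uniform threshold event from almost-sure convergence is the paper's $\mathcal{T}_m^\infty(\cdot)$, and Chernoff plays the role of the law of large numbers. The only real difference is that your converse is cleaner, since for the Ar{\i}kan kernel the recursion $Z_{n+1}\ge Z_n^{S_n}$ holds with constant $c_0=1$, so the pathwise bound $Z_n\ge Z_0^{2^{S_n}}$ replaces the paper's conditioning argument.
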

\noindent
Korada, {\c S}a{\c s}o{\u g}lu and Urbanke generalized the above result to general matrices~\cite{korada2009pcc}.
Further, Tanaka and Mori showed a more detailed speed of polarization~\cite{tanaka2010rre}.

\section{Preliminaries}
\begin{definition}
Partial distance $D^{[i]}$ of $G$ is defined as
\begin{equation*}
D^{[i]}:=\min_{v_{i+1}^{\ell-1}, w_{i+1}^{\ell-1}} d((0_0^{i-1}, 0, v_{i+1}^{\ell-1})G,\, (0_0^{i-1}, 1, w_{i+1}^{\ell-1})G)
\end{equation*}
where $d(a, b)$ denotes the Hamming distance between $a\in\mathcal{X}^\ell$ and $b\in\mathcal{X}^\ell$,
and where $0_0^{i-1}$ denotes the all-zero vector of length $i$.
\end{definition}
\noindent
Partial distance plays a central role in evaluation of speed of the polarization phenomenon.
\begin{lemma}\cite{korada2009pcc}\label{lem:bZD}
\begin{equation*}
Z(W)^{D^{[i]}} \le Z(W^{(i)})\le 2^{\ell-i}Z(W)^{D^{[i]}}.
\end{equation*}
\end{lemma}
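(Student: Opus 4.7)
The plan is to expand $Z(W^{(i)})$ via the definition of the subchannel, reducing everything to expressions of the form $\prod_j \sqrt{W(y_j\mid a_j)W(y_j\mid b_j)}$. Summing such an expression over $y_0^{\ell-1}$ factorizes and, using $\sum_y W(y\mid x)=1$ and $\sum_y \sqrt{W(y\mid 0)W(y\mid 1)} = Z(W)$, yields exactly $Z(W)^{d(a,b)}$. Writing $a(u_0^{i-1},v) := (u_0^{i-1},0,v)G$ and $b(u_0^{i-1},w) := (u_0^{i-1},1,w)G$, and setting $P_v = \prod_j W(y_j\mid a_j(u_0^{i-1},v))$, $Q_w = \prod_j W(y_j\mid b_j(u_0^{i-1},w))$, the definition of $W^{(i)}$ gives
\begin{equation*}
Z(W^{(i)}) = \frac{1}{2^{\ell-1}}\sum_{y_0^{\ell-1},u_0^{i-1}}\sqrt{\Bigl(\sum_v P_v\Bigr)\Bigl(\sum_w Q_w\Bigr)}.
\end{equation*}

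For the upper bound, I would apply subadditivity of the square root, $\sqrt{\sum_{v,w} P_v Q_w} \le \sum_{v,w}\sqrt{P_v Q_w}$, interchange summations so that the sum over $y$ sits inside, and use the factorization identity to replace $\sum_y \prod_j \sqrt{W(y_j\mid a_j)W(y_j\mid b_j)}$ by $Z(W)^{d(a(u_0^{i-1},v),\,b(u_0^{i-1},w))}$. Since $d(a,b)\ge D^{[i]}$ and $Z(W)\in[0,1]$, each term is bounded by $Z(W)^{D^{[i]}}$. Counting the triples $(u_0^{i-1},v,w)$ yields $2^{i}\cdot 2^{\ell-1-i}\cdot 2^{\ell-1-i}=2^{2\ell-2-i}$, and dividing by the $2^{\ell-1}$ in front produces the constant $2^{\ell-1-i}\le 2^{\ell-i}$.

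For the lower bound, pick $v^\star,w^\star$ achieving the partial distance and set $\delta := v^\star + w^\star$. The crucial algebraic observation, which uses $\mathbb{F}_2$-linearity of $G$, is that $a(u_0^{i-1},v) - b(u_0^{i-1},v+\delta) = (0_0^{i-1},1,\delta)G$ is independent of both $u_0^{i-1}$ and $v$ and has Hamming weight exactly $D^{[i]}$. Apply Cauchy--Schwarz in the form
\begin{equation*}
\sqrt{\Bigl(\sum_v P_v\Bigr)\Bigl(\sum_v Q_{v+\delta}\Bigr)} \;\ge\; \sum_v \sqrt{P_v\,Q_{v+\delta}},
\end{equation*}
then perform the sum over $y$ inside to turn each term into $Z(W)^{D^{[i]}}$. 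The number of pairs $(u_0^{i-1},v)$ is $2^{i}\cdot 2^{\ell-1-i}=2^{\ell-1}$, which exactly cancels the $2^{\ell-1}$ in the denominator, leaving $Z(W^{(i)})\ge Z(W)^{D^{[i]}}$.

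The main obstacle is the lower bound: a naive single-term bound $\sqrt{AB}\ge\sqrt{P_{v^\star}Q_{w^\star}}$ only gives an extra factor of $2^{i-\ell+1}$ and is insufficient. The right move is to pair indices via the bijection $v\mapsto v+\delta$, which keeps the pairwise distance uniformly equal to $D^{[i]}$ across all $2^{\ell-1}$ choices of $(u_0^{i-1},v)$ and thus exactly absorbs the normalizing factor. The upper bound is mechanical once one is careful with the factors of two.
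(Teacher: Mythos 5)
Your proof is correct. Note that the thesis itself does not prove Lemma~\ref{lem:bZD}; it only cites Korada, {\c S}a{\c s}o{\u g}lu and Urbanke. The closest in-text argument is the proof of the $q$-ary analogue, Lemma~\ref{lem:ZD}, and your upper bound coincides with that proof: subadditivity of the square root, factorization of the sum over $y_0^{\ell-1}$ into per-coordinate Bhattacharyya factors, $d\ge D^{[i]}$, and the count $2^{2\ell-2-i}/2^{\ell-1}=2^{\ell-1-i}$ (in fact slightly sharper than the stated $2^{\ell-i}$). Your lower bound, however, is genuinely different from, and stronger than, what the thesis does in the $q$-ary case: there, Jensen's inequality is applied and effectively only one pair $(v^\star,w^\star)$ is retained, which is why Lemma~\ref{lem:ZD} carries the loss factor $q^{-2(\ell-1-i)}$. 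Your coset-pairing bijection $v\mapsto v+\delta$ with $\delta=v^\star+w^\star$ exploits $\mathbb{F}_2$-linearity to make every one of the $2^{\ell-1}$ retained pairs sit at distance exactly $D^{[i]}$, so the normalization cancels exactly and the constant-free bound $Z(W)^{D^{[i]}}\le Z(W^{(i)})$ follows; this is precisely the extra idea needed here, and it is unavailable for the general (possibly nonlinear) kernels treated in Lemma~\ref{lem:ZD}, which explains the weaker constant there. Your diagnosis of why the naive single-term bound fails is accurate.
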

\begin{definition}\label{def:exp}
The exponent of a matrix $G$ is defined as $E(G) := (1/\ell)\sum_{i=0}^{\ell-1}\log_\ell D^{[i]}$.
The second exponent of a matrix $G$ is defined as $V(G) := (1/\ell)\sum_{i=0}^{\ell-1}(\log_\ell D^{[i]}-E(G))^2$.
\end{definition}
\begin{definition}
The Q function is defined as
\begin{equation*}
Q(t) := \frac1{\sqrt{2\pi}}\int_t^\infty \exp\left\{-\frac{x^2}{2}\right\}\mathrm{d}x.
\end{equation*}
\end{definition}
In this chapter, the base of logarithm is assumed to be 2 unless otherwise stated.

\section{Speed of Polarization}
\subsection{Speed of polarization and random process}
The following result is obtained by Ar{\i}kan and Telatar~\cite{arikan2008rcp}
when $G$ is the $2\times 2$ matrix~\eqref{eq:2x2},
and by Korada, {\c S}a{\c s}o{\u g}lu, and Urbanke for the general case~\cite{korada2009pcc}.
\begin{theorem}\label{thm:rop}
\begin{equation*}
\lim_{n\to\infty} P(Z_n < 2^{-\ell^{\beta n}}) = I(W)
\end{equation*}
for any $\beta<E(G)$.
\begin{equation*}
\lim_{n\to\infty} P(Z_n < 2^{-\ell^{\beta n}}) = 0
\end{equation*}
for any $\beta>E(G)$.
\end{theorem}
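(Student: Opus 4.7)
Introduce $T_n := -\log_2 Z_n$, which is finite and strictly positive in the only non-trivial case $Z_0 \in (0,1)$. Lemma~\ref{lem:bZD} translates directly into the sandwich
\[
D^{[B_{n+1}]} T_n - \ell \;\le\; T_{n+1} \;\le\; D^{[B_{n+1}]} T_n,
\]
so $\log_\ell T_n$ undergoes an almost-additive increment of $\log_\ell D^{[B_{n+1}]}$ at each step. Since $B_1, B_2, \ldots$ are i.i.d.\ uniform on $\{0,\ldots,\ell-1\}$, the strong law of large numbers (SLLN) yields
\[
\frac{1}{n}\sum_{k=1}^n \log_\ell D^{[B_k]} \;\to\; \frac{1}{\ell}\sum_{i=0}^{\ell-1} \log_\ell D^{[i]} \;=\; E(G) \quad \text{a.s.}
\]
Both halves of the theorem will follow by identifying $E(G)$ as the almost-sure exponential growth rate of $\log_\ell T_n$.

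\textbf{Converse half} ($\beta > E(G)$). The upper bound $T_{n+1} \le D^{[B_{n+1}]} T_n$ iterates cleanly to $\log_\ell T_n \le \log_\ell T_0 + \sum_{k=1}^n \log_\ell D^{[B_k]}$, so SLLN gives $\limsup_n (1/n)\log_\ell T_n \le E(G) < \beta$ a.s. Hence $T_n \le \ell^{\beta n}$, i.e.\ $Z_n \ge 2^{-\ell^{\beta n}}$, eventually a.s., and $P(Z_n < 2^{-\ell^{\beta n}}) \to 0$.

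\textbf{Direct half} ($\beta < E(G)$). Here the lower bound $T_{n+1} \ge D^{[B_{n+1}]} T_n - \ell$ is spoiled by the additive $-\ell$ term. I first condition on the polarization event $A := \{Z_\infty = 0\}$, which by Theorem~\ref{thm:polarization} together with Lemma~\ref{lem:bIZ} has probability $I(W)$. On $A^c = \{Z_\infty = 1\}$ the event $\{Z_n < 2^{-\ell^{\beta n}}\}$ eventually fails because $Z_n \to 1$. On $A$ we have $T_n \to \infty$, so for any $\eta > 0$ there exists $N(\omega)$ with $T_n \ge \ell/\eta$ for all $n \ge N(\omega)$, and then
\[
T_{n+1} \;\ge\; D^{[B_{n+1}]} T_n - \ell \;\ge\; D^{[B_{n+1}]} T_n (1 - \eta).
\]
Summing the resulting $\log_\ell$-increments and applying SLLN gives $\liminf_n (1/n)\log_\ell T_n \ge E(G) + \log_\ell(1-\eta)$ a.s.\ on $A$; letting $\eta \to 0$ produces the clean bound $\liminf_n (1/n)\log_\ell T_n \ge E(G)$ a.s.\ on $A$. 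Therefore $Z_n < 2^{-\ell^{\beta n}}$ eventually a.s.\ on $A$, and $P(Z_n < 2^{-\ell^{\beta n}}) \to P(A) = I(W)$.

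\textbf{Main obstacle.} The delicate step is the direct half: absorbing the additive $-\ell$ loss in the lower bound into a multiplicative factor $(1-\eta)$ requires a priori knowledge that $T_n \to \infty$, which is unavailable without conditioning on $\{Z_\infty = 0\}$ and invoking the polarization dichotomy already established in Theorem~\ref{thm:polarization}. Once that conditioning is in place, the two-sided estimate of Lemma~\ref{lem:bZD} together with SLLN applied to $\log_\ell D^{[B_k]}$ produces both bounds in parallel without further difficulty.
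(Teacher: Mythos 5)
Your proof is correct and follows essentially the same route as the paper's (Propositions~\ref{prop:direct} and~\ref{prop:conv}): the two-sided bound of Lemma~\ref{lem:bZD} turns $\log(1/Z_n)$ into a nearly multiplicative recursion, the additive slack in the direct half is absorbed into a factor $(1-\eta)$ by conditioning on $Z_n$ staying small (which the polarization dichotomy guarantees on $\{Z_\infty=0\}$), and the law of large numbers applied to $\log_\ell D^{[B_k]}$ identifies the threshold $E(G)$. The only differences are cosmetic: you argue almost surely via the strong law and exploit that $c_0=1$ for the polar process to make the converse iteration clean, whereas the paper states everything for a general process satisfying (c1)--(c3) and bounds event probabilities with the weak law.
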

\noindent
Tanaka and Mori showed more detailed speed of polarization~\cite{tanaka2010rre}.
\begin{theorem}\label{thm:rdr}
For any $f(n)=o(\sqrt{n})$,
\begin{equation*}
\lim_{n\to\infty} P\left(Z_n < 2^{-\ell^{E(G) n + t\sqrt{V(G)n}+f(n)}}\right) = I(W)Q(t).
\end{equation*}
\end{theorem}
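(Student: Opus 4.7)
The plan is to introduce $L_k := -\log_2 Z_k$ and study the random walk $\log_\ell L_k$. Lemma~\ref{lem:bZD} translates into the multiplicative recursion
\begin{equation*}
D^{[B_k]} L_{k-1} - \ell \;\le\; L_k \;\le\; D^{[B_k]} L_{k-1},
\end{equation*}
so on the part of the trajectory where $L_k$ is already very large, $\log_\ell L_k$ behaves, up to negligible corrections, as a sum of i.i.d.\ increments $\log_\ell D^{[B_k]}$ which have mean $E(G)$ and variance $V(G)$ by Definition~\ref{def:exp}. The classical central limit theorem then produces a Gaussian limit, and the probability $I(W)$ enters as the probability that the trajectory ever reaches the ``large $L$'' region at all.

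Concretely, fix any $\beta < E(G)$ and pick a warm-up length $n_0 = n_0(n)$ with $n_0 \to \infty$ and $n_0 = o(\sqrt{n})$, for example $n_0 = \lceil (\log n)^2 \rceil$, chosen so that $\ell^{\beta n_0}/(\ell n) \to \infty$. By Theorem~\ref{thm:rop}, the event $A_n := \{Z_{n_0} < 2^{-\ell^{\beta n_0}}\}$ satisfies $P(A_n) \to I(W)$ as $n \to \infty$, and on $A_n$ we have $L_{n_0} \ge \ell^{\beta n_0} \gg \ell n$. Iterating the recursion from $k = n_0+1$ to $n$ gives
\begin{equation*}
P_n\bigl(L_{n_0} - \ell(n-n_0)\bigr) \;\le\; L_n \;\le\; P_n L_{n_0}, \qquad P_n := \prod_{k=n_0+1}^{n} D^{[B_k]},
\end{equation*}
so, on $A_n$, $\log_\ell L_n = \log_\ell P_n + O(n_0) = \log_\ell P_n + o(\sqrt{n})$ uniformly.

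Since $A_n$ is $\sigma(B_1, \ldots, B_{n_0})$-measurable, the increments $\log_\ell D^{[B_k]}$ for $k > n_0$ remain i.i.d.\ under the conditional law $P(\,\cdot\mid A_n)$, and the classical CLT yields
\begin{equation*}
\log_\ell L_n \;=\; n E(G) + \sqrt{n V(G)}\,\xi_n + o(\sqrt{n}), \qquad \xi_n \Rightarrow N(0,1).
\end{equation*}
The defining condition $Z_n < 2^{-\ell^{E(G)n + t\sqrt{V(G)n} + f(n)}}$ is equivalent to $\log_\ell L_n > E(G) n + t\sqrt{V(G)n} + f(n)$, which, using $f(n) = o(\sqrt{n})$, translates on $A_n$ to $\xi_n > t + o(1)$; its conditional probability tends to $Q(t)$. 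Multiplying by $P(A_n) \to I(W)$ gives the contribution $I(W)Q(t)$ from $A_n$, and one must additionally check that trajectories in $A_n^c$ contribute negligibly---either they polarize toward $1$, where $L_n$ stays small, or their early slow-decay episode prevents the very stringent target inequality from holding.

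The main obstacle is controlling the error terms at precision $o(\sqrt{n})$: one needs the initial $\log_\ell L_{n_0}$, the additive $\ell(n - n_0)$ correction, and the difference between $n - n_0$ and $n$ to all be $o(\sqrt{n})$ uniformly on $A_n$. The choice $n_0 = (\log n)^2$ is tailored so that $n_0 \to \infty$ (needed so that $P(A_n) \to I(W)$ via Theorem~\ref{thm:rop}) yet $n_0$ is still $o(\sqrt{n})$ (needed so that the warm-up interval does not contaminate the $\sqrt{n}$-scale CLT fluctuations). A subsidiary technical point is to argue that $P(A_n^c \cap \{\text{target event}\}) \to 0$, which amounts to using the almost-sure version of Theorem~\ref{thm:rop} (available from the martingale arguments of~\cite{korada2009pcc}) to conclude that on the polarization-to-$0$ event we have $L_{n_0} \gg \ell n$ for all sufficiently large $n$ almost surely.
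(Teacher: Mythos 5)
Your lower bound ($\liminf \ge I(W)Q(t)$) is essentially the paper's proof of Proposition~\ref{prop:rdirect}: a warm-up time of length $O(\log n)$ (the paper takes $m=(\log n+\log\log\gamma)/\beta$ where you take $(\log n)^2$; either works since both are $o(\sqrt{n})$ and force $L_{n_0}$ to dominate the additive error), the event $\{Z_m<2^{-\ell^{\beta m}}\}$ whose probability tends to $I(W)$ by Theorem~\ref{thm:rop}, independence of the later $B_k$ from that event, the telescoped multiplicative recursion coming from Lemma~\ref{lem:bZD}, and the central limit theorem. Where you genuinely diverge is the upper bound. The paper's Proposition~\ref{prop:rconv} keeps $m$ \emph{fixed}, splits on $\{X_m\le\delta\}$ versus $\{X_m>\delta\}$, bounds the second piece by $P(X_\infty\le\delta/2,\,X_m>\delta)$, and only afterwards sends $m\to\infty$ and $\delta\to 0$; this never requires more than convergence in probability. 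You instead reuse the $n$-dependent event $A_n$ and must show $P(A_n^c\cap\{\text{target}\})\to 0$, for which you invoke an almost-sure version of Theorem~\ref{thm:rop}. That statement is true, but it is strictly stronger than what the paper establishes (Theorem~\ref{thm:rop} is a limit of probabilities), so as written this step rests on an unproved strengthening. It can be repaired without it: $P(A_n^c)\to 1-I(W)$ by Theorem~\ref{thm:rop}, while almost-sure convergence $Z_n\to Z_\infty$ and dominated convergence give $P(A_n^c\cap\{Z_\infty=1\})\to 1-I(W)$, hence $P(A_n^c\cap\{Z_\infty=0\})\to 0$; and $P(\{\text{target}\}\cap\{Z_\infty>0\})\to 0$ because the target threshold tends to $0$ while $Z_n$ converges to a positive limit there. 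With that patch (and the implicit standing assumptions $Z(W)\in(0,1)$ and $V(G)>0$, which the paper also makes tacitly), your proof is complete.
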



In order to prove Theorem~\ref{thm:rop} and Theorem~\ref{thm:rdr},
we consider a generalized process.
Let $\{S_n\}_{n\in\mathbb{N}}$ be independent and identically distributed random variables ranging on $[1,\infty)$.
Assume that the expectation and the variance of $\log S_1$ exist,
and are denoted by $\mathbb{E}[\log S_1]$ and $\mathbb{V}[\log S_1]$, respectively.
The random process $\{Z_n\in(0,1)\}_{n\in\mathbb{N}}$ satisfies the following conditions.
\begin{itemize}
\item[(c1)] $Z_n\to Z_\infty$ almost surely.
\item[(c2)] There exists a positive constant $c_0$ such that $c_0 Z_n^{S_n}\le Z_{n+1}$.
\item[(c3)] There exists a positive constant $c_1$ such that $Z_{n+1}\le c_1 Z_n^{S_n}$.
\item[(c4)] $S_n$ is independent of $Z_m$ for $m\le n$.
\end{itemize}
In the following proof, the above conditions are used.
The random process $\{Z_n\}_{n\in\mathbb{N}}$ satisfies (c2) and (c3) when $S_n = D^{[B_n]}$.
Then, it holds that $\mathbb{E}[\log S_1] = E(G)\log\ell$ and that $\mathbb{V}[\log S_1] = V(G)(\log\ell)^2$.
Let $\mathcal{T}_m^n(\gamma) := \{\omega\in\Omega\mid Z_k(\omega) < \gamma, \forall k \in \{m, m+1,\dotsc,n\}\}$ and
$\mathcal{T}_m^\infty(\gamma) := \cap_{n=1}^\infty \mathcal{T}_m^n(\gamma)$.
From (c1), there exist zero sets $\mathcal{A}$ and $\mathcal{B}$ where $P(\mathcal{A})=P(\mathcal{B})=0$ such that
\begin{equation*}
\{\omega\in\Omega\mid Z_\infty(\omega) < \gamma\} \subseteq \left(\bigcup_{k=1}^\infty \mathcal{T}_k^\infty(\gamma)\right) \cup\mathcal{A}
\subseteq \{\omega\in\Omega\mid Z_\infty(\omega)\le \gamma\} \cup\mathcal{B}
\end{equation*}
for any $\gamma\in[0,1]$.


\subsection{Direct part of Theorem~\ref{thm:rop}}
\begin{proposition}\label{prop:direct}
Let $\{X_n\}_{n\in\mathbb{N}}$ be a random process satisfying (c1) and (c3).
For any fixed $\beta\in(0,\mathbb{E}[\log S_1])$
\begin{equation*}
\lim_{n\to\infty} P\left(X_n\le 2^{-2^{\beta n}}\right) = P(X_\infty=0).
\end{equation*}
\end{proposition}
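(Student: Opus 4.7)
The plan is to prove matching upper and lower bounds on $P(X_n \le 2^{-2^{\beta n}})$ separately, then combine them. The upper bound $\limsup_n P(X_n \le 2^{-2^{\beta n}}) \le P(X_\infty = 0)$ uses only (c1): on the event $\{X_\infty > 0\}$, since $X_n \to X_\infty$ almost surely, one has $X_n \ge X_\infty/2 > 0$ for all sufficiently large $n$, whereas $2^{-2^{\beta n}} \to 0$. Hence $\{X_n \le 2^{-2^{\beta n}}\ \text{i.o.}\} \subseteq \{X_\infty = 0\}$ almost surely, and the reverse Fatou lemma yields the desired $\limsup$ bound.

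The matching lower bound $\liminf_n P(X_n \le 2^{-2^{\beta n}}) \ge P(X_\infty = 0)$ is the substantive part, and I would prove it by working on the sets $\mathcal{T}_m^\infty(\gamma)$ introduced in the preliminaries. Writing $Y_n := -\log X_n$ and taking logarithms in (c3) gives $Y_{n+1} \ge S_n Y_n - \log c_1$; assuming $c_1 \ge 1$ without loss of generality, on $\mathcal{T}_m^\infty(\gamma)$ one has $Y_n \ge L := -\log \gamma$ for every $n\ge m$, so
\[
Y_{n+1} \;\ge\; S_n Y_n\Bigl(1-\tfrac{\log c_1}{S_n Y_n}\Bigr) \;\ge\; c\, S_n Y_n,\qquad c := 1 - \tfrac{\log c_1}{L}.
\]
Iterating this bound gives $Y_n \ge L\, c^{n-m}\prod_{k=m}^{n-1} S_k$, whence
\[
\tfrac{\log Y_n}{n} \;\ge\; \tfrac{n-m}{n}\log c + \tfrac{\log L}{n} + \tfrac{1}{n}\sum_{k=m}^{n-1}\log S_k.
\]
By the i.i.d.\ assumption on $\{S_k\}$ and the strong law of large numbers, the last sum tends almost surely to $\mathbb{E}[\log S_1]$. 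Since $\beta < \mathbb{E}[\log S_1]$, choosing $\gamma$ small enough (hence $L$ large and $c$ close to $1$) makes $\log c + \mathbb{E}[\log S_1] > \beta$, so $Y_n > 2^{\beta n}$ eventually on $\mathcal{T}_m^\infty(\gamma)$, almost surely. Combining with the inclusion $\{X_\infty < \gamma\}\subseteq \bigcup_m \mathcal{T}_m^\infty(\gamma)$ up to a null set (from the preliminaries) and $P(X_\infty < \gamma)\ge P(X_\infty = 0)$ yields the desired $\liminf$ bound.

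The main delicate point is that the error term $-\log c_1$ appears additively in the one-step recursion, so it cannot be iterated into a clean product without additional control. This is precisely why the argument must be confined to the events $\mathcal{T}_m^\infty(\gamma)$: only there can one guarantee that $Y_n$ stays large enough throughout the iteration to absorb $-\log c_1$ as an arbitrarily small multiplicative perturbation. The choice of $\gamma$ must therefore be made up front in terms of $\beta$ (via the gap $\mathbb{E}[\log S_1]-\beta$), and conditions (c1) and (c3) are both essential; the i.i.d.\ hypothesis together with (c4) is used only to invoke the SLLN for $\sum \log S_k$.
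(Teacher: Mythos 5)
Your proof is correct and follows essentially the same route as the paper: both confine the iteration of (c3) to the events $\mathcal{T}_m^\infty(\gamma)$ so that the additive error $\log c_1$ becomes an arbitrarily small multiplicative perturbation of $\prod_k S_k$, then invoke the law of large numbers for $\sum_k \log S_k$ together with the inclusion $\{X_\infty<\gamma\}\subseteq\bigcup_m\mathcal{T}_m^\infty(\gamma)$ from the preliminaries. The only differences are cosmetic --- you apply the strong law pathwise where the paper uses the weak law with a union bound, and you parameterize the perturbation by $\gamma$ directly rather than by $\epsilon$ and the threshold $\zeta^{-1/\epsilon}$ --- plus your explicit $\limsup$ half from (c1), which the paper's written proof omits.
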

\begin{proof}
Fix $\epsilon \in(0,1)$.
We consider a process $\{L_i\}$ defined on the basis of $\{X_i\}$ as
\begin{align*}
L_i&=\log\log (1/X_i),\hspace{3em} i=0,\dotsc,m\\
L_{i+1}&= \log (S_{i}-\epsilon)+L_{i},\hspace{2em} i > m.
\end{align*}
Fix $\zeta>\max\{1,c_1\}$.
Conditional on $\mathcal{T}_m^{m+k-1}(\zeta^{-1/\epsilon})$, the inequality $\log\log(1/X_n) \ge L_n$ holds for any $n\in\{m,m+1,\dotsc,m+k\}$.
On the other hand, it holds
\begin{equation*}
L_{m+k} = L_m + \sum_{i=m}^{m+k-1} \log(S_i-\epsilon)
\ge L_m + \sum_{i=m}^{m+k-1} (\log S_i + \log (1-\epsilon)).
\end{equation*}
Conditional on $\mathcal{C}_m^{m+k-1}:=\{(1/k)\sum_{i=m}^{m+k-1}\log S_i \ge \mathbb{E}[\log S_1]-\epsilon\}$,
it holds
\begin{equation*}
L_{m+k} \ge k(\mathbb{E}[\log S_1] - \epsilon + \log (1-\epsilon)) + L_m.
\end{equation*}
Hence, 
\begin{align*}
P\left(\log\log(1/X_{m+k}) \ge k(\mathbb{E}[\log S_1] - \epsilon + \log(1-\epsilon)) + L_m\right)
&\ge P\left(\mathcal{T}_m^{m+k-1}(\zeta^{-1/\epsilon}) \cap \mathcal{C}_m^{m+k-1}\right)\\
&\ge 1-P\left(\mathcal{T}_m^{m+k-1}(\zeta^{-1/\epsilon})^c\right)-P\left({\mathcal{C}_m^{m+k-1}}^c\right).
\end{align*}
From the law of large numbers, it holds
$\lim_{k\to\infty} P\left({\mathcal{C}_m^{m+k-1}}^c\right) = 0$.
Since $X_n$ converges to $X_\infty$ almost surely,
$\lim_{m\to\infty}P(\mathcal{T}_m^\infty(\zeta^{-1/\epsilon})) \ge P(X_\infty<\zeta^{-1/\epsilon})$.
On the other hand,
we observe
\begin{multline*}
\liminf_{k\to\infty}P(\log\log(1/X_{m+k}) 
\ge k(\mathbb{E}[\log S_1] - \epsilon+\log(1-\epsilon)) + L_m)\\
\le\liminf_{n\to\infty} P\Biggl(\frac1n\log\log(1/X_n)
\ge \mathbb{E}[\log S_1] - \gamma\Biggr)
\end{multline*}
for any $\gamma > \epsilon - \log (1-\epsilon)$.
Hence,
\begin{equation*}
\liminf_{n\to\infty} P\Biggl(\frac1n\log\log(1/X_n)
\ge \mathbb{E}[\log S_1] - \gamma\Biggr)
\ge P(X_\infty<\zeta^{-1/\epsilon}).
\end{equation*}
\end{proof}

\subsection{Converse part of Theorem~\ref{thm:rop}}
\begin{proposition}\label{prop:conv}
Let $\{X_n\}_{n\in\mathbb{N}}$ be a random process satisfying (c1) and (c2).
For any fixed $\beta>\mathbb{E}[\log S_1]$
\begin{equation*}
\lim_{n\to\infty} P\left(X_n\le 2^{-2^{\beta n}}\right) = 0.
\end{equation*}
\end{proposition}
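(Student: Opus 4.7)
The plan is to derive from condition (c2) a deterministic linear recursive upper bound on $\log(1/X_n)$ in terms of $S_0,\dotsc,S_{n-1}$, and then to conclude by applying the strong law of large numbers to the (nonnegative) i.i.d.\ sequence $\log S_1,\log S_2,\dotsc$.

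Set $\alpha_n := \log(1/X_n)\ge 0$. Taking logarithms of the inequality $c_0 X_n^{S_n}\le X_{n+1}$ in condition (c2) yields
\begin{equation*}
\alpha_{n+1} \le a + S_n\,\alpha_n, \qquad a := \max\{0,\,\log(1/c_0)\}.
\end{equation*}
This is a linear recurrence with nonnegative random coefficient $S_n$. Unrolling from $n=0$ and using $S_i\ge 1$ to bound each partial product $\prod_{i=k+1}^{n-1} S_i$ by $P_n := \prod_{i=0}^{n-1} S_i$, one obtains the deterministic (in $\{S_i\}$) bound
\begin{equation*}
\alpha_n \le (\alpha_0 + n a)\,P_n.
\end{equation*}

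Since $\{S_i\}$ is i.i.d.\ with $\mathbb{E}[\log S_1]<\infty$, the strong law of large numbers gives $(1/n)\log P_n \to \mathbb{E}[\log S_1]$ almost surely. Hence, for any $\epsilon>0$, almost surely $\log P_n \le n(\mathbb{E}[\log S_1]+\epsilon)$ for all $n$ sufficiently large, so that
\begin{equation*}
\log \alpha_n \le n(\mathbb{E}[\log S_1]+\epsilon) + O(\log n) \qquad \text{a.s.}
\end{equation*}
For fixed $\beta > \mathbb{E}[\log S_1]$, choose $\epsilon\in(0,\beta-\mathbb{E}[\log S_1])$; then $\alpha_n < 2^{\beta n}$ eventually, i.e., $X_n > 2^{-2^{\beta n}}$ for all sufficiently large $n$, almost surely. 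Consequently $\mathbb{I}\{X_n\le 2^{-2^{\beta n}}\}\to 0$ almost surely, and the bounded convergence theorem delivers the claim.

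The argument is essentially routine once the linear recurrence is identified; in contrast with the direct part (Proposition~\ref{prop:direct}), no truncation on the event $\mathcal{T}_m^\infty(\gamma)$ or auxiliary additive process is required, because the inequality coming from (c2) is an unconditional pointwise upper bound on $\alpha_n$. The only step requiring a small amount of care is the unrolling: one must observe that $S_i\ge 1$ makes every partial product $\prod_{i=k+1}^{n-1}S_i$ bounded by the full product $P_n$, which reduces the affine contribution of $a$ to the manageable form $n a\,P_n$ rather than something depending on the full sequence $\{S_i\}$ in a more complicated way. Condition (c1) is not needed for this part; only (c2) and the i.i.d.\ assumption on $\{S_i\}$ are used.
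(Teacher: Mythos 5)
Your proof is correct, and it takes a genuinely different route from the paper's. Both arguments start from the same pointwise linear recursion $\log(1/X_{n+1})\le \log(1/c_0)+S_n\log(1/X_n)$, but they diverge in how the additive constant is handled. The paper absorbs it \emph{multiplicatively}: on the event $\mathcal{T}_m^\infty(\zeta^{1/\epsilon})$ that the process stays below a threshold, $\log(1/c_0)\le\epsilon\log(1/X_n)$, so the recursion becomes $\log(1/X_{n+1})\le(S_n+\epsilon)\log(1/X_n)$; this forces a decomposition over $\mathcal{T}_m^\infty$ and its complement, an appeal to the (weak) law of large numbers for the auxiliary process $L_n$, and crucially condition (c1) to show that the contribution of the complement event vanishes as $m\to\infty$. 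You absorb the constant \emph{additively} into the polynomial prefactor $(na+\alpha_0)$, which is valid unconditionally because $S_i\ge1$ bounds every partial product by the full product $P_n$; the strong law then gives the stronger almost-sure conclusion that $X_n>2^{-2^{\beta n}}$ eventually, from which the limit of probabilities follows by bounded convergence. Your observation that (c1) is dispensable is a genuine (small) strengthening of the statement. The trade-off is that the paper's conditioning machinery is set up so that essentially the same bookkeeping carries over to the finer Gaussian-scale converse (Proposition~\ref{prop:rconv}), where the event decomposition and the $\log$-scale correction terms reappear as $o(\sqrt{n})$ perturbations; your polynomial prefactor would also be harmless there, but the a.s.\ route alone would not suffice since that result is genuinely distributional.
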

\begin{proof}
Fix $\epsilon \in(0,1)$.
We consider a process $\{L_i\}$ defined on the basis of $\{X_i\}$ as
\begin{align*}
L_i&=\log\log (1/X_i), \hspace{2em} i=0,\dotsc,m\\
L_{i+1}&= \log (S_{i}+\epsilon)+L_{i}, \hspace{1em} i > m.
\end{align*}
Fix $\zeta\in(0,\min\{c_0, 1\})$.
Conditional on $\mathcal{T}_m^\infty(\zeta^{1/\epsilon})$, it holds $\log\log(1/X_n) \le L_n$ for any $n\ge m$.
It holds
\begin{equation*}
L_{m+k} = L_m + \sum_{i=m}^{m+k-1} \log(S_i+\epsilon)
\le L_m + \sum_{i=m}^{m+k-1} (\log S_i +\epsilon).
\end{equation*}
For any $\gamma>0$,
\begin{align*}
&\limsup_{n\to\infty} P\left(\frac1n\log\log(1/X_n)\ge \mathbb{E}[\log S_1] + 2\epsilon\right)\\
&=\limsup_{k\to\infty} P\left(\frac1{m+k}\log\log(1/X_{m+k})\ge \mathbb{E}[\log S_1] + 2\epsilon\right)\\
&\le\limsup_{k\to\infty} \left\{
P\left(\frac1{m+k}L_{m+k}\ge \mathbb{E}[\log S_1] + 2\epsilon \;\bigcap\; \mathcal{T}_m^\infty(\zeta^{1/\epsilon})\right)
+P\left(X_{m+k}\le \gamma \;\bigcap\; \mathcal{T}_m^\infty(\zeta^{1/\epsilon})^c\right)\right\}\\
&\le\limsup_{k\to\infty} \left\{
P\left(\frac1{m+k}L_{m+k}\ge \mathbb{E}[\log S_1] + 2\epsilon\right)
+P\left(X_{m+k}\le \gamma \;\bigcap\; \mathcal{T}_m^\infty(\zeta^{1/\epsilon})^c\right)\right\}\\
&\le\limsup_{k\to\infty} \Biggl\{
P\Biggl(\frac1{m+k}\left(L_m+k\epsilon+\sum_{i=m}^{m+k-1}\log S_i\right)
\ge \mathbb{E}[\log S_1] + 2\epsilon\Biggr)\Biggr\}\\
&\quad+P\left(X_\infty\le \gamma \;\bigcap\; \mathcal{T}_m^\infty\left(\zeta^{1/\epsilon}\right)^c\right)\\
&=
P\left(X_\infty\le \gamma \;\bigcap\; \mathcal{T}_m^\infty\left(\zeta^{1/\epsilon}\right)^c\right)
\end{align*}
The last equality is obtained from the law of large numbers.
\begin{align*}
\lim_{m\to\infty}P\left(X_\infty\le \gamma \;\bigcap\; \mathcal{T}_m^\infty(\zeta^{1/\epsilon})^c\right)
&= 1- \lim_{m\to\infty}P\left(X_\infty> \gamma \;\bigcup\; \mathcal{T}_m^\infty(\zeta^{1/\epsilon})\right)\\
&\le 1- P\left(X_\infty> \gamma \;\bigcup\; X_\infty < \zeta^{1/\epsilon}\right)
\end{align*}
By letting $\gamma=\zeta^{1/\epsilon}/2$, the right-hand side of the above inequality is equal to zero.
\end{proof}

\subsection{Direct part of Theorem~\ref{thm:rdr}}
\begin{proposition}\label{prop:rdirect}
Let $\{X_n\}_{n\in\mathbb{N}}$ be a random process satisfying (c1), (c3) and (c4).
For any $f(n)=o(\sqrt{n})$,
\begin{equation*}
\liminf_{n\to\infty} P\left(X_n < 2^{-2^{\mathbb{E}[\log S_1]n+t\sqrt{\mathbb{V}[\log S_1]n}+f(n)}}\right) \ge P(X_\infty=0)Q(t).
\end{equation*}
\end{proposition}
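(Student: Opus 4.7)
The plan is to refine the proof of Proposition~\ref{prop:direct} by invoking the Central Limit Theorem in place of the Strong Law of Large Numbers, while using an adaptive choice of the slack parameter $\epsilon$ so that the $\log c_1$ corrections coming from (c3) accumulate to a bounded constant rather than to a linear-in-$n$ term.

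Writing $R_n := \log\log(1/X_n)$, one first notes that (c3) gives $R_{n+1}\ge R_n+\log(S_n-\epsilon_n)$ whenever $\epsilon_n \log(1/X_n)\ge \log c_1$; taking $\epsilon_n := (\log c_1)\,2^{-R_n}$ at the borderline yields, for $R_n$ sufficiently large,
\[
R_{n+1} \ge R_n + \log S_n - 2(\log c_1)\,2^{-R_n}.
\]
Telescoping from $n_0$ to $N$ and writing $T_{n_0,N}:=\sum_{n=n_0}^{N-1}\log S_n$,
\[
R_N \ge R_{n_0}+T_{n_0,N}-2(\log c_1)\sum_{n=n_0}^{N-1} 2^{-R_n}.
\]
If $R_n\ge\rho n$ for all $n\ge n_0$ for some fixed $\rho>0$, the correction is dominated by a convergent geometric series and hence bounded by a finite constant $C=C(\rho,c_1)$, giving $R_N\ge R_{n_0}+T_{n_0,N}-C$.

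To guarantee this linear growth with the right probability, I would retrace the SLLN argument inside the proof of Proposition~\ref{prop:direct}: on the trapping event $\mathcal{T}_m^\infty(\zeta^{-1/\epsilon})$ the SLLN applied to the iid sequence $\{\log(S_i-\epsilon)\}$ gives $\liminf_n R_n/n \ge \mathbb{E}[\log S_1]-O(\epsilon)$ almost surely; letting $m\to\infty$ and $\epsilon\downarrow 0$ yields $\liminf_n R_n/n \ge \mathbb{E}[\log S_1]$ almost surely on $\{X_\infty=0\}$. Consequently, for any $\eta>0$ and any $\rho<\mathbb{E}[\log S_1]$ one can choose $n_0$ and an event $\mathcal{G}\subseteq\{X_\infty=0\}$ with $P(\mathcal{G})\ge P(X_\infty=0)-\eta$ on which $R_n\ge\rho n$ for all $n\ge n_0$. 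Next, by (c4) and the iid assumption, $T_{n_0,N}$ is independent of $\mathcal{F}_{n_0}:=\sigma(X_0,\dots,X_{n_0})$, so the CLT gives $P(T_{n_0,N}\ge (N-n_0)\mathbb{E}[\log S_1]+t\sigma\sqrt{N-n_0}) \to Q(t)$ with $\sigma:=\sqrt{\mathbb{V}[\log S_1]}$. Since $n_0$, $R_{n_0}$, and $C$ are bounded for each fixed $\omega$ and $f(N)=o(\sqrt{N})$, the target threshold $N\mathbb{E}[\log S_1]+t\sigma\sqrt{N}+f(N)$ differs from $R_{n_0}+(N-n_0)\mathbb{E}[\log S_1]+t\sigma\sqrt{N-n_0}-C$ by $o(\sqrt{N})$, which is negligible on the CLT scale.

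The main obstacle is combining $\mathcal{G}$ with the CLT event $\{R_{n_0}+T_{n_0,N}\ge \text{target}+C\}$: $\mathcal{G}$ depends on $\{S_n\}_{n\ge n_0}$ through the $R_n$ and is therefore not independent of $T_{n_0,N}$, and a naive union bound only delivers $P(\mathcal{G})+Q(t)-1$ rather than $P(\mathcal{G})\,Q(t)$. I would handle this by exploiting the tail-event nature of $\{X_\infty=0\}$: letting $H_N:=P(X_\infty=0\mid\mathcal{F}_N)$, which converges almost surely to $\mathbb{I}\{X_\infty=0\}$ by the martingale convergence theorem, and using the $\mathcal{F}_N$-measurability of the CLT event, one can rewrite
\[
P(R_N\ge\text{target},\,X_\infty=0) = \mathbb{E}\bigl[H_N\,\mathbb{I}\{R_{n_0}+T_{n_0,N}\ge\text{target}+C\}\bigr] + o(1),
\]
and a Slutsky-type combination of $H_N\to\mathbb{I}\{X_\infty=0\}$ a.s.\ with the CLT for $T_{n_0,N}$ (which are asymptotically decoupled since the former is tail-measurable and the latter is finite-time) yields the lower bound $P(X_\infty=0)\,Q(t)$. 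Rigorously justifying this decoupling is where I expect the bulk of the technical work to lie.
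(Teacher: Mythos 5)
Your argument has a genuine gap, and it sits exactly where you locate it: the decoupling of the trajectory event $\mathcal{G}$ from the CLT event. The set $\mathcal{G}=\{R_n\ge\rho n\ \text{for all}\ n\ge n_0\}$ is determined by the entire future of the process and hence by $S_{n_0},S_{n_0+1},\dots$; it is not independent of $T_{n_0,N}$, and, as you note, the union bound only yields $P(\mathcal{G})+Q(t)-1$. Your proposed repair does not close this. Writing $P(\,\cdot\,,X_\infty=0)=\mathbb{E}\bigl[H_N\,\mathbb{I}\{\cdots\}\bigr]$ is correct, but the asserted decoupling ``since the former is tail-measurable and the latter is finite-time'' is not a proof and is not even right as stated: $H_N$ and the CLT event are both (essentially) $\mathcal{F}_N$-measurable, and $\{X_\infty=0\}$ genuinely depends on $S_{n_0},\dots,S_{N-1}$. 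What would actually rescue this route is R\'enyi's mixing form of the CLT for iid sums, namely $P(A\cap\{Z_N\ge t\})\to P(A)Q(t)$ for \emph{every} fixed event $A$; you neither invoke nor prove such a result, and you explicitly defer ``the bulk of the technical work'' to precisely this point.

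The idea you are missing is that the difficulty is avoidable. The paper conditions not on a trajectory event but on the single-time event $\mathcal{D}_m(\beta)=\{X_m<2^{-2^{\beta m}}\}$ with $m=m(n)=(\log n+\log\log\gamma)/\beta=O(\log n)$, where $\gamma=\max\{2,c_1\}$. Iterating (c3) crudely gives, unconditionally, $\log X_n\le\bigl(\prod_{i=m}^{n-1}S_i\bigr)\bigl((n-m)\log\gamma+\log X_m\bigr)$, and on $\mathcal{D}_m(\beta)$ the bracket is at most $-m\log\gamma<0$; intersecting with $\mathcal{H}_m^{n-1}(t)=\{\sum_{i=m}^{n-1}\log S_i\ge(n-m)\mathbb{E}[\log S_1]+t\sqrt{\mathbb{V}[\log S_1](n-m)}+f(n-m)\}$ then gives the target threshold up to an $O(\log n)=o(\sqrt n)$ shift that is absorbed into $f$. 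Since $\mathcal{D}_m(\beta)$ depends only on $X_m$ and $\mathcal{H}_m^{n-1}(t)$ only on $S_m,\dots,S_{n-1}$, condition (c4) yields the \emph{exact} factorization $P(\mathcal{D}_m\cap\mathcal{H}_m^{n-1})=P(\mathcal{D}_m)P(\mathcal{H}_m^{n-1})$; Proposition~\ref{prop:direct} (already proved) gives $P(\mathcal{D}_m(\beta))\to P(X_\infty=0)$ as $m\to\infty$, and the central limit theorem gives $P(\mathcal{H}_m^{n-1}(t))\to Q(t)$. This construction also renders your adaptive-$\epsilon_n$ bookkeeping and the almost-sure lower bound on $\liminf R_n/n$ unnecessary.
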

\begin{proof}
Let $L_n := \log X_n$.
Let $\gamma := \max\{2, c_1\}$.
One obtains
\begin{equation*}
L_n \le \log \gamma + S_{n-1} L_{n-1}
\le \left(\sum_{j=m}^{n-1}\prod_{i=j+1}^{n-1} S_i\right)\log \gamma + \left(\prod_{i=m}^{n-1}S_i\right) L_m
\le \left(\prod_{i=m}^{n-1} S_i\right)\left((n-m)\log \gamma + L_m\right).
\end{equation*}
Fix $\beta\in(0,E(G))$.
Let $m:= (\log n + \log\log\gamma)/\beta$.
Conditioned on $\mathcal{D}_m(\beta) := \{\omega\in\Omega\mid X_m(\omega) < 2^{-2^{\beta m}}\}$,
\begin{equation*}
L_n \le -\left(\prod_{i=m}^{n-1} S_i\right)m\log\gamma.
\end{equation*}
Let $\mathcal{H}_{m}^{n-1}(t) := \{\sum_{i=m}^{n-1} \log S_i \ge (n-m)\mathbb{E}[\log S_1] + t\sqrt{\mathbb{V}[\log S_1] (n-m)} + f(n-m)\}$
where $f(k) = o(\sqrt{k})$.
Conditioned on $\mathcal{D}_m(\beta)$ and $\mathcal{H}_{m}^{n-1}(t)$,
it holds
\begin{equation*}
\log (-L_n) \ge \log m + \log\log\gamma + 
(n-m)\mathbb{E}[\log S_1] + t\sqrt{\mathbb{V}[\log S_1] (n-m)} + f(n-m).
\end{equation*}
Hence, it holds
\begin{multline*}
P\left(\log (-L_n) \ge \log m + \log\log\gamma + 
(n-m)\mathbb{E}[\log S_1] + t\sqrt{\mathbb{V}[\log S_1] (n-m)} + f(n-m)\right)\\
\ge P\left(\mathcal{D}_m(\beta)\cap\mathcal{H}_{m}^{n-1}(t)\right)
= P\left(\mathcal{D}_m(\beta)\right)P\left(\mathcal{H}_{m}^{n-1}(t)\right).
\end{multline*}
The last equality follows from (c4).
From Theorem~\ref{thm:rop}, it holds $\lim_{m\to\infty}P\left(\mathcal{D}_m(\beta)\right)=P(X_\infty=0)$.
From the central limit theorem, it holds $\lim_{n\to\infty}P\left(\mathcal{H}_{m}^{n-1}(t)\right)=Q(t)$.
At last, one obtains
\begin{equation*}
\liminf_{n\to\infty} P\left(\log\log(1/X_n)  \ge
n\mathbb{E}[\log S_1] + t\sqrt{\mathbb{V}[\log S_1] n} + f(n)\right)
\ge P(X_\infty = 0)Q(t).
\end{equation*}
for any $f(n)=o(\sqrt{n})$.
\end{proof}

\subsection{Converse part of Theorem~\ref{thm:rdr}}
\begin{proposition}\label{prop:rconv}
Let $\{X_n\}_{n\in\mathbb{N}}$ be a random process satisfying (c1), (c2) and (c4).
For any $f(n)=o(\sqrt{n})$,
\begin{equation*}
\limsup_{n\to\infty} P\left(X_n < 2^{-2^{\mathbb{E}[\log S_1]n+t\sqrt{\mathbb{V}[\log S_1]n}+f(n)}}\right) \le P(X_\infty=0)Q(t).
\end{equation*}
\end{proposition}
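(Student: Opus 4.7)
Set $L_n := \log(1/X_n)$, so that the event $\{X_n < 2^{-2^{\tau_n(t)}}\}$ coincides with $\{\log L_n \ge \tau_n(t)\}$, where $\tau_n(t) := \mathbb{E}[\log S_1]\,n + t\sqrt{\mathbb{V}[\log S_1]\,n} + f(n)$. Condition (c2), rewritten as $L_{n+1} \le S_n L_n + d$ with $d := -\log c_0 \ge 0$ (WLOG $c_0 \le 1$), iterates, using $S_i \ge 1$, to
\[
L_n \le \Bigl(\prod_{i=m}^{n-1} S_i\Bigr)\bigl(L_m + (n-m)d\bigr), \qquad \log L_n \le T_{m,n} + \log\bigl(L_m + (n-m)d\bigr),
\]
where $T_{m,n} := \sum_{i=m}^{n-1} \log S_i$. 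This is the upper counterpart to the lower recursion used in Proposition~\ref{prop:rdirect}.

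Fix a positive integer $m$ and $\epsilon > 0$ that is not an atom of $X_\infty$, and split
\[
P(\log L_n \ge \tau_n(t)) = A_n(m,\epsilon) + B_n(m,\epsilon),
\]
with $A_n := P(\log L_n \ge \tau_n(t),\, X_m < \epsilon)$ and $B_n := P(\log L_n \ge \tau_n(t),\, X_m \ge \epsilon)$. For $A_n$: given $X_m = x$ with $0 < x < \epsilon$, the recursion forces $\{\log L_n \ge \tau_n(t)\} \subseteq \{T_{m,n} \ge \tau_n(t) - \log n - C_x\}$ for $n$ large, where $C_x$ depends only on $x$. Since $f(n), \log n, C_x$ are all $o(\sqrt n)$ and $m$ is fixed, the standardized threshold $(\tau_n(t) - \log n - C_x - (n-m)\mathbb{E}[\log S_1])/\sqrt{(n-m)\mathbb{V}[\log S_1]}$ tends to $t$. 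By (c4), $T_{m,n}$ is independent of $X_m$, and the central limit theorem then yields $\limsup_{n\to\infty} P(\log L_n \ge \tau_n(t) \mid X_m = x) \le Q(t)$ for each such $x$. Since the conditional probabilities are bounded by $1$, a reverse Fatou argument gives $\limsup_n A_n(m,\epsilon) \le Q(t)\, P(X_m < \epsilon)$.

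For $B_n$: on $\{X_\infty > 0\}$, condition (c1) gives $L_n \to -\log X_\infty < \infty$ almost surely, while $\tau_n(t) \to \infty$, so $P(\log L_n \ge \tau_n(t),\, X_\infty > 0) \to 0$ by bounded convergence. Hence
\[
\limsup_n B_n(m,\epsilon) \le P(X_m \ge \epsilon,\, X_\infty = 0),
\]
which tends to zero as $m \to \infty$, because $\mathbb{I}\{X_m \ge \epsilon\}\,\mathbb{I}\{X_\infty = 0\} \to 0$ almost surely for $\epsilon > 0$ (on $\{X_\infty = 0\}$ one has $X_m \to 0$). Combining the two bounds, letting $m \to \infty$ (so $P(X_m < \epsilon) \to P(X_\infty < \epsilon)$ by the Portmanteau theorem applied at the non-atom $\epsilon$), and finally letting $\epsilon \downarrow 0$ through non-atoms (so $P(X_\infty < \epsilon) \downarrow P(X_\infty = 0)$) yields the stated bound.

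The main obstacle is recovering the factor $P(X_\infty = 0)$ in the upper bound: the recursion and the central limit theorem alone give only $Q(t)$, which is strictly loose when $P(X_\infty = 0) < 1$. That factor is extracted by restricting to $\{X_m < \epsilon\}$ and exploiting the independence granted by (c4), while controlling the complementary event $\{X_m \ge \epsilon\}$ via two observations: $\{\log L_n \ge \tau_n(t)\}$ is asymptotically contained in $\{X_\infty = 0\}$, forcing $B_n$ to live on that set in the limit, and $\{X_m \ge \epsilon\} \cap \{X_\infty = 0\}$ itself vanishes as $m \to \infty$.
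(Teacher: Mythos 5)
Your proposal is correct and follows essentially the same route as the paper: the same decomposition on $\{X_m<\epsilon\}$ versus $\{X_m\ge\epsilon\}$, the same iteration of (c2) to get $\log\log(1/X_n)\le\sum_{i=m}^{n-1}\log S_i+o(\sqrt n)$, the same use of (c4) together with the central limit theorem to extract $Q(t)$ times the probability of the small-$X_m$ event, and the same appeal to almost-sure convergence to dispose of the complementary event. The only differences are cosmetic (conditioning on $X_m=x$ with a reverse-Fatou step where the paper argues on the joint event, and killing the residual term by letting $m\to\infty$ rather than by noting $P(X_\infty\le\delta/2,\,X_\infty\ge\delta)=0$).
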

\begin{proof}
Let $L_n := \log X_n$.
Let $\gamma := \min\{1, c_0\}$.
For any $m\le n$, one obtains
\begin{equation*}
L_n \ge \log \gamma + S_{n-1} L_{n-1}
\ge \left(\sum_{j=m}^{n-1}\prod_{i=j+1}^n S_i\right)\log \gamma + \left(\prod_{i=m}^{n-1}S_i\right) L_m
\ge \left(\prod_{i=m}^{n-1} S_i\right)\left((n-m)\log \gamma + L_m\right).
\end{equation*}
For any $\delta\in(0,1]$, one obtains
\begin{align*}
&\limsup_{n\to\infty}P\left(\log\log(1/X_n) > \mathbb{E}[\log S_1]n+t\sqrt{\mathbb{V}[\log S_1]n}+f(n)\right)\\
&\le\limsup_{n\to\infty}P\left(\log\log(1/X_n) > \mathbb{E}[\log S_1]n+t\sqrt{\mathbb{V}[\log S_1]n}+f(n),~X_m \le \delta\right)\\
&\quad+\limsup_{n\to\infty}P\left(\log\log(1/X_n) > \mathbb{E}[\log S_1]n+t\sqrt{\mathbb{V}[\log S_1]n}+f(n),~X_m > \delta\right)\\
&\le\limsup_{n\to\infty}P\left(\log\log(1/X_n) > \mathbb{E}[\log S_1]n+t\sqrt{\mathbb{V}[\log S_1]n}+f(n),~X_m \le \delta\right)\\
&\quad+\limsup_{n\to\infty}P\left(X_n < \frac{\delta}{2},~X_m > \delta\right)\\
&\le \limsup_{n\to\infty}
P\left(\sum_{i=m}^{n-1} \log S_i + \log\left(-(n-m)\log \gamma - L_m\right)
> \mathbb{E}[\log S_1]n+t\sqrt{\mathbb{V}[\log S_1]n}+f(n),~X_m\le\delta\right)\\
&\quad+P\left(X_\infty\le\frac{\delta}{2},~X_m>\delta\right)\\
&= 
Q(t)P(X_m\le\delta)+P\left(X_\infty\le\frac{\delta}{2},~X_m>\delta\right).
\end{align*}
The last equality follows from (c4) and the central limit theorem.
One obtains
\begin{align*}
&\limsup_{n\to\infty}P\left(\log\log(1/X_n) > \mathbb{E}[\log S_1]n+t\sqrt{\mathbb{V}[\log S_1]n}+f(n)\right)\\
&\le\limsup_{m\to\infty}\left\{Q(t)P(X_m\le\delta)+P\left(X_\infty\le\frac{\delta}{2},~X_m>\delta\right)\right\}\\
&\le Q(t)P(X_\infty\le\delta)+P\left(X_\infty\le\frac{\delta}{2},~X_\infty\ge\delta\right)
= Q(t)P(X_\infty\le\delta).
\end{align*}
By letting $\delta$ to 0, one obtains the result.
\end{proof}

\chapter{Polar Codes and its Construction}\label{chap:pcodes}
\section{Introduction}
Polar codes are channel codes based on the channel polarization phenomenon.
Polar codes achieve symmetric capacity under efficient encoding and decoding algorithms.
However, construction of polar codes requires high computational cost in the original work~\cite{5075875}.
One of the contribution of the thesis is to show for symmetric B-DMCs, a construction method 
with complexity $O(N)$ where $N$ is the blocklength~\cite{5205857}.

\section{Preliminaries}
For $x\in\{0,1\}$, $\bar{x}$ represents the bit flipping of $x$.
\begin{definition}[Symmetric B-DMC]
A B-DMC $W: \mathcal{X}\to\mathcal{Y}$ is said to be symmetric if there exists a permutation $\pi$ on $\mathcal{Y}$ such that
$W(\pi(y) \mid x) = W(y \mid \bar{x})$ for all $y\in\mathcal{Y}$.
\end{definition}
\begin{definition}
The error probability of a B-DMC $W$ is defined as
\begin{equation*}
P_e(W) := \frac12\sum_{y: W(y\mid 1)>W(y\mid 0)}\hspace{-1.5em} W(y\mid 0) +
\frac12\sum_{y: W(y\mid 1)<W(y\mid 0)}\hspace{-1.5em} W(y\mid 1) +
\frac12 \sum_{y: W(y\mid 1)=W(y\mid 0)} \hspace{-1.5em}W(y\mid 0)
\end{equation*}
\end{definition}
In order to bound the error probability of polar codes, Bhattacharyya parameter is useful.
\begin{lemma}\label{lem:PeZ}\cite{korada2009thesis}
\begin{equation*}
\frac12\left(1-\sqrt{1-Z(W)^2}\right)\le P_e(W)\le  \frac12 Z(W).
\end{equation*}
\end{lemma}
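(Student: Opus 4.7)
The plan is to rewrite $P_e(W)$ in a single closed form and then treat the two inequalities essentially as pointwise versus global consequences of AM--GM and Cauchy--Schwarz.

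First, observe that the three sums in the definition of $P_e(W)$ combine cleanly: for each $y$, the contribution is exactly $\tfrac12\min\{W(y\mid 0),\,W(y\mid 1)\}$ (the split case $W(y\mid 0)=W(y\mid 1)$ agrees with this). So I would start the proof by recording
\begin{equation*}
P_e(W) = \frac12\sum_{y\in\mathcal{Y}} \min\{W(y\mid 0),\,W(y\mid 1)\}.
\end{equation*}
From this, $P_e(W)\le \tfrac12$ is immediate since $\min\{a,b\}\le(a+b)/2$ and $\sum_y W(y\mid x)=1$.

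For the upper bound, I would use the pointwise AM--GM inequality $\min\{a,b\}\le\sqrt{ab}$. Summing yields
\begin{equation*}
P_e(W)\le \frac12\sum_{y}\sqrt{W(y\mid 0)W(y\mid 1)} = \frac12 Z(W),
\end{equation*}
which is the right inequality.

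For the lower bound, the cleanest route is Cauchy--Schwarz. Writing $\sqrt{W(y\mid 0)W(y\mid 1)} = \sqrt{\min\{W(y\mid 0),W(y\mid 1)\}}\,\sqrt{\max\{W(y\mid 0),W(y\mid 1)\}}$ and applying Cauchy--Schwarz in $y$ gives
\begin{equation*}
Z(W) \le \sqrt{\sum_{y}\min\{W(y\mid 0),W(y\mid 1)\}}\;\sqrt{\sum_{y}\max\{W(y\mid 0),W(y\mid 1)\}}.
\end{equation*}
The first factor is $\sqrt{2P_e(W)}$. For the second, note $\min+\max = W(y\mid 0)+W(y\mid 1)$ summed over $y$ gives $2$, so the second factor is $\sqrt{2-2P_e(W)}$. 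Hence $Z(W)\le 2\sqrt{P_e(W)(1-P_e(W))}$, i.e.\ $Z(W)^2 \le 1-(1-2P_e(W))^2$. Since $P_e(W)\le \tfrac12$, taking square roots and solving for $P_e(W)$ yields
\begin{equation*}
P_e(W) \ge \frac12\left(1-\sqrt{1-Z(W)^2}\right).
\end{equation*}

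There isn't really a hard step; the only place one must be a bit careful is justifying the collapse of the three-sum definition of $P_e(W)$ into one expression with $\min$, and keeping track of the sign when inverting $(1-2P_e)^2 \le 1-Z^2$, which is why I would record $P_e(W)\le\tfrac12$ explicitly at the outset.
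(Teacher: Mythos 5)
Your proof is correct. Note that the paper itself states this lemma without proof, deferring to the cited thesis of Korada, so there is no in-paper argument to compare against; your route --- collapsing the three-case definition to $P_e(W)=\tfrac12\sum_y\min\{W(y\mid 0),W(y\mid 1)\}$, then $\min\{a,b\}\le\sqrt{ab}$ for the upper bound and Cauchy--Schwarz applied to $\sqrt{\min}\cdot\sqrt{\max}$ together with $\sum_y(\min+\max)=2$ for the lower bound --- is the standard one, and every step, including the explicit check that $P_e(W)\le\tfrac12$ before inverting $(1-2P_e(W))^2\le 1-Z(W)^2$, is sound.
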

\section{Polar Codes}\label{sec:polar}
Polar codes are based on channel polarization phenomenon.
Fix an $\ell\times\ell$ matrix $G$, $\mathcal{F}\subseteq \{0,\dotsc,\ell^n-1\}$ and $u_\mathcal{F}$.
Variables belonging to $u_\mathcal{F}$ and $u_{\mathcal{F}^c}$ are called
\textit{frozen variables} and \textit{information variables}, respectively.
Let $G_n := (I_{\ell^{n-1}}\otimes G)R_{\ell,n}(I_\ell\otimes G_{n-1})$ where $\otimes$ denotes the Kronecker product,
where $R_{\ell,n}$ is a permutation matrix such that
$(u_0,\dotsc,u_{\ell^n-1})R_{\ell,n} =
(u_0, u_\ell,\dotsc,u_{\ell^{n-1}}, u_1, u_{\ell+1},\dotsc,u_{\ell^{n-1}+1},\dotsc,u_{\ell-1}, u_{2\ell-1},\dotsc, u_{\ell^n-1})$,
where $I_k$ denotes the identity matrix of size $k$,
and where $G_1 = G$.
An encoding result of a polar code of length $\ell^n$ is represented as $u_0^{\ell^n-1} G_n$
where $u_{\mathcal{F}^c}$ is constituted by pre-encoding values corresponding to a message.
Note that $G_n = B_{\ell,n}G^{\otimes n}$ where $B_{\ell,n}$ is the bit-reversal permutation matrix
with respect to $\ell$-ary expansion~\cite{5075875}.
More precisely, for $x_0^{\ell^n-1}=u_0^{\ell^n-1}B_{\ell,n}$,  
$x_i$ is equal to $u_j$ where $\ell$-ary expansion $b_1\dotsm b_n$ of $i$ is the reverse of $\ell$-ary expansion
$b_n\dotsm b_1$ of $j$.

We assume successive cancellation (SC) decoder for polar codes.
For $i\in\{0,\dotsc,\ell^n-1\}$, let
\begin{equation*}
W_n^{\langle i\rangle}(y_0^{\ell^n-1}, u_0^{i-1}\mid u_i) :=
\frac1{2^{\ell^n-1}}\sum_{u_{i+1}^{\ell^n-1}}
W^{\ell^n}(y_0^{\ell^n-1}\mid (\hat{u}_0^{i-1},\, u_i,\, u_{i+1}^{\ell^n-1}) G_n).
\end{equation*}
In SC decoding, all variables, which consist of information variables and frozen variables, are decoded sequentially from $u_0$ to $u_{\ell^n-1}$.
The decoding result for $u_i$ of SC decoder is
\begin{equation*}
\hat{U}_i(y_0^{\ell^n-1},\hat{u}_0^{i-1}) = \begin{cases}
u_i, &\text{if } i\in \mathcal{F}\\
\mathop{\rm argmax}_{u_i \in \{0,1\}} 
W_n^{\langle i\rangle}(y_0^{\ell^n-1},\hat{u}_0^{i-1}\mid u_i),& \text{if } i\notin\mathcal{F}
\end{cases}
\end{equation*}
where $\hat{u}_0^{i-1}$ is a result of SC decoding for $u_0^{i-1}$.
When 
$W_n^{\langle i\rangle}(y_0^{\ell^n-1},\hat{u}_0^{i-1}\mid 0)= W_n^{\langle i\rangle}(y_0^{\ell^n-1},\hat{u}_0^{i-1}\mid 1)$
for $i\in \mathcal{F}$, the decoding result is determined as 0 and 1 with probability one half.

\section{Error Probabilities of Polar Codes}


We now consider an expected error probability of polar codes where values of $u_\mathcal{F}$ are uniformly chosen from $\{0,1\}^{|\mathcal{F}|}$.
Let $(\Omega = \{0,1\}^{\ell^n}\times\mathcal{Y}^{\ell^n}, 2^\Omega, P)$ be a probability space where $P$ is
\begin{equation*}
P((u_0^{\ell^n-1}, y_0^{\ell^n-1})) := \frac1{2^{\ell^n}} W^{\ell^n}(y_0^{\ell^n-1}\mid u_0^{\ell^n-1} G_n).
\end{equation*}
%
Let $\mathcal{B}_i$ and $\mathcal{A}_i$ be
\begin{align*}
\mathcal{B}_i &:= \{(u_0^{\ell^n-1}, y_0^{\ell^n-1})\in\Omega \mid
 \hat{u}_0^{i-1} = u_0^{i-1}, \hat{U}_i(\hat{u}_0^{i-1}, y_0^{\ell^n-1}) \ne u_i\}\\
\mathcal{A}_i &:= \{(u_0^{\ell^n-1}, y_0^{\ell^n-1})\in\Omega \mid \hat{U}_i(u_0^{i-1}, y_0^{\ell^n-1}) \ne u_i\}.
\end{align*}
From the definition, one obviously sees $\mathcal{B}_i \subseteq \mathcal{A}_i$.
An expected error probability of polar codes where values of $u_\mathcal{F}$ are uniformly chosen from $\{0,1\}^{|\mathcal{F}|}$
is $P(\bigcup_{i\in \mathcal{F}^c} \mathcal{B}_i)$.
One obtains an upper bound of the expected error probability as
\begin{equation}
P\left(\bigcup_{i\in \mathcal{F}^c} \mathcal{B}_i\right)
=\sum_{i\in \mathcal{F}^c} P\left(\mathcal{B}_i\right)
\le\sum_{i\in \mathcal{F}^c} P\left(\mathcal{A}_i\right)
=\sum_{i\in \mathcal{F}^c} P_e(W^{\langle i\rangle}_n)
=\sum_{i\in \mathcal{F}^c} P_e(W^{(b_1)\dotsm (b_n)})
\le\sum_{i\in \mathcal{F}^c} Z(W^{(b_1)\dotsm (b_n)})
\label{eq:pub}
\end{equation}
where $\ell$-ary expansion of $i$ is $(b_1\dotsm b_n)$.
The last equality is not proven here.
If one chooses $\mathcal{F}^c = \{ i \in \{0,\dotsc,\ell^n-1\} \mid Z(W^{(i)}) < 2^{-\ell^{\beta n}}\}$,
the expected error probability is smaller than $\ell^n2^{-\ell^{\beta n}}$.
From Theorem~\ref{thm:rop}, $|\mathcal{F}^c|/\ell^n$ is close to $I(W)$ as $n\to\infty$ for any $\beta\in(0,E(G))$.
Hence, the expected error probability is $o(2^{-\ell^{\beta n}})$ for any $\beta\in(0,E(G))$
while coding rate is fixed and smaller than $I(W)$.
On the other hand, one obtains
\begin{equation*}
P\left(\bigcup_{i\in \mathcal{F}^c} \mathcal{B}_i\right) \ge \max_{i\in\mathcal{F}^c}P(\mathcal{A}_i)
= \max_{i\in\mathcal{F}^c}P_e(W^{(b_1)\dotsm(b_n)})
\ge \max_{i\in\mathcal{F}^c}\frac12\left(1- \sqrt{1- Z(W^{(b_1)\dotsm(b_n)})^2}\right).
\end{equation*}
Hence, the expected error probability is $\omega(2^{-\ell^{\beta n}})$ for any $\beta>E(G)$.
From Propositions~\ref{prop:rdirect} and \ref{prop:rconv}, one obtains the following result~\cite{tanaka2010rre}.
\begin{theorem}
There exists a sequence of polar codes such that coding rate tends to $R<I(W)$ and
the error probability is
\begin{equation*}
o\left(2^{-2^{E(G)n + Q^{-1}(R/I(W))\sqrt{V(G)n}+f(n)}}\right)
\end{equation*}
for any $f(n)=o(\sqrt{n})$.
The error probability of any sequence of polar codes where coding rate tends to $R<I(W)$ is
\begin{equation*}
\omega\left(2^{-2^{E(G)n + Q^{-1}(R/I(W))\sqrt{V(G)n}+\epsilon\sqrt{n}}}\right)
\end{equation*}
for any $\epsilon>0$.
\end{theorem}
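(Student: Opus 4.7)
The plan is to combine the sandwich bounds on the block error probability derived just before the theorem,
\begin{equation*}
\max_{i\in\mathcal{F}^c}\frac{1}{2}\left(1-\sqrt{1-Z(W^{(b_1)\dotsm(b_n)})^2}\right)
\le P_e \le \sum_{i\in\mathcal{F}^c}Z(W^{(b_1)\dotsm(b_n)}),
\end{equation*}
with the Gaussian-refined concentration of the Bhattacharyya random process supplied by Propositions~\ref{prop:rdirect} and~\ref{prop:rconv}. Those propositions are applied to $S_n=D^{[B_n]}$, for which $\mathbb{E}[\log S_1]=E(G)\log\ell$ and $\mathbb{V}[\log S_1]=V(G)(\log\ell)^2$; the factor $\log\ell$ is absorbed into the inner double exponent so that the $E(G), V(G)$ normalisation of the theorem agrees with what the propositions deliver.

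\textbf{Direct part.} Set $t:=Q^{-1}(R/I(W))$. Given the target $f(n)=o(\sqrt n)$, I pick any $g(n)=o(\sqrt n)$ with $g(n)-f(n)\to\infty$ (for instance $g(n)=f(n)+\log n$) and declare the information set
\begin{equation*}
\mathcal{F}^c:=\bigl\{i\in\{0,\dotsc,\ell^n-1\}:\, Z(W^{(b_1)\dotsm(b_n)})<2^{-2^{E(G)n+t\sqrt{V(G)n}+g(n)}}\bigr\}.
\end{equation*}
Proposition~\ref{prop:rdirect} gives $|\mathcal{F}^c|/\ell^n\to I(W)Q(t)=R$, so the rate tends to $R$. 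Plugging into the upper bound produces $P_e\le \ell^n\cdot 2^{-2^{E(G)n+t\sqrt{V(G)n}+g(n)}}$, and because $g(n)-f(n)\to\infty$, the polynomial prefactor $\ell^n$ is killed by the double-exponential gap, yielding $P_e=o\bigl(2^{-2^{E(G)n+t\sqrt{V(G)n}+f(n)}}\bigr)$.

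\textbf{Converse part.} Fix $\epsilon>0$ and choose $\delta>0$ with $\delta\sqrt{V(G)}<\epsilon$. Proposition~\ref{prop:rconv}, applied at the shifted location $t+\delta$ with trivial correction $h\equiv 0$, gives
\begin{equation*}
\limsup_{n\to\infty}\frac{\bigl|\{i:\, Z(W^{(b_1)\dotsm(b_n)})<2^{-2^{E(G)n+(t+\delta)\sqrt{V(G)n}}}\}\bigr|}{\ell^n}\le I(W)Q(t+\delta)<R.
\end{equation*}
Since the coding rate tends to $R$, for all large $n$ the set $\mathcal{F}^c$ strictly exceeds this cardinality, so there exists $i^*\in\mathcal{F}^c$ with $Z(W^{(b_1^*)\dotsm(b_n^*)})\ge 2^{-2^{E(G)n+(t+\delta)\sqrt{V(G)n}}}$. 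Plugging $i^*$ into the max lower bound above and using the elementary inequality $\frac{1}{2}\bigl(1-\sqrt{1-Z^2}\bigr)\ge Z^2/4$ (from Lemma~\ref{lem:PeZ}, valid for $Z\in[0,1]$) yields $P_e\ge\frac{1}{4}\cdot 2^{-2^{E(G)n+(t+\delta)\sqrt{V(G)n}+1}}$. Because $\delta\sqrt{V(G)}<\epsilon$, the inner exponent on the right is eventually smaller than $E(G)n+t\sqrt{V(G)n}+\epsilon\sqrt n$ by an amount of order $\sqrt n$, which forces the ratio $P_e/2^{-2^{E(G)n+t\sqrt{V(G)n}+\epsilon\sqrt n}}$ to diverge, establishing the $\omega$-bound.

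\textbf{Main obstacle.} Once Propositions~\ref{prop:rdirect} and~\ref{prop:rconv} are in hand, the proof is essentially a matter of carefully arranging the slack quantities $g, \delta, \epsilon$ so that the achievability and converse exponents sandwich $E(G)n+Q^{-1}(R/I(W))\sqrt{V(G)n}$ to within an arbitrary $o(\sqrt n)$ gap. The most error-prone step is transporting the base-$2$ formulation of Chapter~\ref{chap:speed} (phrased in terms of $\mathbb{E}[\log S_1]$ and $\mathbb{V}[\log S_1]$) into the theorem's $E(G), V(G)$ form without mislaying a $\log\ell$ factor in the inner exponent, and ensuring that the converse choice $\delta\sqrt{V(G)}<\epsilon$ is strict enough to beat the $+1$ introduced by squaring $Z$ in Lemma~\ref{lem:PeZ}.
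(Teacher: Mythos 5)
Your proposal is correct and follows essentially the same route the paper intends: the paper's "proof" consists precisely of the sandwich bounds $\max_{i\in\mathcal{F}^c}\frac12(1-\sqrt{1-Z^2})\le P_e\le\sum_{i\in\mathcal{F}^c}Z$ combined with Propositions~\ref{prop:rdirect} and~\ref{prop:rconv} applied to $S_n=D^{[B_n]}$, and your bookkeeping (the extra slack $g(n)-f(n)\to\infty$ to absorb $\ell^n$, the shift $t+\delta$ with $\delta\sqrt{V(G)}<\epsilon$ to absorb the $+1$ from squaring $Z$) is exactly the detail the paper omits. Note only that your construction in the direct part depends on $f$, so what is actually proved is the ``for each $f$ there exists a sequence'' reading of the statement, which is the only tenable one.
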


We now consider asymptotic expected error probability of polar codes in a restricted class under maximum likelihood (ML) decoding.
Assume that the weight of $i$-th row of $G$ is equal to $D^{[i]}$.
Then, the weight of $i$-th row of $G^{\otimes n}$ is $\prod_{j=1}^n D_{b_j}$
where $(b_1\dotsm b_n)$ is an $\ell$-ary expansion of $i$.
Fraction of rows which satisfy $\sum_{j=1}^n\log_\ell D_{b_j} > nE(G)+t\sqrt{V(G)n}$
tends to $Q(t)$ from the central limit theorem.
Since the error probability of ML decoding is lower bounded by $P_e(W)^{D}$ where $D$ is the minimum distance of the code,
expected error probability of polar codes on ML decoding is
$\omega(2^{-\ell^{E(G) n+Q^{-1}(R)\sqrt{V(G)n}+\epsilon\sqrt{n}}})$ for any $\epsilon>0$.

\section{Complexities}\label{sec:complexity}
\subsection{Complexity of encoding}
Since encoding procedure of polar codes is multiplication of a matrix, the complexity of encoding is $O(\ell^{2n})$.
Further, since the matrix $G^{\otimes n}$ has recursive structure, the complexity is reduced like the fast Fourier transform.
Let $c$ denote the complexity of evaluation of $w_0^{\ell-1}G$.
Let $d$ denote the complexity of evaluation of $w_0^{\ell^n-1}R_{\ell,n}$ divided by $\ell^n$.
Let $\chi_E(n)$ denote the complexity of evaluation of $u_0^{\ell^n-1}G_n$.
Since $G_n = (I_{\ell^{n-1}}\otimes G)R_{\ell,n}(I_\ell\otimes G_{n-1})$,
one obtains
$\chi_E(n) = \ell^{n-1}c +\ell^nd + \ell \chi_E(n-1)$.
Hence, $\chi_E(n) = O(n\ell^n)$.

\subsection{Complexity of decoding}
SC decoding can be described as
\begin{equation*}
\hat{U}_i(\hat{u}_0^{i-1}, y_0^{\ell^n-1}) = \begin{cases}
u_i, &\text{if } i\in \mathcal{F}\\
0, & \text{if } i\notin\mathcal{F},~L_n^{\langle i\rangle}(y_0^{\ell^n-1}, \hat{u}_0^{i-1}) > 0\\
1, & \text{if } i\notin\mathcal{F},~L_n^{\langle i\rangle}(y_0^{\ell^n-1}, \hat{u}_0^{i-1}) < 0
\end{cases}
\end{equation*}
where
\begin{equation*}
L_n^{\langle i\rangle}(y_0^{\ell^n-1}, \hat{u}_0^{i-1})
:=\log\frac{W_n^{\langle i\rangle}(y_0^{\ell^n-1},\hat{u}_0^{i-1}\mid 0)}{W_n^{\langle i\rangle}(y_0^{\ell^n-1},\hat{u}_0^{i-1}\mid 1)}
\end{equation*}
is the log likelihood ratio (LLR) of $u_i$.
Let
$\mathcal{G}(l_0^{\ell-1}) := r_0^{\ell-1}$ for $l_0^{\ell-1}\in\mathbb{R}^\ell$
where $r_i$ denotes the LLR of $u_i$ given $u_0^{i-1}$ when an LLR of $u_0^{\ell-1}G$ is $l_0^{\ell-1}$.
Let $\chi_D(n)$ denote the number of evaluation of $\mathcal{G}$ in calculation of $\{L_n^{\langle i\rangle}\}_{i\in\{0,\dotsc,\ell^n-1\}}$.
Since $U_{\ell m+i}\to G(U_{\ell m}^{\ell m+\ell -1})\to Y_0^{\ell^n-1}$ for all $i\in\{0,\dotsc,\ell-1\}$,
it holds that
$\chi_D(n) = \ell^{n-1} + \ell \chi_D(n-1)$.
Hence, $\chi_D(n) = O(n\ell^n)$.

\subsection{Complexity of construction}

Construction of a polar code is equivalent to selection of a set $\mathcal{F}$ of frozen variables.
In~\cite{5075875}, Ar{\i}kan proposed a criterion on which $i$ with small $Z(W^{\langle i\rangle}_n)$ are chosen as information variables
in order to minimize the upper bound~\eqref{eq:pub}.
However, unless $W$ is the binary erasure channel (BEC), the complexity of the evaluation of
$Z(W^{\langle i\rangle}_n)$ is exponential in the blocklength.
In order to avoid the high cost of computation, he also proposed a Monte-Carlo method
which estimates $Z(W^{\langle i\rangle}_n)$ by numerical simulations.
Ar{\i}kan also proposed a heuristic method in which a B-DMC $W$ is regarded as the BEC of erasure probability $1-I(W)$~\cite{4542778}.
However, polar codes constructed by these methods do not provably achieve symmetric capacity.
In this chapter, we describe a novel construction method for any symmetric B-DMC
whose complexity is linear in the blocklength~\cite{5205857},~\cite{5166430}.
Polar codes constructed by the method provably achieve symmetric capacity.
The method is based on \textit{density evolution}, which has been used for evaluation of
the large blocklength limit of the bit error probability of LDPC codes.

\begin{figure}[t]
\psfrag{0}{\hspace{-1em}\tt 000}
\psfrag{1}{\hspace{-1em}\tt 001}
\psfrag{2}{\hspace{-1em}\tt 010}
\psfrag{3}{\hspace{-1em}\tt 011}
\psfrag{4}{\hspace{-1em}\tt 100}
\psfrag{5}{\hspace{-1em}\tt 101}
\psfrag{6}{\hspace{-1em}\tt 110}
\psfrag{7}{\hspace{-1em}\tt 111}
\includegraphics[width=0.45\hsize]{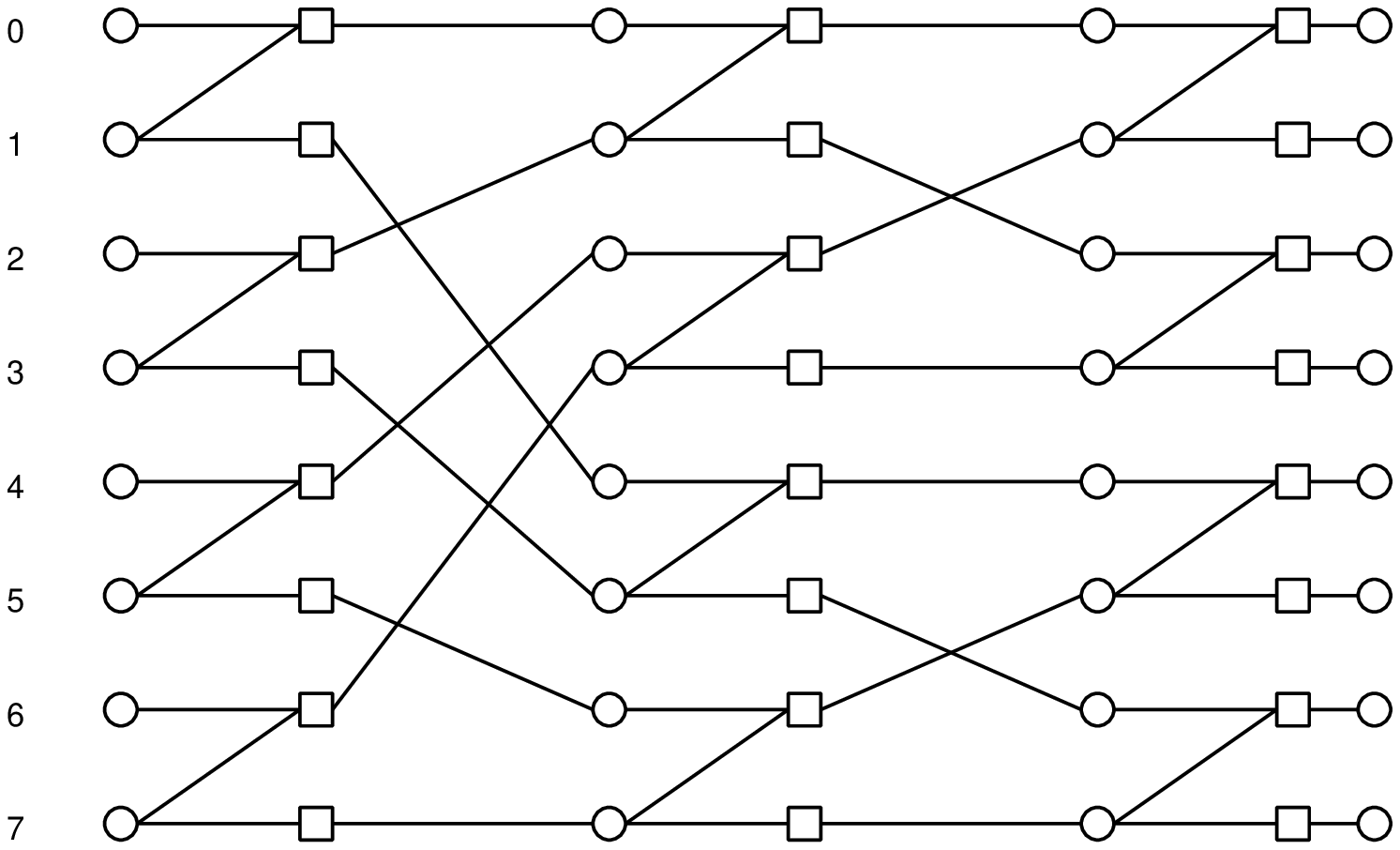}\hspace{2em}
\includegraphics[width=0.45\hsize]{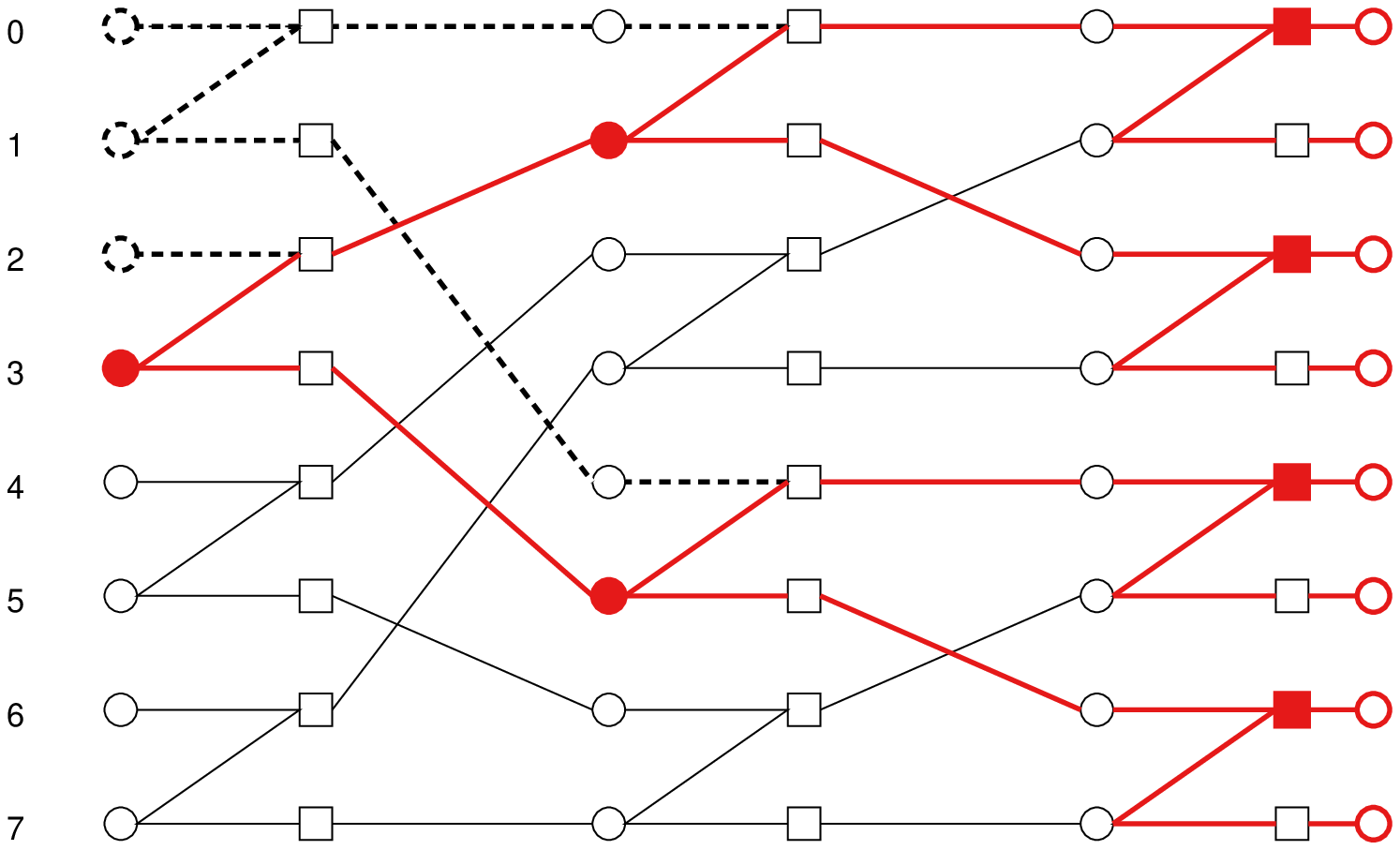}
\caption{Left: Factor graph of $G_3$. Right: Decoding graph of $u_3$.}
\label{fig:factor}
\end{figure}

\section{Factor Graphs, Belief Propagation and Density Evolution}
Factor graphs, belief propagation (BP), and density evolution are important tools used in certain areas.
The book of Richardson and Urbanke is a good reference~\cite{RiU05/LTHC}. 
A factor graph is a graph which represents a probability distribution.
The left panel of Figure~\ref{fig:factor} shows the factor graph of $BG^{\otimes 3}$ when $G$ is the $2\times 2$ matrix~\eqref{eq:2x2}.
Belief propagation is an efficient algorithm for calculation of marginal probability distributions on a tree factor graph.
SC decoding can be regarded as BP decoding on a tree graph as in the right panel of Figure~\ref{fig:factor}.

Density evolution is a method which recursively evaluates probability distributions of messages on a tree graph.
Let $W$ be a symmetric B-DMC\@.
There exists a probability density function $\mathsf{a}_W$ on $(-\infty,+\infty]$ of an LLR when 0 is transmitted,
which is linear combination of the Dirac delta function.
When $W$ is the BEC of erasure probability $\epsilon$, $\mathsf{a}_W = (1-\epsilon)\delta_\infty + \epsilon \delta_0$
where $\delta_x$ is the Dirac delta function centered at $x$.
When probability density functions of input messages of variable nodes (respectively check nodes) are $\mathsf{a}$ and $\mathsf{b}$,
the probability density function of the output message is denoted by
$\mathsf{a} \varoast \mathsf{b}$ (respectively $\mathsf{a} \boxast \mathsf{b}$).
Details of density evolution is written in~\cite{RiU05/LTHC}.

\section{Construction using Density Evolution}
In this section, for simplicity, we assume that $G$ is the $2\times 2$ matrix~\eqref{eq:2x2}. 
We consider using density evolution for evaluation of $P_e(W_n^{\langle i\rangle})$ for $i\in\{0,\dotsc,\ell^n-1\}$.
In fact, we can evaluate the probability density function of an LLR of $W^{\langle i\rangle}_n$ by density evolution~\cite{5205857}.
\begin{theorem}
For $n\ge 1$,
\begin{align*}
\mathsf{a}_{W_{n}^{\langle i\rangle}} &= \mathsf{a}_{W_{n-1}^{\langle (i-1)/2\rangle}}\varoast\mathsf{a}_{W_{n-1}^{\langle (i-1)/2\rangle}},&
\text{if $i$ is odd} \\
\mathsf{a}_{W_{n}^{\langle i\rangle}} &= \mathsf{a}_{W_{n-1}^{\langle i/2\rangle}}\boxast\mathsf{a}_{W_{n-1}^{\langle i/2\rangle}},&
\text{if $i$ is even}.
\end{align*}
\end{theorem}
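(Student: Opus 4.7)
The plan is to first derive a channel-combining identity that expresses $W_n^{\langle 2j\rangle}$ and $W_n^{\langle 2j+1\rangle}$ in terms of two independent copies of $W_{n-1}^{\langle j\rangle}$, and then to translate this identity into the claimed convolution rules on LLR densities using the symmetry of $W$.

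For the channel-level identity, I would exploit the factorization $G_n = (I_{2^{n-1}}\otimes G)\,R_{2,n}\,(I_2\otimes G_{n-1})$. Setting $a_k := u_{2k}\oplus u_{2k+1}$ and $b_k := u_{2k+1}$, one checks that $u_0^{2^n-1}G_n = (a_0^{2^{n-1}-1}G_{n-1},\; b_0^{2^{n-1}-1}G_{n-1})$, so that the length-$2^n$ codeword splits into two independent $G_{n-1}$-codewords, and the channel factorizes as $W^{2^n}(y_0^{2^n-1}\mid uG_n) = W^{2^{n-1}}(y^{\mathrm{I}}\mid aG_{n-1})\,W^{2^{n-1}}(y^{\mathrm{II}}\mid bG_{n-1})$, where $y^{\mathrm{I}}$ and $y^{\mathrm{II}}$ denote the two halves of $y_0^{2^n-1}$. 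Re-parameterizing the sum over the future bits $u_{2j+1}^{2^n-1}$ in the definition of $W_n^{\langle i\rangle}$ as a sum over the corresponding future bits of $a$ and $b$, one obtains
\begin{align*}
W_n^{\langle 2j\rangle}(y_0^{2^n-1}, u_0^{2j-1}\mid u_{2j}) &= \tfrac{1}{2}\sum_{b\in\{0,1\}} W_{n-1}^{\langle j\rangle}(y^{\mathrm{I}}, a_0^{j-1}\mid u_{2j}\oplus b)\,W_{n-1}^{\langle j\rangle}(y^{\mathrm{II}}, b_0^{j-1}\mid b),\\
W_n^{\langle 2j+1\rangle}(y_0^{2^n-1}, u_0^{2j}\mid u_{2j+1}) &= \tfrac{1}{2}\,W_{n-1}^{\langle j\rangle}(y^{\mathrm{I}}, a_0^{j-1}\mid u_{2j}\oplus u_{2j+1})\,W_{n-1}^{\langle j\rangle}(y^{\mathrm{II}}, b_0^{j-1}\mid u_{2j+1}),
\end{align*}
which is exactly Ar{\i}kan's basic $W\to(W^-,W^+)$ transform applied to $W_{n-1}^{\langle j\rangle}$.

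Given this channel identity, the density identity follows by computing LLRs under the all-zero transmission, which is without loss of generality because $W$ (and hence each $W_{n-1}^{\langle j\rangle}$) is symmetric. Under all-zero transmission, $y^{\mathrm{I}}$ and $y^{\mathrm{II}}$ are conditionally independent, so the two LLRs of $W_{n-1}^{\langle j\rangle}$ derived from them, say $L^{\mathrm{I}}$ and $L^{\mathrm{II}}$, are independent and each distributed according to $\mathsf{a}_{W_{n-1}^{\langle j\rangle}}$. For $i=2j$, a short computation from the first identity gives the LLR $L_n^{\langle 2j\rangle} = 2\,\mathrm{atanh}(\tanh(L^{\mathrm{I}}/2)\tanh(L^{\mathrm{II}}/2))$, whose density is $\mathsf{a}_{W_{n-1}^{\langle j\rangle}}\boxast\mathsf{a}_{W_{n-1}^{\langle j\rangle}}$ by definition of the check-node convolution. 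For $i=2j+1$ conditioned on $u_{2j}=0$, the LLR reduces to $L^{\mathrm{I}}+L^{\mathrm{II}}$, whose density is $\mathsf{a}_{W_{n-1}^{\langle j\rangle}}\varoast\mathsf{a}_{W_{n-1}^{\langle j\rangle}}$ by definition of the variable-node convolution.

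The main obstacle is the bookkeeping through the permutation $R_{2,n}$: one must track how the known past $u_0^{2j-1}$ splits into the pasts $a_0^{j-1}$ and $b_0^{j-1}$ of the two sub-encoders, and verify that the future-bit summation in the definition of $W_n^{\langle i\rangle}$ factors cleanly into the corresponding future-bit sums of the two sub-channels. Once this reindexing step is carefully checked, the symmetry argument and the elementary LLR identities complete the proof.
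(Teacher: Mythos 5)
The paper states this theorem without proof (it is imported from the cited reference~\cite{5205857}), so there is no in-text argument to compare against; your derivation is the standard one and it is correct. The factorization $G_n=(I_{2^{n-1}}\otimes G)R_{2,n}(I_2\otimes G_{n-1})$ with $a_k=u_{2k}\oplus u_{2k+1}$, $b_k=u_{2k+1}$ does give $u_0^{2^n-1}G_n=(a_0^{2^{n-1}-1}G_{n-1},\,b_0^{2^{n-1}-1}G_{n-1})$, the future-bit sum factors exactly as you claim (the powers of $2$ combine to the $1/2$ prefactor), and the two displayed identities are precisely Ar{\i}kan's single-step transform applied to $W_{n-1}^{\langle j\rangle}$; the LLR computations for the two cases then match the definitions of $\boxast$ and $\varoast$. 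The only step you assert rather than prove is the all-zero-codeword reduction: that symmetry of $W$ is inherited by the subchannels $W_{n-1}^{\langle j\rangle}$, so that the joint law of the two incoming LLRs under the true (uniformly random) past and future bits coincides, up to sign flips that leave the check-node and variable-node combining maps invariant, with their law under all-zero transmission, where they are i.i.d.\ with density $\mathsf{a}_{W_{n-1}^{\langle j\rangle}}$. This is the standard density-evolution argument and is easily supplied, so I regard the proposal as complete modulo that routine verification.
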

\noindent
$P_e(W_n^{\langle i\rangle})$ is obtained by an appropriate integration of $\mathsf{a}_{W_n^{\langle i\rangle}}$.

Let us consider the number $\chi_C(n)$ of operations $\varoast$ and $\boxast$ in
the calculation of $\{\mathsf{a}_{W_n^{\langle i\rangle}}\}_{i=0,\dotsc,2^n-1}$.
In order to calculate $\{\mathsf{a}_{W_n^{\langle i\rangle}}\}_{i=0,\dotsc,2^n-1}$, 
calculation of $\{\mathsf{a}_{W_{n-1}^{\langle i\rangle}}\}_{i=0,\dotsc,2^{n-1}-1}$ is required.
Further, $2^n$ operations of $\varoast$ and $\boxast$ are necessary.
Hence,
\begin{equation*}
\chi_C(n) = 2^n + \chi_C(n-1).
\end{equation*}
This implies $\chi_C(n) = O(2^n)$ meaning that it is proportional to the blocklength.
It is known that the complexity of selection of the $s$ smallest elements from a set of size $t$ is $O(t)$.
Hence, 
the complexity of construction is linear in the blocklength
if we assume that the complexity of the operations $\varoast$ and $\boxast$ is constant.
However, the required precision increases as the blocklength increases.
When $W$ is the binary symmetric channel (BSC), the number of mass points grows exponentially in the blocklength.
It has not been well known how quantization errors affect performance of resulting codes.

\section{Numerical Calculation and Simulation}
In this section, error probability of polar codes constructed by using density evolution and 
error probability of polar codes constructed by Ar{\i}kan's heuristic method~\cite{4542778}, in which
$W$ is regarded as BEC of the same capacity are compared.
Figure~\ref{fig:comp} shows results for the BSC with crossover probability 0.11 and the blocklength is 4096.
The capacity of the BSC is 0.5.
The error probabilities of polar codes which are constructed by using density evolution are much smaller than
the error probabilities of polar codes which are constructed by the heuristic method.
This result implies that information variables should be chosen by taking into account the channel, rather than its capacity only.
This can easily be confirmed via the simplest case with $n=2$:
The error probability $P_e(W^{\langle 1\rangle}_2)$ is less than, equal to, and larger than $P_e(W^{\langle 2\rangle}_2)$
when the channel is the BEC, BSC,
and binary additive white Gaussian noise channel (BAWGNC), respectively,
irrespective of the channel parameters.
In~\cite{hassami2009ccp}, \cite{korada2009thesis}, the authors show that
polar codes and SC decoding do not achieve symmetric capacity universally.

\begin{figure}[htb]
\psfrag{x}{\hspace{-3em}Coding rate}
\psfrag{y}{\hspace{-5em}The error probability}
\includegraphics[width=0.8\hsize]{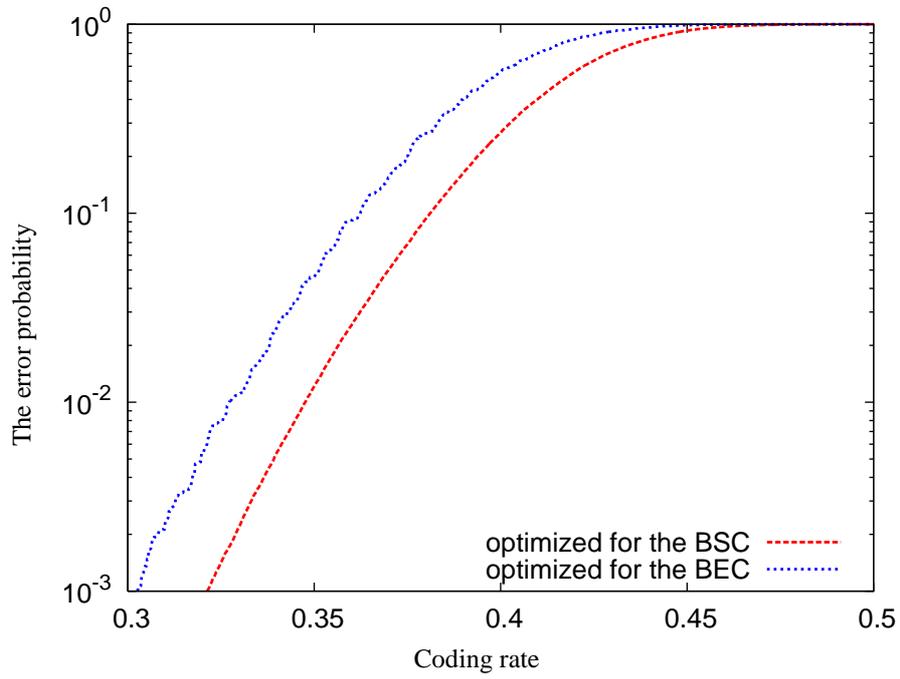}
\caption{Comparison of the error probability of polar codes constructed by different methods.
The bottom curve is the result of construction using density evolution.
The top curve is the result of construction using the heuristic method of Ar{\i}kan~\cite{4542778}.
The channel is the BSC of crossover probability 0.11.
The capacity is 0.5.
The blocklength is 4096.
}
\label{fig:comp}
\end{figure}

\chapter{Channel Polarization of $q$-ary DMC by Arbitrary Kernel}\label{chap:qary}
\section{Introduction}
{\c S}a{\c s}o{\u g}lu, Telatar and Ar{\i}kan considered channel polarization of $q$-ary channels~\cite{sasoglu2009pad}.
They regarded $\mathcal{X}$ as $\mathbb{Z}/q\mathbb{Z}$ and assumed that the size $\ell$ of channel transform is 2.
They showed that the channel polarization phenomenon occurs on the $2\times 2$ matrix~\eqref{eq:2x2} when $q$ is prime,
and that using randomized permutations, the channel polarization phenomenon occurs for any $q$.
In this chapter, we consider channel polarization for arbitrary $q$ and arbitrary channel transform~\cite{mori2010cpq}.

\section{Preliminaries}
In this chapter, we assume that $|\mathcal{X}|=q$ and that the base of logarithm is $q$ unless otherwise stated.
Let $\mathrm{e}$ denote the base of the natural logarithm.
\begin{definition}
The symmetric capacity of a $q$-ary input channel $W:\mathcal{X}\to\mathcal{Y}$ is defined as
\begin{equation*}
I(W) := \sum_{x\in\mathcal{X}}\sum_{y\in\mathcal{Y}} \frac1q W(y\mid x)
\log\frac{W(y\mid x)}{\frac1q \sum_{x'\in\mathcal{X}}W(y\mid x')}.
\end{equation*}
Note that $I(W)\in[0,1]$.
\end{definition}
\begin{definition}
Let $\mathcal{D}_x:=\{y\in\mathcal{Y}\mid W(y\mid x) > W(y\mid x'), \forall x'\in\mathcal{X}, x'\ne x\}$.
The error probability of $W$ is defined as
\begin{equation*}
P_e(W) := \frac1q \sum_{x\in\mathcal{X}} \sum_{y\in\mathcal{D}_x^c} W(y\mid x).
\end{equation*}
\end{definition}
\begin{definition}
The Bhattacharyya parameter of $W$ is defined as
\begin{equation*}
Z(W) := \frac1{q(q-1)} \sum_{\substack{x\in\mathcal{X}, x'\in\mathcal{X},\\ x\ne x'}} Z_{x,x'}(W)
\end{equation*}
where Bhattacharyya parameter between $x$ and $x'$ is defined as
\begin{equation*}
Z_{x,x'}(W) := \sum_{y\in\mathcal{Y}} \sqrt{W(y\mid x)W(y\mid x')}.
\end{equation*}
Note that $Z(W)\in[0,1]$ and that $Z_{x,x'}(W)\in[0,1]$.
\end{definition}
\begin{lemma}\label{lem:tri}
For any $x\in\mathcal{X}$, $x'\in\mathcal{X}$ and $x''\in\mathcal{X}$,
\begin{equation*}
\sqrt{1-Z_{x,x'}}\le \sqrt{1-Z_{x,x''}} + \sqrt{1-Z_{x'',x'}}.
\end{equation*}
\end{lemma}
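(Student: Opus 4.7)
The plan is to recognize that $\sqrt{1-Z_{x,x'}}$ is, up to a factor of $\sqrt{2}$, the Hellinger distance between the output distributions $W(\cdot\mid x)$ and $W(\cdot\mid x')$, and therefore inherits the triangle inequality from the $\ell^2$ norm.

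First I would introduce, for each $x\in\mathcal{X}$, the vector $\phi_x\in\ell^2(\mathcal{Y})$ with components $\phi_x(y) := \sqrt{W(y\mid x)}$. Because $W(\cdot\mid x)$ is a probability distribution, each $\phi_x$ is a unit vector: $\|\phi_x\|_2^2 = \sum_{y\in\mathcal{Y}} W(y\mid x) = 1$. Expanding the squared $\ell^2$ distance gives
\begin{equation*}
\|\phi_x - \phi_{x'}\|_2^2 = \sum_{y\in\mathcal{Y}}\bigl(\sqrt{W(y\mid x)} - \sqrt{W(y\mid x')}\bigr)^2 = 2 - 2\sum_{y\in\mathcal{Y}} \sqrt{W(y\mid x)W(y\mid x')} = 2\bigl(1 - Z_{x,x'}(W)\bigr),
\end{equation*}
so $\|\phi_x-\phi_{x'}\|_2 = \sqrt{2}\,\sqrt{1-Z_{x,x'}(W)}$.

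Next I would invoke the ordinary triangle inequality for the $\ell^2$ norm applied to the triple $(\phi_x,\phi_{x''},\phi_{x'})$, namely $\|\phi_x-\phi_{x'}\|_2 \le \|\phi_x-\phi_{x''}\|_2 + \|\phi_{x''}-\phi_{x'}\|_2$, and substitute the identity above for each of the three terms. A common factor of $\sqrt{2}$ can then be canceled from both sides, yielding exactly the claimed inequality. Observe that positivity of all three quantities (since $Z_{x,x'}\in[0,1]$) ensures that taking square roots is unambiguous, and no conditions on $|\mathcal{X}|$ or symmetry of $W$ are needed.

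There is essentially no obstacle here beyond spotting the Hellinger distance interpretation; the algebraic verification that $\|\phi_x-\phi_{x'}\|_2^2 = 2(1-Z_{x,x'})$ is the only computation, and it is a one-line expansion using $\|\phi_x\|_2 = 1$. The whole argument will fit in a few lines once this identification is made.
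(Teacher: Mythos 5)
Your proposal is correct and is exactly the paper's argument: the paper likewise writes $\sqrt{1-Z_{x,x'}}=\sqrt{\tfrac12\sum_y(\sqrt{W(y\mid x)}-\sqrt{W(y\mid x')})^2}$ and invokes the triangle inequality for Euclidean distance. Your version just spells out the Hellinger-distance identification and the cancellation of the $\sqrt{2}$ factor a bit more explicitly.
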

\begin{proof}
The inequality follows from the triangle inequality of Euclidean distance since
\begin{equation*}
\sqrt{1-Z_{x,x'}} = \sqrt{\frac12\sum_{y\in\mathcal{Y}}\left(\sqrt{W(y\mid x)} - \sqrt{W(y\mid x')}\right)^2}.
\end{equation*}
\end{proof}
\begin{lemma}
\begin{equation*}
P_e(W)\le (q-1)Z(W)
\end{equation*}
\end{lemma}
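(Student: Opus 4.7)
The plan is to unfold both sides in terms of the transition probabilities and exhibit a pointwise bound. Writing out definitions, the desired inequality
\[
\frac{1}{q}\sum_{x\in\mathcal{X}} \sum_{y\in\mathcal{D}_x^c} W(y\mid x) \;\le\; (q-1)\cdot\frac{1}{q(q-1)}\sum_{x\ne x'} Z_{x,x'}(W)
\]
is equivalent to
\[
\sum_{x\in\mathcal{X}} \sum_{y\in\mathcal{D}_x^c} W(y\mid x) \;\le\; \sum_{x\ne x'}\sum_{y\in\mathcal{Y}} \sqrt{W(y\mid x)W(y\mid x')},
\]
so it suffices to show a pointwise-in-$x$ version of this bound and then sum.

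The key observation is the following: if $y\in\mathcal{D}_x^c$, then by definition of $\mathcal{D}_x$ there exists some $x'\ne x$ with $W(y\mid x')\ge W(y\mid x)$. For such an $x'$, one has $W(y\mid x)\le \sqrt{W(y\mid x)W(y\mid x')}$, and in particular
\[
W(y\mid x) \;\le\; \sum_{x'\ne x} \sqrt{W(y\mid x)\,W(y\mid x')}
\]
since all summands are nonnegative. This is the one nontrivial estimate; everything else is bookkeeping.

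Summing this inequality over $y\in\mathcal{D}_x^c$ and then enlarging the $y$-range from $\mathcal{D}_x^c$ to all of $\mathcal{Y}$ (again using nonnegativity) gives
\[
\sum_{y\in\mathcal{D}_x^c} W(y\mid x) \;\le\; \sum_{y\in\mathcal{Y}} \sum_{x'\ne x} \sqrt{W(y\mid x)W(y\mid x')} \;=\; \sum_{x'\ne x} Z_{x,x'}(W).
\]
Summing over $x\in\mathcal{X}$ and dividing by $q$ yields $P_e(W)\le \frac{1}{q}\sum_{x\ne x'}Z_{x,x'}(W)=(q-1)Z(W)$, as desired.

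There is no real obstacle here; the only thing to watch is the normalization constant in the definition of $Z(W)$, which has the factor $\frac{1}{q(q-1)}$ rather than $\frac{1}{q}$ in the binary case, and this is exactly what produces the extra factor $(q-1)$ on the right-hand side. Lemma~\ref{lem:tri} is not needed for this bound; it is presumably used later in the chapter.
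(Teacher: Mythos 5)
Your proof is correct: the pointwise bound $W(y\mid x)\le\sqrt{W(y\mid x)W(y\mid x')}$ for the witnessing $x'$ with $W(y\mid x')\ge W(y\mid x)$, followed by enlarging the sums, is exactly the standard argument, and your bookkeeping of the $\frac{1}{q(q-1)}$ normalization is right. The paper states this lemma without proof, so there is nothing to compare against; your argument fills the gap correctly.
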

\begin{lemma}\label{lem:IZ}\cite{sasoglu2009pad}
\begin{align*}
I(W)&\ge\log\frac{q}{1+(q-1)Z(W)}\\
I(W)&\le\log(q/2)+(\log2)\sqrt{1-Z(W)^2}\\
I(W)&\le 2(q-1)(\log\,\mathrm{e})\sqrt{1-Z(W)^2}.
\end{align*}
\end{lemma}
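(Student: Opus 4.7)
The plan is to handle the three inequalities separately.

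The first, $I(W)\ge\log(q/(1+(q-1)Z(W)))$, I would derive immediately from the cutoff-rate bound of Proposition~\ref{prop:cutoff} applied to uniform input $X$. Expanding the squared inner sum in the definition of $R_0(X;Y)$ produces $q$ diagonal terms, each equal to $1$, and $q(q-1)$ off-diagonal terms whose sum equals $q(q-1)Z(W)$ by the definition of $Z$; this gives $R_0(X;Y) = \log(q/(1+(q-1)Z(W)))$, and Proposition~\ref{prop:cutoff} finishes the proof.

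For the second inequality I would introduce an auxiliary unordered pair $T=\{X,X'\}$, with $X'$ drawn uniformly from $\mathcal{X}\setminus\{X\}$ independently of $Y$. Then $T$ is uniform over the $\binom{q}{2}$ unordered pairs, so $I(X;T) = \log\binom{q}{2}-\log(q-1) = \log(q/2)$; moreover, conditional on $T=\{a,b\}$, $X$ is uniform on $\{a,b\}$, so the induced channel is binary with Bhattacharyya parameter $Z_{a,b}$. Combining the chain rule $I(X;Y)\le I(X;T)+I(X;Y|T)$ with the binary bound $I_{\mathrm{bin}}^2+Z_{\mathrm{bin}}^2\le 1$ from Lemma~\ref{lem:bIZ} (rescaled by $\log_q 2 = \log 2$ for the change of base), and then applying concavity of $z\mapsto\sqrt{1-z^2}$ together with $E_T[Z_{a,b}]=Z(W)$, yields the claim.

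For the third inequality I would start from $I(W)=\frac{1}{q}\sum_x D(W(\cdot|x)\,\|\,\bar W)$ with $\bar W = \frac{1}{q}\sum_{x'}W(\cdot|x')$, and use $D(P\|Q)\le(\log e)\chi^2(P,Q)$ (from $\ln t\le t-1$). Expanding $\chi^2(W(\cdot|x),\bar W)=\sum_y W(y|x)^2/\bar W(y)-1$, summing over $x$, and using $\sum_x W(y|x)^2 = q^2\bar W(y)^2 - \sum_{x\ne x'} W(y|x)W(y|x')$ rewrites the total as $q(q-1)-\sum_{x\ne x'}\sum_y W(y|x)W(y|x')/\bar W(y)$. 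Cauchy--Schwarz with weights $\sqrt{\bar W(y)}$ gives $\sum_y W(y|x)W(y|x')/\bar W(y)\ge Z_{x,x'}^2$, and QM--AM on $\{Z_{x,x'}\}_{x\ne x'}$ gives $\sum_{x\ne x'}Z_{x,x'}^2\ge q(q-1)Z(W)^2$, so $I(W)\le(q-1)(\log e)(1-Z(W)^2)$. The claimed bound then follows from $1-Z^2\le\sqrt{1-Z^2}\le 2\sqrt{1-Z^2}$ on $[0,1]$.

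The main subtlety is the second inequality, where the $\log(q/2)$ offset forces a reduction to the binary case through an auxiliary pair; the construction $T=\{X,X'\}$ with $X'$ uniform on $\mathcal{X}\setminus\{X\}$ keeps the conditional distribution of $X$ given $T$ exactly uniform on two elements, avoiding any parity constraint on $q$ that a partition-based approach would introduce.
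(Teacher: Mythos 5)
The thesis states this lemma without proof, simply citing the reference, so there is no in-paper argument to compare against; judged on its own, your proposal is correct and complete, and it uses only tools already available in the thesis (Proposition~\ref{prop:cutoff}, the chain rule, and Lemma~\ref{lem:bIZ}). For the first bound, the expansion of the squared sum in $R_0(X;Y)$ with uniform input indeed gives $q+\sum_{x\ne x'}Z_{x,x'}(W)=q+q(q-1)Z(W)$, hence $R_0=\log\bigl(q/(1+(q-1)Z(W))\bigr)$. For the second, the unordered-pair construction is the right reduction: $I(X;T)=\log(q/2)$, the conditional channel given $T=\{a,b\}$ is binary with Bhattacharyya parameter $Z_{a,b}(W)$, and the base change $\log_q 2$ together with concavity of $z\mapsto\sqrt{1-z^2}$ and $\mathbb{E}_T[Z_{a,b}]=Z(W)$ closes the argument; you are right that the unordered pair avoids any parity issue. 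For the third, the chain $D\le(\log\mathrm{e})\chi^2$, the identity $\sum_x W(y\mid x)^2=q^2\bar W(y)^2-\sum_{x\ne x'}W(y\mid x)W(y\mid x')$, Cauchy--Schwarz giving $\sum_y W(y\mid x)W(y\mid x')/\bar W(y)\ge Z_{x,x'}^2$ (terms with $\bar W(y)=0$ being vacuous), and QM--AM yield $I(W)\le(q-1)(\log\mathrm{e})(1-Z(W)^2)$, which is in fact a factor of two stronger than the stated bound after using $1-Z^2\le\sqrt{1-Z^2}$ on $[0,1]$. No gaps.
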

\begin{definition}
The maximum and the minimum of the Bhattacharyya parameters between two alphabets are defined as
\begin{align*}
Z_\text{\rm max}(W) &:= \max_{x\in\mathcal{X},x'\in\mathcal{X},x\ne x'} Z_{x,x'}(W)\\
Z_\text{\rm min}(W) &:= \min_{x\in\mathcal{X},x'\in\mathcal{X}} Z_{x,x'}(W).
\end{align*}
Let $\sigma:\mathcal{X}\to\mathcal{X}$ be a permutation.
Let $\sigma^i$ denote the $i$-th power of $\sigma$.
The average Bhattacharyya parameter of $W$ between $x$ and $x'$ with respect to $\sigma$
is defined as the average of $Z_{z,z'}(W)$ over
the subset $\{(z,z')=(\sigma^i(x),\sigma^i(x'))\in\mathcal{X}^2\mid i=0,1,\ldots,q!-1\}$
\begin{align*}
Z_{x,x'}^{\sigma}(W) &:= \frac1{q!} \sum_{i=0}^{q!-1} Z_{\sigma^i(x),\sigma^i(x')}(W).
\end{align*}
\end{definition}

\section{Channel Polarization on $q$-ary Channels}
We consider channel transform using a one-to-one onto mapping $g: \mathcal{X}^\ell\to\mathcal{X}^\ell$,
called a kernel.
\begin{definition}
Let D-MC $W:\mathcal{X}\to\mathcal{Y}$.
Then D-MC $W^\ell :\mathcal{X}^\ell\to\mathcal{Y}^\ell$,
$W^{(i)}:\mathcal{X}\to\mathcal{Y}^\ell\times\mathcal{X}^{i-1}$, and $W_{u_0^{i-1}}^{(i)}: \mathcal{X}\to\mathcal{Y}^\ell$ are defined as
\begin{align*}
W^\ell(y_0^{\ell-1}\mid x_0^{\ell-1}) &:=\prod_{i=0}^{\ell-1} W(y_i\mid x_i)\\
W^{(i)}(y_0^{{\ell-1}},u_0^{i-1}\mid u_i) &:= \frac1{q^{\ell-1}}
\sum_{u_{i+1}^{\ell-1}} W^{\ell}(y_0^{{\ell-1}}\mid g(u_0^{{\ell-1}}))\\
W_{u_0^{i-1}}^{(i)}(y_0^{{\ell-1}}\mid u_i) &:= \frac1{q^{{\ell}-i-1}}
\sum_{u_{i+1}^{\ell-1}} W^{{\ell}}(y_0^{{\ell-1}}\mid g(u_0^{{\ell-1}})).
\end{align*}
\end{definition}

Assume that $\{B_i\}_{i\in\mathbb{N}}$ is a sequence of independent uniform random variables taking values on $\{0,\dotsc,\ell-1\}$.
In the probabilistic channel transform $W\to W^{(B_i)}$,
expectation of the symmetric capacity is invariant due to the chain rule for mutual information.
The following lemma is a consequence of the martingale convergence theorem~\cite{billingsley1995probability}.
\begin{lemma}\label{lem:mc}
There exists a random variable $I_\infty$ such that $I(W^{(B_1)\dotsm(B_n)})$ converges to $I_\infty$
almost surely as $n\to\infty$.
\end{lemma}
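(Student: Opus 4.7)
The plan is to mirror the argument used for Lemma~\ref{lem:asc}, showing that $\{I_n\}_{n\in\mathbb{N}}$ with $I_n := I(W^{(B_1)\dotsm(B_n)})$ is a bounded martingale with respect to the natural filtration generated by $B_1, B_2, \dotsc$, and then invoke the martingale convergence theorem. Boundedness is immediate from the definition since, with the base of logarithm taken to be $q$, $I(W)\in[0,1]$ for every $q$-ary input channel.

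The martingale property reduces to the single-step identity
\begin{equation*}
\frac{1}{\ell}\sum_{i=0}^{\ell-1} I(W^{(i)}) \;=\; I(W),
\end{equation*}
so the first step is to establish this identity. I would introduce random variables $U_0^{\ell-1}$ uniform on $\mathcal{X}^\ell$ and $Y_0^{\ell-1}$ given by sending $g(U_0^{\ell-1})$ through $W^\ell$. Because $g$ is a bijection on $\mathcal{X}^\ell$, $g(U_0^{\ell-1})$ is also uniform, so $I(U_0^{\ell-1};Y_0^{\ell-1}) = I(g(U_0^{\ell-1});Y_0^{\ell-1}) = \ell I(W)$. Applying the chain rule for mutual information (Proposition~\ref{prop:chain}) repeatedly,
\begin{equation*}
\ell I(W) \;=\; I(U_0^{\ell-1};Y_0^{\ell-1})
\;=\; \sum_{i=0}^{\ell-1} I(U_i; Y_0^{\ell-1}\mid U_0^{i-1})
\;=\; \sum_{i=0}^{\ell-1} I(U_i; Y_0^{\ell-1}, U_0^{i-1})
\;=\; \sum_{i=0}^{\ell-1} I(W^{(i)}),
\end{equation*}
where the third equality uses $I(U_i; U_0^{i-1}) = 0$ (independence among the uniform $U_j$'s), and the last equality follows from the definition of $W^{(i)}$ together with the fact that $U_i$ is uniform on $\mathcal{X}$.

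Given this identity, the martingale property is straightforward: since $B_{n+1}$ is independent of $B_1,\dotsc,B_n$ and uniform on $\{0,\dotsc,\ell-1\}$,
\begin{equation*}
\mathbb{E}[I_{n+1}\mid B_1,\dotsc,B_n]
\;=\; \frac{1}{\ell}\sum_{i=0}^{\ell-1} I\bigl(W^{(B_1)\dotsm(B_n)(i)}\bigr)
\;=\; I\bigl(W^{(B_1)\dotsm(B_n)}\bigr)
\;=\; I_n,
\end{equation*}
where the middle equality applies the identity to the channel $W^{(B_1)\dotsm(B_n)}$. Combined with $I_n\in[0,1]$, this makes $\{I_n\}$ a bounded martingale, and the martingale convergence theorem yields an a.s.\ (and $L^1$) limit $I_\infty$.

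The main (mild) obstacle is the verification of the chain-rule identity for an arbitrary one-to-one onto kernel $g$ rather than a linear matrix $G$. The point to check carefully is that bijectivity of $g$ alone suffices to guarantee that $g(U_0^{\ell-1})$ is uniform so that $I(U_0^{\ell-1};Y_0^{\ell-1}) = \ell I(W)$; no linear structure on $\mathcal{X}$ is required. Once this is in place, the remainder of the argument is identical to the binary linear case.
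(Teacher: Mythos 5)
Your proof is correct and follows essentially the same route as the paper, which justifies Lemma~\ref{lem:mc} exactly by noting that the expected symmetric capacity is invariant under the transform (via the chain rule, as in the proof of Lemma~\ref{lem:asc}) and then invoking the martingale convergence theorem. Your explicit check that bijectivity of $g$ alone makes $g(U_0^{\ell-1})$ uniform is the right point to verify and is all that is needed beyond the binary linear case.
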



From Lemma~\ref{lem:IZ}, $I(W)$ is close to 0 and 1 when $Z(W)$ is close to 1 and 0, respectively.
In order to show channel polarization, i.e., $I_\infty \in \{0,1\}$ with probability 1,
it suffices to show $\lim_{n\to\infty} P(Z(W^{(B_1)\dotsm(B_n)})\in(\delta,1-\delta)) = 0$ for any $\delta\in(0,1/2)$.
The following lemma is useful for this purpose.
\begin{lemma}\label{lem:polar}
Let $\{\mathcal{Y}_n\}_{n\in\mathbb{N}}$ be a random process taking values on a discrete set.
Let $\{W_n:\mathcal{X}\to\mathcal{Y}_n\}_{n\in\mathbb{N}}$ be a random process taking values on $q$-ary DMC\@.
Let $\sigma$ and $\tau$ be permutations on $\mathcal{X}$.
Let
\begin{equation*}
W_n'(y_1, y_2 \mid x) = \frac1q W_n(y_1\mid \sigma(x))W_n(y_2\mid \tau(x)).
\end{equation*}
Assume
\begin{equation*}
\lim_{n\to\infty} |I(W_n')-I(W_n)|=0
\end{equation*}
with probability 1.
Then
$\lim_{n\to\infty}P(Z^{\tau\sigma^{-1}}_{x,x'}(W_n) \in (\delta,1-\delta)) =0$
for any $x\in\mathcal{X}$, $x'\in\mathcal{X}$ and $\delta\in(0,1/2)$.
\end{lemma}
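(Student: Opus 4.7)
The plan is to reformulate the hypothesis as the vanishing of a conditional mutual information, apply the cutoff rate lower bound from Proposition~\ref{prop:cutoff} to extract a pointwise relation between $Z_{\sigma(x),\sigma(x')}(W_n)$ and $Z_{\tau(x),\tau(x')}(W_n)$, and then use the cyclic orbit structure of $\pi := \tau\sigma^{-1}$ to conclude polarization of $Z^{\pi}_{x,x'}(W_n)$. Concretely, let $X$ be uniform on $\mathcal{X}$ and let $Y_1,Y_2$ be conditionally independent given $X$ with $Y_1\sim W_n(\cdot\mid\sigma(X))$ and $Y_2\sim W_n(\cdot\mid\tau(X))$. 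Since $\sigma,\tau$ are bijections of $\mathcal{X}$ and $X$ is uniform, $I(X;Y_1)=I(X;Y_2)=I(W_n)$ while $I(X;Y_1Y_2)=I(W_n')$; the chain rule (Proposition~\ref{prop:chain}) therefore gives $I(W_n')-I(W_n) = I(X;Y_2\mid Y_1)$, so the hypothesis says $I(X;Y_2\mid Y_1)\to 0$ almost surely.

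Next I would invoke the bound $I(X;Y_2\mid Y_1)\ge R_0(X;Y_2\mid Y_1)$. Using conditional independence of $Y_1$ and $Y_2$ given $X$, expanding the square in the definition of $R_0$, and summing over $y_2$ produces the factor $Z_{\tau(x),\tau(x')}(W_n)$. After combining the $x=x'$ contribution via $\sum_{x}P(x\mid y_1)^2+\sum_{x\ne x'}P(x\mid y_1)P(x'\mid y_1)=1$ and applying $-\log(1-u)\ge u$, one reaches
\begin{equation*}
I(X;Y_2\mid Y_1)\;\ge\;\sum_{x\ne x'}\bigl(1-Z_{\tau(x),\tau(x')}(W_n)\bigr)\sum_{y_1}P(Y_1=y_1)P(X=x\mid Y_1=y_1)P(X=x'\mid Y_1=y_1).
\end{equation*}
A Cauchy--Schwarz step against $\sqrt{P(Y_1=y_1)}$, combined with the identity $\sum_{y_1}\sqrt{P(X=x,Y_1=y_1)P(X=x',Y_1=y_1)}=\tfrac{1}{q}Z_{\sigma(x),\sigma(x')}(W_n)$, lower bounds the inner sum by $\tfrac{1}{q^2}Z_{\sigma(x),\sigma(x')}(W_n)^2$, yielding
\begin{equation*}
I(X;Y_2\mid Y_1)\;\ge\;\frac{1}{q^2}\sum_{x\ne x'}\bigl(1-Z_{\tau(x),\tau(x')}(W_n)\bigr)\,Z_{\sigma(x),\sigma(x')}(W_n)^2.
\end{equation*}
Since the left-hand side vanishes almost surely and each summand is nonnegative, every term must vanish; reindexing with $z=\sigma(x)$, $z'=\sigma(x')$ gives $(1-Z_{\pi(z),\pi(z')}(W_n))\,Z_{z,z'}(W_n)^2\to 0$ almost surely for every pair $z\ne z'$.

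The last step is a cyclic orbit argument for each fixed pair $(x,x')$ with $x\ne x'$. Let $d$ be the period of $(x,x')$ under $\pi$ and set $Z_i:=Z_{\pi^i(x),\pi^i(x')}(W_n)$, so that $Z^{\pi}_{x,x'}(W_n)=\tfrac{1}{d}\sum_{i=0}^{d-1}Z_i$. Fix $\epsilon\in(0,1/4)$. The pointwise vanishing relation implies that, almost surely for all large $n$, each $i$ satisfies $Z_i\le\sqrt{\epsilon}$ or $Z_{i+1}\ge 1-\sqrt{\epsilon}$. If any $Z_j>\sqrt{\epsilon}$, then $Z_{j+1}\ge 1-\sqrt{\epsilon}>\sqrt{\epsilon}$, and iterating around the cycle forces $Z_i\ge 1-\sqrt{\epsilon}$ for every $i$; otherwise all $Z_i\le\sqrt{\epsilon}$. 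Hence $Z^{\pi}_{x,x'}(W_n)\in[0,\sqrt{\epsilon}]\cup[1-\sqrt{\epsilon},1]$ with probability tending to $1$, and choosing $\sqrt{\epsilon}<\delta$ yields the claim.

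The main obstacle is the cutoff-rate manipulation that pairs the factor $Z_{\tau(x),\tau(x')}(W_n)$ from the $y_2$-sum with a lower bound on the posterior correlation $\sum_{y_1}P(y_1)P(x\mid y_1)P(x'\mid y_1)$ in terms of $Z_{\sigma(x),\sigma(x')}(W_n)^2$, so that the two Bhattacharyya quantities appear with the correct sign and the product $(1-Z_{\tau(x),\tau(x')})Z_{\sigma(x),\sigma(x')}^2$ emerges as a bona fide lower bound. Once this pointwise estimate is in hand, the cyclic orbit argument is short and exploits exactly the averaging built into the definition of $Z^{\pi}$ to rule out mixed near-$0$/near-$1$ patterns that would otherwise leave the average inside $(\delta,1-\delta)$.
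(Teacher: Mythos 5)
Your proposal is correct and follows essentially the same route as the paper: the chain rule identifies the hypothesis with $I(X;Y_2\mid Y_1)\to 0$, the cutoff-rate bound of Proposition~\ref{prop:cutoff} plus Jensen's inequality yields the pointwise estimate that $\bigl(1-Z_{\pi(z),\pi(z')}(W_n)\bigr)Z_{z,z'}(W_n)^2\to 0$ almost surely for every pair, exactly as in the paper's proof. Your closing cyclic-orbit argument, which rules out mixed near-$0$/near-$1$ patterns along the $\pi$-orbit of $(x,x')$, is a correct and welcome elaboration of the final implication that the paper states without proof.
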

\begin{proof}
Let $Z$, $Y_1$ and $Y_2$ be random variables which take values on $\mathcal{X}$, $\mathcal{Y}_n$ and $\mathcal{Y}_n$, respectively,
and jointly obey the distribution
\begin{equation*}
P_n(Z=z,\,Y_1=y_1,Y_2=y_2)
=\frac1q W_n(y_1\mid \sigma(z))W_n(y_2\mid \tau(z)).
\end{equation*}
Since $I(W_n') = I(Z;Y_1,Y_2)$ and $I(W_n) = I(Z;Y_1)$,
$I(Z;Y_1,Y_2)-I(Z;Y_1)
=I(Z;Y_2\mid Y_1)$
tends to 0 with probability 1 by the assumption.
Since the mutual information is lower bounded by the cutoff rate as shown in Proposition~\ref{prop:cutoff},
one obtains
\begin{align*}
I(Z;Y_2\mid Y_1)
&\ge
-\log \sum_{y_1\in\mathcal{Y}_n,y_2\in\mathcal{Y}_n} P_n(Y_1=y_1)
\Bigg[\sum_{z\in\mathcal{X}} P_n(Z=z\mid Y_1=y_1)
\sqrt{P_n(Y_2=y_2\mid Z=z, Y_1=y_1)}\Bigg]^2\nonumber\\
&=
-\log \sum_{y_1\in\mathcal{Y}_n,z\in\mathcal{X},x\in\mathcal{X}} P_n(Y_1=y_1)P_n(Z=z\mid Y_1=y_1)
P_n(Z=x\mid Y_1=y_1)
 Z_{\tau(z),\tau(x)}(W_n)\nonumber\\
&=
-\log \sum_{y_1\in\mathcal{Y}_n,z\in\mathcal{X},x\in\mathcal{X}} q_n(y_1,z,x)
 Z_{\tau(\sigma^{-1}(z)),\tau(\sigma^{-1}(x))}(W_n)
\end{align*}
where
\begin{equation*}
q_n(y_1,z,x)
 := P_n(Y_1=y_1)
 P_n(Z=\sigma^{-1}(z)\mid Y_1=y_1)
 P_n(Z=\sigma^{-1}(x)\mid Y_1=y_1).
\end{equation*}
Since
\begin{align*}
\sum_{y_1\in\mathcal{Y}} q_n(y_1,z,x) &= \sum_{y_1\in\mathcal{Y}}P_n(Y_1=y_1)
\left(\sqrt{P_n(Z=\sigma^{-1}(z)\mid Y_1=y_1)
 P_n(Z=\sigma^{-1}(x)\mid Y_1=y_1)}\right)^2\\
&\ge
\bigg(\sum_{y_1\in\mathcal{Y}}P_n(Y_1=y_1)
\sqrt{P_n(Z=\sigma^{-1}(z)\mid Y_1=y_1)
 P_n(Z=\sigma^{-1}(x)\mid Y_1=y_1)}\bigg)^2\\
&=\frac1{q^2} Z_{z,x}(W_n)^2
\end{align*}
it holds
\begin{equation*}
I(Z;Y_2\mid Y_1)
\ge -\log \left[1-\frac1{q^2}\sum_{\substack{z\in\mathcal{X}, x\in\mathcal{X}\\z\ne x}}  
Z_{z,x}(W_n)^2 \left(1-Z_{\tau(\sigma^{-1}(z)),\tau(\sigma^{-1}(x))}(W_n)\right)\right].
\end{equation*}
The convergence of $I(Z;Y_2\mid Y_1)$ to 0 with probability 1 implies that
\begin{equation*}
Z_{z,x}(W_n)^2 \left(1-Z_{\tau(\sigma^{-1}(z)),\tau(\sigma^{-1}(x))}(W_n)\right)
\end{equation*}
tends to 0 with probability 1 for any $(z,x)\in\mathcal{X}^2$.
It consequently implies $\lim_{n\to\infty} P(Z_{z,x}^{\tau\sigma^{-1}}(W_n) \in (\delta,1-\delta))=0$
for any $(z,x)\in\mathcal{X}^2$ and $\delta\in(0,1/2)$.
\end{proof}

\begin{corollary}\label{cor:nonl}
Assume that
there exists $u_0^{\ell-2}\in\mathcal{X}^{\ell-1}$,
$(i,j)\in \{0,1,\dotsc,\ell-1\}^2$
and permutations $\sigma$ and $\tau$ on $\mathcal{X}$ such that
$i$-th element of $g(u_0^{\ell-1})$ and $j$-th element of $g(u_0^{\ell-1})$ are $\sigma(u_{\ell-1})$ and $\tau(u_{\ell-1})$, respectively,
and such that for any $v_0^{\ell-2}\ne u_0^{\ell-2}\in\mathcal{X}^{\ell-1}$
there exists $m\in\{0,1,\dotsc,\ell-1\}$ and a permutation $\mu$ on $\mathcal{X}$ such that
$m$-th element of $g(v_0^{\ell-1})$ is $\mu(v_{\ell-1})$.
Then, $\lim_{n\to\infty} P(Z^{\tau\sigma^{-1}}_{x,x'}(W_n)\in(\delta,1-\delta))=0$ for all $x\in\mathcal{X}$, $x'\in\mathcal{X}$
and $\delta\in(0,1/2)$.
\end{corollary}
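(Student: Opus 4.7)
The plan is to reduce the corollary to Lemma~\ref{lem:polar} by building, along the random polarization process $W_n := W^{(B_1)\dotsm(B_n)}$, the two-coordinate channel
\[
W_n'(y_i, y_j \mid x) := W_n(y_i \mid \sigma(x))\, W_n(y_j \mid \tau(x))
\]
with the very permutations $\sigma$, $\tau$ furnished by the hypothesis. The whole task reduces to verifying that $|I(W_n') - I(W_n)| \to 0$ almost surely; once this is established, Lemma~\ref{lem:polar} yields exactly the conclusion of the corollary.

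The first step is a martingale argument. By Lemma~\ref{lem:mc}, $I(W_n)$ is a bounded martingale that converges almost surely, so $\sum_n \mathbb{E}[(I(W_{n+1}) - I(W_n))^2] < \infty$. Since $B_{n+1}$ is uniform and independent of $B_1,\dotsc,B_n$, the conditional second moment
\[
\mathbb{E}\bigl[(I(W_{n+1}) - I(W_n))^2 \bigm| B_1,\dotsc,B_n\bigr] = \frac{1}{\ell}\sum_{i=0}^{\ell-1}\bigl(I(W_n^{(i)}) - I(W_n)\bigr)^2
\]
must tend to $0$ almost surely, forcing $|I(W_n^{(i)}) - I(W_n)| \to 0$ a.s.\ for each fixed $i$; take $i = \ell-1$.

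The second step decomposes $I(W_n^{(\ell-1)})$ over the prefix. For the distinguished prefix $u_0^{\ell-2}$ of the hypothesis, coordinates $i$ and $j$ of the output $y_0^{\ell-1}$ are independent copies of $W_n$ applied to $\sigma(u_{\ell-1})$ and $\tau(u_{\ell-1})$, so by the data processing inequality $I((W_n)^{(\ell-1)}_{u_0^{\ell-2}}) \ge I(W_n')$. For any other prefix $v_0^{\ell-2}$, the hypothesis supplies a coordinate $m$ whose output equals $W_n(\cdot \mid \mu(v_{\ell-1}))$ for some permutation $\mu$, giving $I((W_n)^{(\ell-1)}_{v_0^{\ell-2}}) \ge I(W_n)$ (mutual information is invariant under relabeling the input by a permutation). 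Averaging over all $v_0^{\ell-2}\in\mathcal{X}^{\ell-1}$,
\[
I(W_n^{(\ell-1)}) - I(W_n) \ge \frac{1}{q^{\ell-1}}\bigl(I(W_n') - I(W_n)\bigr).
\]
Since two observations are at least as informative as one, $I(W_n') \ge I(W_n)$, and hence $0 \le I(W_n') - I(W_n) \le q^{\ell-1}|I(W_n^{(\ell-1)}) - I(W_n)|$, which tends to $0$ almost surely by the first step.

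With $|I(W_n') - I(W_n)| \to 0$ a.s.\ in hand, Lemma~\ref{lem:polar} applies verbatim with the same $\sigma$ and $\tau$ and delivers the desired conclusion. The main obstacle in the plan is the passage from the martingale convergence of $I(W_n)$ to a.s.\ convergence of $|I(W_n^{(\ell-1)}) - I(W_n)|$ for the specific index $\ell-1$ rather than merely for the random index $B_{n+1}$; this is handled by the conditional second moment identity above, which forces every coordinate of the discrepancy vector $(I(W_n^{(i)}) - I(W_n))_i$ to vanish simultaneously rather than only on a random subsequence.
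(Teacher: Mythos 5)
Your proposal is correct and follows essentially the same route as the paper: decompose $I(W_n^{(\ell-1)})$ over the prefixes $u_0^{\ell-2}$, use the hypothesis to lower-bound the distinguished term by $I(W_n')$ and every other term by $I(W_n)$, and then invoke Lemma~\ref{lem:polar}. The only difference is that you justify the almost-sure convergence of $|I(W_n^{(\ell-1)})-I(W_n)|$ for the fixed index $\ell-1$ via the orthogonality of martingale increments and the vanishing of the conditional second moment, a step the paper asserts without proof; your argument there is valid and makes the proof more complete.
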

\begin{proof}
Since $I(W^{(B_1)\dotsm(B_n)})$ converges to $I_\infty$ with probability 1,
$|I(W^{(B_1)\dotsm(B_{n})(\ell-1)})-I(W^{(B_1)\dotsm(B_n)})|$ has to converge to 0 with probability 1.
Let $U_0^{\ell-1}$ and $Y_0^{\ell-1}$ denote random variables ranging over $\mathcal{X}^{\ell}$ and $\mathcal{Y}^{\ell}$,
and obeying the distribution
\begin{equation*}
P(U_0^{\ell-1}=u_0^{\ell-1},\, Y_0^{\ell-1}=y_0^{\ell-1}) = \frac1q W^{(\ell-1)}(y_0^{\ell-1},u_0^{\ell-2}\mid u_{\ell-1}).
\end{equation*}
Then, it holds
\begin{align*}
I(W^{(\ell-1)}) &= I(Y_0^{\ell-1}, U_0^{\ell-2}; U_{\ell-1})\\
&= I(Y_0^{\ell-1}; U_{\ell-1}\mid U_0^{\ell-2})\\
&=\sum_{u_0^{\ell-2}}\frac1{q^{\ell-1}} I(Y_0^{\ell-1}; U_{\ell-1}\mid U_0^{\ell-2}=u_0^{\ell-2}).
\end{align*}
From the assumption,
$I(Y_0^{\ell-1}; U_{\ell-1}\mid U_0^{\ell-2}=u_0^{\ell-2})\ge I(W)$ for all $u_0^{\ell-2}\in\mathcal{X}^{\ell-1}$.
Hence,
$|I(W^{(B_1)\dotsm(B_{n})'})-I(W^{(B_1)\dotsm(B_n)})|$ has to converge to 0 with probability 1.
By applying Lemma~\ref{lem:polar}, one obtains the result.
\end{proof}
When $q=2$,
Corollary~\ref{cor:nonl} is sufficient to show the channel polarization phenomenon.
The derivation does not use linearity of a kernel.
When we assume that $\mathcal{X}$ is a finite field and that a kernel $g$ is linear,
the matrix $G$ representing the kernel $g$ is assumed to be lower triangular due to the same reason as in Chapter~\ref{chap:bpolar}.

\begin{theorem}\label{thm:prime}
Assume that $\mathcal{X}$ is a prime field, and that a linear kernel $G$ is not diagonal.
Then, $P(I_\infty\in\{0,1\})=1$.
\end{theorem}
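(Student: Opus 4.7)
My plan is to feed Lemma~\ref{lem:polar} with $\sigma$, $\tau$ taken to be the scalar multiplications by two distinct nonzero entries of some row of $G$; these are permutations of $\mathcal{X}$ exactly because $\mathcal{X}$ is a prime field, which is the one place where primality enters. Since $G$ is full-rank, the LU decomposition argument from Section~\ref{sec:polarization} lets me assume $G$ is lower triangular. The non-diagonality hypothesis then provides a row of Hamming weight at least $2$; let $k$ be the largest such row index, and put $S_1 := \{j \le k : G_{kj} \ne 0\}$, so $|S_1| \ge 2$ and every row of $G$ with index $> k$ has its unique nonzero entry on the diagonal.

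The first step is to show $I(W_n^{(k)}) - I(W_n) \to 0$ almost surely. Since $\{I_n\}$ is a bounded martingale, $\sum_n \mathbb{E}[(I_{n+1} - I_n)^2] < \infty$, and because
\[
\mathbb{E}\bigl[(I_{n+1} - I_n)^2 \,\bigm|\, \mathcal{G}_n\bigr] = \frac{1}{\ell}\sum_{i=0}^{\ell-1} \bigl(I(W_n^{(i)}) - I(W_n)\bigr)^2
\]
for the filtration $\mathcal{G}_n$ generated by $B_1, \ldots, B_n$, one gets $\sum_n (I(W_n^{(i)}) - I(W_n))^2 < \infty$ a.s.\ for each fixed $i$, and in particular $I(W_n^{(k)}) - I(W_n) \to 0$ a.s. Second, I will use that rows with index $> k$ have weight one: summing out $u_{k+1}^{\ell-1}$ in the definition of $W_n^{(k)}$ leads to
\[
I(W_n^{(k)}) = q^{-k} \sum_{u_0^{k-1} \in \mathcal{X}^k} I\bigl(U_k;\,(Y_j)_{j \in S_1} \,\bigm|\, U_0^{k-1} = u_0^{k-1}\bigr),
\]
where the conditional channel equals $\prod_{j \in S_1} W_n(y_j \mid G_{kj}U_k + c_j)$ with known constants $c_j$ depending on $u_0^{k-1}$. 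Because any single coordinate $y_{j_0}$ already carries a permuted copy of $W_n$, every summand is at least $I(W_n)$; since these non-negative summands average to a quantity that tends to $I(W_n)$, each individual summand must also tend to $I(W_n)$ a.s., and in particular the summand at $u_0^{k-1} = 0$, where all shifts vanish, gives $I(\hat{W}_n) - I(W_n) \to 0$ a.s., with $\hat{W}_n(y_{S_1} \mid u) := \prod_{j \in S_1} W_n(y_j \mid G_{kj}u)$.

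Next I would pick two distinct $j_1, j_2 \in S_1$, set $\sigma(u) := G_{kj_1}u$ and $\tau(u) := G_{kj_2}u$, and define $W_n'(y_1, y_2 \mid u) := W_n(y_1 \mid \sigma(u))\,W_n(y_2 \mid \tau(u))$. A chain-rule sandwich $I(W_n) \le I(W_n') \le I(\hat{W}_n)$ yields $|I(W_n') - I(W_n)| \to 0$ a.s., and Lemma~\ref{lem:polar} then delivers
\[
P\bigl(Z^{\tau\sigma^{-1}}_{x,x'}(W_n) \in (\delta, 1-\delta)\bigr) \to 0
\]
for every $x, x' \in \mathcal{X}$ and every $\delta \in (0, 1/2)$. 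Since $\tau\sigma^{-1}$ is multiplication by $c := G_{kj_2}G_{kj_1}^{-1} \ne 0$, this average equals $(q!)^{-1} \sum_i Z_{c^i x,\, c^i x'}(W_n)$ with each summand in $[0, 1]$, so closeness of the average to $0$ or to $1$ forces every summand into the same regime. The $i = 0$ summand therefore gives polarization of $Z_{x,x'}(W_n)$ for every $x \ne x'$; hence $Z(W_n)$ polarizes, Lemma~\ref{lem:IZ} promotes this to polarization of $I(W_n)$, and $I_\infty \in \{0,1\}$ almost surely.

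The hardest step, I expect, is the middle one: extracting almost-sure convergence of the individual conditional term at $u_0^{k-1} = 0$ from almost-sure convergence of the uniform average. The lower bound $I(U_k; Y_{S_1} \mid U_0^{k-1} = u_0^{k-1}) \ge I(W_n)$ that makes this extraction possible relies on $|S_1| \ge 2$ together with the prime-field structure, which guarantees that each $G_{kj}$ acts on $\mathcal{X}$ as a permutation.
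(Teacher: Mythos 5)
Your first three steps are sound, and in places more careful than the paper's own write-up: the paper essentially asserts that $I(W_n^{(k)})-I(W_n)\to 0$ for the fixed index $k$ follows from $|I_{n+1}-I_n|\to 0$, whereas your square-summability argument for the bounded martingale genuinely delivers the almost-sure statement for fixed $k$; the extraction of the single conditional term at $u_0^{k-1}=0$ from an average of nonnegative excesses is also correct, as is the sandwich $I(W_n)\le I(W_n')\le I(\hat W_n)$. The gap is in the last step, and it sits exactly where you discarded the paper's main device. By fixing $u_0^{k-1}=0$ you reduce $\sigma$ and $\tau$ to pure multiplications, so $\tau\sigma^{-1}$ is multiplication by $c=G_{kj_2}G_{kj_1}^{-1}$. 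First, nothing prevents $c=1$ (take row $k$ equal to $(1,1,0,\dots,0)$, the direct generalization of the $2\times 2$ kernel), in which case the orbit is trivial and Lemma~\ref{lem:polar} only yields that each $Z_{x,x'}(W_n)$ \emph{individually} leaves $(\delta,1-\delta)$. Second, and fatally, ``each $Z_{x,x'}$ polarizes individually'' does not imply ``$Z(W_n)$ polarizes'': the pairwise parameters may polarize to different values. For $q=3$, a channel with $Z_{0,1}=Z_{0,2}=0$ and $Z_{1,2}=1$ has every pairwise parameter in $\{0,1\}$ yet $I=1-\tfrac23\log_3 2\approx 0.58$; your argument does not exclude convergence of $W_n$ to such a channel, so the final ``hence $I_\infty\in\{0,1\}$'' does not follow. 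Even when $c\ne 1$ the multiplicative orbits do not save you: the orbit of a pair $(x,x')$ with $x,x'\ne 0$ never meets $0$, so the edges along which one could propagate ``$Z$ close to $1$'' do not connect $\mathcal{X}$.

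What is missing, and what the paper uses, are the affine shifts coming from the conditioning: since every $u_0^{k-1}$ occurs with positive probability, Lemma~\ref{lem:polar} can be applied with $\tau(x)=G_{km}x+z$ for \emph{every} $z\in\mathcal{X}$, and the proof ultimately works with the pure translation $\tau(z)=z+x'-x$. Because $q$ is prime, the orbit of $(x,x')$ under such a translation traces a Hamiltonian cycle through all of $\mathcal{X}$; if the orbit average is close to $1$, every adjacent pair on that cycle has $Z$ close to $1$ and Lemma~\ref{lem:tri} (which you never invoke) propagates this to all pairs, forcing $Z(W_n)$ close to $1$; if instead every such orbit average is close to $0$, then every $Z_{x,x'}$ is close to $0$ and so is $Z(W_n)$. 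That dichotomy is what rules out the mixed limits above. Note also that primality does not enter where you claim it does---multiplication by a nonzero element is a permutation in \emph{any} finite field---but precisely here, in making the additive group cyclic of prime order so that one nonzero translation generates it; this is why the general finite-field version (Corollary~\ref{cor:ff}) has to assume a primitive element instead.
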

\begin{proof}
Let $k$ be the largest number such that the number of non-zero elements in $k$-th row of $G$ is larger than 1.
Without loss of generality, we assume $G_{kk}=1$.
It holds
\begin{equation*}
W^{(k)}(y_0^{\ell-1}, u_0^{k-1} \mid u_k) = \frac1{q^{\ell-1}}
 \prod_{j=k+1}^{\ell-1} \left(\sum_{x\in\mathcal{X}} W(y_j \mid x)\right)
 \prod_{j\in S_0} W(y_j\mid x_j) \prod_{j\in S_1} W(y_j\mid G_{kj} u_k + x_j)
\end{equation*}
where $S_0 := \{j \in\{0,\dotsc,\ell-1\}\mid G_{kj} = 0\}$,
$S_1 := \{j \in\{0,\dotsc,\ell-1\}\mid G_{kj} \ne 0\}$, and
$x_j$ is $j$-th element of $(u_0^{k-1}, 0_k^{\ell-1})G$ where $0_k^{\ell-1}$ is all-zero vector of length $\ell-k$.
Let $m\in\{0,\dotsc,k-1\}$ be an arbitrary index such that $G_{km}\ne 0$.
Since each $u_0^{k-1}$ occurs with positive probability $1/q^{k}$,
we can apply Lemma~\ref{lem:polar} with $\sigma(x) = x$ and $\tau(x) = G_{km}x + z$ for an arbitrary $z\in\mathcal{X}$.
Hence, for sufficiently large $n$, $Z_{x,x'}^\tau(W^{(B_1)\dotsm(B_n)})$ is close to 0 or 1 almost surely
where $\tau(x) = G_{km}^ix +z$ for all $i\in\{0,\dotsc,q-2\}$ and all $z\in\mathcal{X}$.
Since $q$ is prime,
for any $x\in\mathcal{X}$ and $x'\in\mathcal{X}$ where $x\ne x'$,
$Z_{x,x'}^\tau(W^{(B_1)\dotsm(B_n)})$ is close to 1 if and only if
$Z(W^{(B_1)\dotsm(B_n)})$ is close to 1,
where $\tau(z)=z+x'-x$.
\end{proof}
This result is a simple generalization of the special case considered by {\c S}a{\c s}o{\u g}lu, Telatar and Ar{\i}kan~\cite{sasoglu2009pad}.
We also show another sufficient condition for channel polarization in the following corollary.
\begin{corollary}\label{cor:ff}
Assume that $\mathcal{X}$ is a field and that a linear kernel $G$ is not diagonal.
Let $k$ be the largest number such that the number of non-zero elements in $k$-th row of $G$ is larger than 1.
If there exists $j\in\{0,\dotsc,k-1\}$ such that $G_{kj}/G_{kk}$ is a primitive element,
it holds $P(I_\infty\in\{0,1\})=1$.
\end{corollary}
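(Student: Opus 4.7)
The strategy is to extend the argument of Theorem~\ref{thm:prime} by exploiting the multiplicative structure supplied by the primitive element $g := G_{kj}/G_{kk}$ in place of the additive structure of a prime field. Without loss of generality we take $G_{kk}=1$, so that $g=G_{kj}$ generates $\mathcal{X}\setminus\{0\}$ multiplicatively.

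Just as in the proof of Theorem~\ref{thm:prime}, Lemma~\ref{lem:polar} is applied with $\sigma(x)=x$ and $\tau_z(x)=gx+z$ for each $z\in\mathcal{X}$, the value of $z$ being controlled by the frozen prefix $u_0^{k-1}$, which takes each of its $q^{k}$ values with positive probability. This yields, for every $z\in\mathcal{X}$ and every pair $x\ne x'$, polarization of $Z^{\tau_z}_{x,x'}\bigl(W^{(B_1)\cdots(B_n)}\bigr)$. Since $Z^{\tau_z}_{x,x'}$ is an average of $Z_{\tau_z^i(x),\tau_z^i(x')}$ over the orbit of $(x,x')$ under $\tau_z$ and all terms lie in $[0,1]$, polarization of the average forces every individual Bhattacharyya parameter $Z_{y_1,y_2}\bigl(W^{(B_1)\cdots(B_n)}\bigr)$ (the pair $(x,x')$ itself is the $i=0$ orbit element) to polarize, and it forces all pairs in a common orbit to polarize to the same value.

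Declaring $y_1\sim y_2$ iff $Z_{y_1,y_2}\bigl(W^{(B_1)\cdots(B_n)}\bigr)$ polarizes to $1$ therefore yields an equivalence relation on $\mathcal{X}$ --- reflexive and symmetric by definition, transitive by the triangle inequality of Lemma~\ref{lem:tri} --- which in addition is invariant under each $\tau_z$. The key point is that the subgroup of $\mathrm{Sym}(\mathcal{X})$ generated by $\{\tau_z:z\in\mathcal{X}\}$ is the full affine group $\mathrm{Aff}(\mathcal{X})=\{u\mapsto au+b:a\in\mathcal{X}\setminus\{0\},\,b\in\mathcal{X}\}$: composing $\tau_{z_1}\circ\tau_{z_2}(u)=g^{2}u+(gz_2+z_1)$ and iterating produces every multiplier $g^{i}$, which --- because $g$ is primitive --- exhausts $\mathcal{X}\setminus\{0\}$, while a composition such as $\tau_d\circ\tau_0^{\,q-2}=(gu+d)\circ(g^{-1}u)=u+d$ supplies all additive translations. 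Since $\mathrm{Aff}(\mathcal{X})$ acts $2$-transitively on $\mathcal{X}$, the only invariant equivalence relations are the two trivial ones: all distinct pairs equivalent, or no distinct pair equivalent.

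In the former case $Z\bigl(W^{(B_1)\cdots(B_n)}\bigr)\to 1$; in the latter, $Z\bigl(W^{(B_1)\cdots(B_n)}\bigr)\to 0$. Lemma~\ref{lem:IZ} then forces $I\bigl(W^{(B_1)\cdots(B_n)}\bigr)$ to converge to $0$ or $1$ accordingly, so $P(I_\infty\in\{0,1\})=1$. The main obstacle is the generation step in the third paragraph: one must verify that the concrete family $\{\tau_z\}_{z\in\mathcal{X}}$ output by Lemma~\ref{lem:polar} really generates $\mathrm{Aff}(\mathcal{X})$ under composition, and it is precisely this step in which the primitivity hypothesis is used in an essential way.
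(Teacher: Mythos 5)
Your proof is correct and follows essentially the same route as the paper: apply Lemma~\ref{lem:polar} with $\sigma=\mathrm{id}$ and $\tau(x)=(G_{kj}/G_{kk})x+z$, use primitivity to propagate polarization-to-one across all nonzero differences, and close with the triangle inequality of Lemma~\ref{lem:tri}. The only real difference is cosmetic: you package the final step as two-transitivity of the affine group generated by the $\tau_z$, whereas the paper directly chases the orbit $(0,\,(G_{kj}/G_{kk})^i(x'-x))$ to reach every pair $(0,x)$.
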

\begin{proof}
By applying Lemma~\ref{lem:polar}, one sees that
$\lim_{n\to\infty} P(Z^{\sigma}_{x,x'}(W^{(B_1)\dotsm(B_n)})\in(\delta,1-\delta))=0$ for all $x\in\mathcal{X}$, $x'\in\mathcal{X}$
and $\delta\in(0,1/2)$,
where $\sigma(x) = (G_{kj}/G_{kk})x + z$ for an arbitrary $z\in\mathcal{X}$.
It suffices to show that
for any $x\in\mathcal{X}$ and $x'\in\mathcal{X}$, $x\ne x'$,
$Z_{x,x'}(W^{(B_1)\dotsm(B_n)})$ is close to 1 if and only if
$Z(W^{(B_1)\dotsm(B_n)})$ is close to 1.
When $Z_{x,x'}(W^{(B_1)\dotsm(B_n)})$ is close to 1,
$Z_{0,(G_{kj}/G_{kk})(x'-x)}(W^{(B_1)\dotsm(B_n)})$ is close to 1.
Hence,
$Z_{0,(G_{kj}/G_{kk})^i(x'-x)}(W^{(B_1)\dotsm(B_n)})$ is close to 1 for any $i\in\{0,\dotsc,q-2\}$.
Since $G_{kj}/G_{kk}$ is a primitive element,
$Z_{0,x}(W^{(B_1)\dotsm(B_n)})$ is close to 1 for any $x\in\mathcal{X}$.
From Lemma~\ref{lem:tri}, it completes the proof.
\end{proof}

\section{Speed of Polarization}
The result in Chapter~\ref{chap:speed} is also applicable to non-binary channel polarization.
\begin{definition}
Partial distance of a kernel $g: \mathcal{X}^\ell\to\mathcal{X}^\ell$ is defined as
\begin{equation*}
D_{x,x'}^{[i]}(u_0^{i-1})
:= \min_{v_{i+1}^{\ell-1}, w_{i+1}^{\ell-1}}
d(g(u_0^{i-1},x,v_{i+1}^{\ell-1}),\,g(u_0^{i-1},x',w_{i+1}^{\ell-1}))
\end{equation*}
where $d(a,b)$ denotes the Hamming distance between $a\in\mathcal{X}^\ell$ and $b\in\mathcal{X}^\ell$.
\end{definition}
We also use the following quantities.
\begin{align*}
D_{x,x'}^{[i]} &:= \min_{u_0^{i-1}} D_{x,x'}^{[i]}(u_0^{i-1}),&
D_\text{max}^{[i]} &:= \max_{x\in\mathcal{X},x'\in\mathcal{X}} D_{x,x'}^{[i]},&
D_\text{min}^{[i]} &:= \min_{\substack{x\in\mathcal{X},x'\in\mathcal{X}\\x\ne x'}} D_{x,x'}^{[i]}.
\end{align*}
When $g$ is linear, $D^{[i]}_{x,x'}(u_0^{i-1})$ does not depend on $x$, $x'$ or $u_0^{i-1}$,
in which case
we will use the notation $D^{[i]}$ instead of $D^{[i]}_{x,x'}(u_0^{i-1})$.
For a full-rank square matrix $G$, $E(G)$ and $V(G)$ are defined in the same way as in Definition~\ref{def:exp}.

In order to apply the method in Chapter~\ref{chap:speed}, the following lemma similar to Lemma~\ref{lem:bZD} is used.

\begin{lemma}\label{lem:ZD}
\begin{equation*}
\frac1{q^{2(\ell-1-i)}}Z_\text{min}(W)^{D_{x,x'}^{(i)}(u_0^{i-1})}
\le Z_{x,x'}(W^{(i)}_{u_0^{i-1}}) \le q^{\ell-1-i}Z_\text{max}(W)^{D_{x,x'}^{(i)}(u_0^{i-1})}
\end{equation*}
\end{lemma}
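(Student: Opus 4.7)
The plan is to unfold both $Z_{x,x'}$ and $W^{(i)}_{u_0^{i-1}}$ and reduce the claim to an estimate on
\begin{equation*}
Z_{x,x'}(W^{(i)}_{u_0^{i-1}}) = \frac{1}{q^{\ell-i-1}}\sum_{y_0^{\ell-1}} \sqrt{\Bigl(\sum_{v_{i+1}^{\ell-1}} W^\ell(y_0^{\ell-1}\mid g(u_0^{i-1},x,v_{i+1}^{\ell-1}))\Bigr)\Bigl(\sum_{w_{i+1}^{\ell-1}} W^\ell(y_0^{\ell-1}\mid g(u_0^{i-1},x',w_{i+1}^{\ell-1}))\Bigr)}.
\end{equation*}
The key structural fact I use repeatedly is that for any fixed inputs $a,b\in\mathcal{X}^\ell$, the memoryless property of the channel gives the factorization
\begin{equation*}
\sum_{y_0^{\ell-1}}\sqrt{W^\ell(y_0^{\ell-1}\mid a)W^\ell(y_0^{\ell-1}\mid b)} = \prod_{j=0}^{\ell-1} Z_{a_j,b_j}(W),
\end{equation*}
which equals $1$ at coordinates where $a_j=b_j$ and lies in $[Z_{\text{min}}(W),Z_{\text{max}}(W)]$ elsewhere.

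For the upper bound, inside the $y$-sum I apply the elementary inequality $\sqrt{\sum_\alpha p_\alpha \sum_\beta q_\beta} = \sqrt{\sum_{\alpha,\beta} p_\alpha q_\beta} \le \sum_{\alpha,\beta}\sqrt{p_\alpha q_\beta}$ (valid because all terms are non-negative), exchange the order of summation, and apply the product factorization to each pair $(v_{i+1}^{\ell-1},w_{i+1}^{\ell-1})$. Setting $a = g(u_0^{i-1},x,v_{i+1}^{\ell-1})$ and $b = g(u_0^{i-1},x',w_{i+1}^{\ell-1})$, each contribution is at most $Z_{\text{max}}(W)^{d(a,b)}\le Z_{\text{max}}(W)^{D^{[i]}_{x,x'}(u_0^{i-1})}$ since $Z_{\text{max}}(W)\le 1$ and by the definition of partial distance. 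The number of such pairs is $q^{2(\ell-i-1)}$, which combined with the prefactor $q^{-(\ell-i-1)}$ yields $q^{\ell-i-1}Z_{\text{max}}(W)^{D^{[i]}_{x,x'}(u_0^{i-1})}$, as claimed.

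For the lower bound, I use the dual inequality $\sqrt{\sum_\alpha p_\alpha \sum_\beta q_\beta} \ge \sqrt{p_{\alpha^*} q_{\beta^*}}$, valid for any single pair $(\alpha^*,\beta^*)$ since all summands are non-negative. Choosing $(v^*,w^*)$ to attain the minimum in the definition of $D^{[i]}_{x,x'}(u_0^{i-1})$ and denoting $a^* = g(u_0^{i-1},x,v^*)$, $b^* = g(u_0^{i-1},x',w^*)$, the product factorization gives $\prod_j Z_{a^*_j,b^*_j}(W)\ge Z_{\text{min}}(W)^{D^{[i]}_{x,x'}(u_0^{i-1})}$ because exactly $D^{[i]}_{x,x'}(u_0^{i-1})$ positions differ and the remaining ones contribute $1$. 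Combined with the prefactor $q^{-(\ell-i-1)}$, this actually yields a bound slightly tighter than and hence implies the claimed $q^{-2(\ell-i-1)}Z_{\text{min}}(W)^{D^{[i]}_{x,x'}(u_0^{i-1})}$.

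I do not expect any substantial obstacle; both bounds arise from dual applications of the monotonicity of the square root on non-negative arguments, combined with the multiplicativity of the Bhattacharyya integral under the product channel. The one point to handle with care is that in the lower bound the distance-achieving pair $(v^*,w^*)$ must be selected \emph{before} summing over $y_0^{\ell-1}$, i.e., inside the square root, so that the factorization into a single $\prod_j Z_{a^*_j,b^*_j}(W)$ is available.
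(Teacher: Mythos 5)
Your proof is correct. The upper bound is argued exactly as in the paper: pull the sums out of the square root via $\sqrt{\sum_{\alpha,\beta}p_\alpha q_\beta}\le\sum_{\alpha,\beta}\sqrt{p_\alpha q_\beta}$, factorize each term over coordinates by memorylessness, bound each pair by $Z_{\text{max}}(W)^{D^{[i]}_{x,x'}(u_0^{i-1})}$, and count the $q^{2(\ell-1-i)}$ pairs against the prefactor $q^{-(\ell-1-i)}$. For the lower bound you take a genuinely different (and cleaner) route: the paper first applies Jensen's inequality to the concave square root with the uniform weights $q^{-2(\ell-1-i)}$ on pairs $(v,w)$ and then retains only the distance-minimizing pair, which is where the factor $q^{-2(\ell-1-i)}$ in the statement comes from; you instead discard all but the minimizing pair \emph{inside} the square root before summing over $y_0^{\ell-1}$, which is legitimate by non-negativity and yields the stronger constant $q^{-(\ell-1-i)}$, trivially implying the stated bound. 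Your caveat about selecting $(v^*,w^*)$ before the $y$-sum is exactly the right point to be careful about, and your handling of the agreeing coordinates (contributing factors of $1$) so that the exponent is exactly $d(a^*,b^*)=D^{[i]}_{x,x'}(u_0^{i-1})$ is correct.
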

\begin{proof}
Proof of the second inequality is almost the same as the proof in~\cite{korada2009pcc}.
\begin{align*}
Z_{x,x'}(W_{ u_0^{i-1}}^{(i)}) &= \sum_{y_0^{\ell-1}} \sqrt{W_{u_0^{i-1}}^{(i)}(y_0^{\ell-1}\mid x)W_{u_0^{i-1}}^{(i)}(y_0^{\ell-1}\mid x')}\\
&=q^{i}\sum_{y_0^{\ell-1}} \sqrt{W^{(i)}(y_0^{\ell-1},u_0^{i-1}\mid x)W^{(i)}(y_0^{\ell-1},u_0^{i-1}\mid x')}\\
&=\frac1{q^{\ell-1-i}}\sum_{y_0^{\ell-1}}
\sqrt{\sum_{v_{i+1}^{\ell-1},w_{i+1}^{\ell-1}}
 W^{\ell}\left(y_0^{\ell-1}\mid g(u_0^{i-1}, x, v_{i+1}^{\ell-1})\right)
W^{\ell}\left(y_0^{\ell-1}\mid g(u_0^{i-1}, x', w_{i+1}^{\ell-1})\right)}\\
&\le \frac1{q^{\ell-1-i}}\sum_{y_0^{\ell-1}} \sum_{v_{i+1}^{\ell-1},w_{i+1}^{\ell-1}}
\sqrt{W^{\ell}\left(y_0^{\ell-1}\mid g(u_0^{i-1}, x, v_{i+1}^{\ell-1})\right)
W^{\ell}\left(y_0^{\ell-1}\mid g(u_0^{i-1}, x', w_{i+1}^{\ell-1})\right)}\\
&\le \frac1{q^{\ell-1-i}}
\sum_{v_{i+1}^{\ell-1},w_{i+1}^{\ell-1}}
Z_\text{max}(W)^{D_{x,x'}^{(i)}(u_0^{i-1})}\\
&= q^{\ell-1-i} Z_\text{max}(W)^{D_{x,x'}^{(i)}(u_0^{i-1})}
\end{align*}
The first inequality is obtained as follows.
\begin{align*}
Z_{x,x'}(W_{ u_0^{i-1}}^{(i)}) &=
\sum_{y_0^{\ell-1}} \sqrt{W_{u_0^{i-1}}^{(i)}(y_0^{\ell-1}\mid x)W_{u_0^{i-1}}^{(i)}(y_0^{\ell-1}\mid x')}\\
&=q^{i}\sum_{y_0^{\ell-1}}
\sqrt{W^{(i)}(y_0^{\ell-1},u_0^{i-1}\mid x)W^{(i)}(y_0^{\ell-1},u_0^{i-1}\mid x')}\\
&=\sum_{y_0^{\ell-1}}
\sqrt{\sum_{v_{i+1}^{\ell-1},w_{i+1}^{\ell-1}}
\frac1{q^{2(\ell-1-i)}}
 W^{\ell}\left(y_0^{\ell-1}\mid g(u_0^{i-1}, x, v_{i+1}^{\ell-1})\right)
W^{\ell}\left(y_0^{\ell-1}\mid g(u_0^{i-1}, x', w_{i+1}^{\ell-1})\right)}\\
&\ge\sum_{y_0^{\ell-1}}
\sum_{v_{i+1}^{\ell-1},w_{i+1}^{\ell-1}} \frac1{q^{2(\ell-1-i)}}
\sqrt{W^{\ell}\left(y_0^{\ell-1}\mid g(u_0^{i-1}, x, v_{i+1}^{\ell-1})\right)
W^{\ell}\left(y_0^{\ell-1}\mid g(u_0^{i-1}, x', w_{i+1}^{\ell-1})\right)}\\
&\ge  \frac1{q^{2(\ell-1-i)}}
Z_\text{min}(W)^{D_{x,x'}^{(i)}(u_0^{i-1})}
\end{align*}
\end{proof}


\begin{corollary}\label{cor:Zminmax}
For $i\in\{0,\dotsc,\ell-1\}$,
\begin{align*}
Z_\text{max}(W^{(i)}) &\le q^{\ell-1-i}Z_\text{max}(W)^{D^{[i]}_\text{min}}\\
\frac1{q^{2\ell-2-i}}Z_\text{min}(W)^{D^{[i]}_\text{max}} &\le Z_\text{min}(W^{(i)}).
\end{align*}
\end{corollary}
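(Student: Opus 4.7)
The plan is to derive both inequalities of Corollary~\ref{cor:Zminmax} directly from Lemma~\ref{lem:ZD} by first writing the Bhattacharyya parameter of the channel $W^{(i)}$ (whose output includes the prefix $u_0^{i-1}$) as an average of those of the fixed-prefix sub-channels $W^{(i)}_{u_0^{i-1}}$, and then bounding each term uniformly in $u_0^{i-1}$. Comparing the two definitions shows $W^{(i)}(y_0^{\ell-1},u_0^{i-1}\mid u_i)=q^{-i}W^{(i)}_{u_0^{i-1}}(y_0^{\ell-1}\mid u_i)$, so substituting into $Z_{x,x'}(W^{(i)})$ and pulling the factor $q^{-i}$ (the product of two $q^{-i/2}$'s) out of the square root gives the identity
\begin{equation*}
Z_{x,x'}(W^{(i)})=\frac{1}{q^{i}}\sum_{u_0^{i-1}\in\mathcal{X}^{i}}Z_{x,x'}\bigl(W^{(i)}_{u_0^{i-1}}\bigr),
\end{equation*}
which is already implicit in the proof of Lemma~\ref{lem:ZD}.

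For the upper bound I apply the right half of Lemma~\ref{lem:ZD} to each summand, which gives $Z_{x,x'}(W^{(i)}_{u_0^{i-1}})\le q^{\ell-1-i}Z_{\text{max}}(W)^{D^{[i]}_{x,x'}(u_0^{i-1})}$. Since $Z_{\text{max}}(W)\in[0,1]$ and $D^{[i]}_{x,x'}(u_0^{i-1})\ge D^{[i]}_{x,x'}\ge D^{[i]}_{\text{min}}$, monotonicity of $t\mapsto Z_{\text{max}}(W)^{t}$ lets me replace the exponent by $D^{[i]}_{\text{min}}$, uniformly in $u_0^{i-1}$. Summing $q^{i}$ identical upper bounds then cancels the leading $1/q^{i}$, leaving $Z_{x,x'}(W^{(i)})\le q^{\ell-1-i}Z_{\text{max}}(W)^{D^{[i]}_{\text{min}}}$, which holds for every pair $x\ne x'$, and taking the maximum over such pairs yields the first inequality.

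For the lower bound I apply the left half of Lemma~\ref{lem:ZD} to get $Z_{x,x'}(W^{(i)}_{u_0^{i-1}})\ge q^{-2(\ell-1-i)}Z_{\text{min}}(W)^{D^{[i]}_{x,x'}(u_0^{i-1})}$ and then lower bound the sum of nonnegative terms by a single summand, choosing the $u_0^{i-1}$ that attains $D^{[i]}_{x,x'}=\min_{u_0^{i-1}}D^{[i]}_{x,x'}(u_0^{i-1})$; because $Z_{\text{min}}(W)\le 1$, this is the largest single term and gives $Z_{x,x'}(W^{(i)})\ge q^{-(2\ell-2-i)}Z_{\text{min}}(W)^{D^{[i]}_{x,x'}}$. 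One more monotonicity step, using $D^{[i]}_{x,x'}\le D^{[i]}_{\text{max}}$, loosens this to the form claimed, and taking the minimum over $x,x'$ finishes the argument (the $x=x'$ case is trivial since $Z_{x,x}(W^{(i)})=1$). No step is really difficult; the only point requiring care is consistently tracking the direction of monotonicity in the exponent, which is what forces the upper bound to use $D^{[i]}_{\text{min}}$ while the lower bound must use $D^{[i]}_{\text{max}}$, together with correctly combining the two powers of $q$ in $q^{-i}\cdot q^{-2(\ell-1-i)}=q^{-(2\ell-2-i)}$.
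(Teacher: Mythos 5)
Your proof is correct and follows exactly the route the paper intends: the paper states Corollary~\ref{cor:Zminmax} without proof as an immediate consequence of Lemma~\ref{lem:ZD}, and your decomposition $Z_{x,x'}(W^{(i)})=q^{-i}\sum_{u_0^{i-1}}Z_{x,x'}(W^{(i)}_{u_0^{i-1}})$ followed by the uniform bound (for the maximum) and the single-term bound (for the minimum) is precisely the implicit argument. The bookkeeping of the powers of $q$ and the direction of monotonicity in the exponents is handled correctly.
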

From Proposition~\ref{prop:rdirect}, \ref{prop:rconv} and Corollary~\ref{cor:Zminmax}, the following theorems are obtained.
\begin{theorem}
Assume $P(I_\infty(W)\in\{0,1\})=1$.
Let $f(n)$ be an arbitrary function satisfying $f(n)=o(\sqrt{n})$.
It holds
\begin{equation*}
\liminf_{n\to\infty} P\left(Z(W^{(B_1)\dotsc(B_n)}) < 2^{-\ell^{E_1(g) n + t\sqrt{V_1(g)n} + f(n)}}\right) \ge I(W)Q(t)
\end{equation*}
where $E_1(g) = (1/\ell) \sum_i \log_\ell D^{[i]}_\text{\rm min}$ and
where $V_1(g) = (1/\ell) \sum_i (\log_\ell D^{[i]}_\text{\rm min}-E_1(g))^2$.

When $Z_\text{\rm min}(W)>0$,
\begin{equation*}
\limsup_{n\to\infty} P\left(Z(W^{(B_1)\dotsc(B_n)}) < 2^{-\ell^{E_2(g) n + t\sqrt{V_2(g)n} + f(n)}}\right) \le I(W)Q(t)
\end{equation*}
where $E_2(g) = (1/\ell) \sum_i \log_\ell D^{[i]}_\text{\rm max}$ and
where $V_2(g) = (1/\ell) \sum_i (\log_\ell D^{[i]}_\text{\rm max}-E_2(g))^2$.
\end{theorem}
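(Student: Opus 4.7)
The plan is to apply Proposition~\ref{prop:rdirect} (direct half) and Proposition~\ref{prop:rconv} (converse half) not to the process $Z(W^{(B_1)\dotsm(B_n)})$ itself but to the two auxiliary processes
$X_n := Z_\text{max}(W^{(B_1)\dotsm(B_n)})$ and $Y_n := Z_\text{min}(W^{(B_1)\dotsm(B_n)})$,
and then transfer the resulting tail bounds to $Z(W_n)$ via the elementary sandwich $Z_\text{min}(W) \le Z(W) \le Z_\text{max}(W)$, which yields
$P(X_n < \alpha) \le P(Z(W^{(B_1)\dotsm(B_n)}) < \alpha) \le P(Y_n < \alpha)$
for every $\alpha > 0$.

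The first task is to verify conditions (c1)--(c4) for both auxiliary processes. Corollary~\ref{cor:Zminmax} supplies (c3) for $X_n$ with $S_n := D^{[B_{n+1}]}_\text{min}$ and $c_1 := q^{\ell-1}$, and (c2) for $Y_n$ with $S_n := D^{[B_{n+1}]}_\text{max}$ and $c_0 := q^{-(2\ell-2)}$. In both cases $\{S_n\}$ is i.i.d.\ and (c4) is automatic, since $S_n$ depends only on the fresh choice $B_{n+1}$. Direct computation in base $\ell$ gives $\mathbb{E}[\log_\ell S_1] = E_1(g)$ and $\mathbb{V}[\log_\ell S_1] = V_1(g)$ in the direct case, and $E_2(g), V_2(g)$ in the converse case. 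For (c1), the hypothesis $P(I_\infty \in \{0,1\}) = 1$ combined with the second bound of Lemma~\ref{lem:IZ} forces $Z(W_n) \to 0$ on $\{I_\infty = 1\}$, while the first bound of Lemma~\ref{lem:IZ} forces $Z(W_n) \to 1$ on $\{I_\infty = 0\}$. Since $Z_\text{max}(W) \le q(q-1) Z(W)$ and $Z_\text{min}(W) \le Z(W) \le Z_\text{max}(W)$, both $X_n$ and $Y_n$ tend to $0$ on $\{I_\infty = 1\}$; on $\{I_\infty = 0\}$ the decomposition $1 - Z(W) = (q(q-1))^{-1}\sum_{x \ne x'}(1 - Z_{x,x'}(W))$ as a sum of non-negative terms forces $Z_{x,x'}(W_n) \to 1$ for every pair, whence $X_n, Y_n \to 1$. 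The hypothesis $Z_\text{min}(W) > 0$ of the converse part is used to ensure inductively via Lemma~\ref{lem:ZD} that $Y_n > 0$ for every $n$, so that $\log Y_n$ is well-defined and the lower recursion (c2) is nontrivial.

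The second task is to invoke the propositions and reconcile the bases of logarithm. Proposition~\ref{prop:rdirect} applied to $X_n$ gives
$\liminf_{n\to\infty} P(X_n < 2^{-2^{\mathbb{E}[\log_2 S_1] n + t\sqrt{\mathbb{V}[\log_2 S_1] n} + f(n)}}) \ge P(X_\infty = 0) Q(t) = I(W) Q(t),$
where the last equality uses the martingale computation $P(I_\infty = 1) = I(W)$ from Chapter~\ref{chap:bpolar}. Converting via $\log_2 = (\log_2 \ell) \log_\ell$, the inner exponent becomes $(\log_2 \ell)(E_1(g) n + t\sqrt{V_1(g) n}) + f(n)$, so the double exponential rewrites as $\ell^{E_1(g) n + t\sqrt{V_1(g) n} + f(n)/\log_2 \ell}$; the harmless factor $(\log_2 \ell)^{-1}$ is absorbed into another $o(\sqrt{n})$ function, and the sandwich completes the first inequality. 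Proposition~\ref{prop:rconv} applied to $Y_n$ yields the matching $\limsup$ in exactly the same way.

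The only genuinely delicate point is condition (c1) for $X_n$ and $Y_n$: these are not martingales, so their almost-sure convergence must be imported from the convergence of $I(W_n)$ via Lemma~\ref{lem:IZ} together with the averaging identity that ties $Z$ to $Z_\text{max}$ and $Z_\text{min}$. Everything else, including the base-of-logarithm conversion and absorbing constants into the $o(\sqrt{n})$ term, is routine bookkeeping.
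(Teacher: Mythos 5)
Your proposal is correct and takes essentially the same route the paper intends: the paper offers no written proof beyond the remark that the theorem follows from Propositions~\ref{prop:rdirect} and~\ref{prop:rconv} together with Corollary~\ref{cor:Zminmax}, and your argument—applying those propositions to $Z_{\max}(W^{(B_1)\dotsm(B_n)})$ and $Z_{\min}(W^{(B_1)\dotsm(B_n)})$ with $S_n=D^{[B_{n+1}]}_{\min}$ resp.\ $D^{[B_{n+1}]}_{\max}$, checking (c1)--(c4), identifying $P(X_\infty=0)=P(Y_\infty=0)=I(W)$, and sandwiching $Z_{\min}\le Z\le Z_{\max}$—is precisely the verification the paper leaves implicit, including the base-of-logarithm conversion and the role of the hypothesis $Z_{\min}(W)>0$.
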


\begin{theorem}
Assume that $g$ is a linear kernel represented by a matrix $G$
and that $P(I_\infty(W)\in\{0,1\})=1$.
Let $f(n)$ be an arbitrary function satisfying $f(n)=o(\sqrt{n})$.
It holds
\begin{equation*}
\liminf_{n\to\infty} P\left(Z(W^{(B_1)\dotsc(B_n)}) < 2^{-\ell^{E(G) n + t\sqrt{V(G)n}+f(n)}}\right) \ge I(W)Q(t).
\end{equation*}

When $Z_\text{\rm min}(W)>0$,
\begin{equation*}
\limsup_{n\to\infty} P\left(Z(W^{(B_1)\dotsc(B_n)}) < 2^{-\ell^{E(G) n + t\sqrt{V(G)n}+f(n)}}\right) \le I(W)Q(t).
\end{equation*}
\end{theorem}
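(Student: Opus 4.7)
The plan is to reduce both statements to the generic random-process results Propositions~\ref{prop:rdirect} and~\ref{prop:rconv} by sandwiching $Z(W^{(B_1)\dotsm(B_n)})$ between $Z_\text{min}(W^{(B_1)\dotsm(B_n)})$ and $Z_\text{max}(W^{(B_1)\dotsm(B_n)})$. The sandwich comes from the definition of $Z(W)$ as the average of $Z_{x,x'}(W)$ over pairs $x\ne x'$, which immediately gives $Z_\text{min}(W)\le Z(W)\le Z_\text{max}(W)$. Hence for the direct part we have the inclusion $\{Z(W^{(B_1)\dotsm(B_n)})<\gamma\}\supseteq\{Z_\text{max}(W^{(B_1)\dotsm(B_n)})<\gamma\}$, and for the converse part we have $\{Z(W^{(B_1)\dotsm(B_n)})<\gamma\}\subseteq\{Z_\text{min}(W^{(B_1)\dotsm(B_n)})<\gamma\}$.

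Next I would show that $X_n^{\text{up}}:=Z_\text{max}(W^{(B_1)\dotsm(B_n)})$ satisfies (c1), (c3), (c4) and that $X_n^{\text{lo}}:=Z_\text{min}(W^{(B_1)\dotsm(B_n)})$ satisfies (c1), (c2), (c4), in both cases with $S_n:=D^{[B_n]}$. Since $G$ is linear, $D^{[i]}_\text{min}=D^{[i]}_\text{max}=D^{[i]}$, so Corollary~\ref{cor:Zminmax} yields
\begin{equation*}
X_{n+1}^{\text{up}}\le q^{\ell-1}\bigl(X_n^{\text{up}}\bigr)^{D^{[B_{n+1}]}},\qquad
X_{n+1}^{\text{lo}}\ge q^{-(2\ell-2)}\bigl(X_n^{\text{lo}}\bigr)^{D^{[B_{n+1}]}},
\end{equation*}
giving (c3) with $c_1=q^{\ell-1}$ and (c2) with $c_0=q^{-(2\ell-2)}$ (the latter meaningfully positive precisely because the assumption $Z_\text{min}(W)>0$ of the converse part keeps $X_n^{\text{lo}}$ strictly positive along the whole trajectory). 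Condition (c4) is immediate from $B_n$ being i.i.d.\ and independent of the prefix. For (c1), the assumption $P(I_\infty\in\{0,1\})=1$ combined with Lemma~\ref{lem:IZ} forces every $Z_{x,x'}(W^{(B_1)\dotsm(B_n)})$ to converge a.s.\ to a $\{0,1\}$-valued limit (to $0$ on the event $\{I_\infty=1\}$ and to $1$ on $\{I_\infty=0\}$), so the finite maximum and minimum over pairs also converge a.s., with $P(X_\infty^{\text{up}}=0)=P(X_\infty^{\text{lo}}=0)=I(W)$.

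With the hypotheses verified I would apply Propositions~\ref{prop:rdirect} and~\ref{prop:rconv} to $X_n^{\text{up}}$ and $X_n^{\text{lo}}$, respectively. Under the Chapter~\ref{chap:speed} convention that $\log=\log_2$, we have $\mathbb{E}[\log S_1]=E(G)\log_2\ell$ and $\mathbb{V}[\log S_1]=V(G)(\log_2\ell)^2$, so the inner exponent rewrites as
\begin{equation*}
2^{\,\mathbb{E}[\log S_1]n+t\sqrt{\mathbb{V}[\log S_1]n}+f(n)}
=\ell^{\,E(G)n+t\sqrt{V(G)n}+f(n)/\log_2\ell},
\end{equation*}
which absorbs the $\log_2\ell$ factor into the freely chosen $o(\sqrt n)$ term $f$. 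The sandwich inclusions then turn the conclusion of Proposition~\ref{prop:rdirect} into the claimed liminf bound for $Z(W^{(B_1)\dotsm(B_n)})$, and the conclusion of Proposition~\ref{prop:rconv} into the claimed limsup bound.

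The main obstacle I expect is the almost-sure convergence required in (c1) for $X_n^{\text{up}}$ and $X_n^{\text{lo}}$: we are given convergence of $I_n$ to $\{0,1\}$, not a priori of the Bhattacharyya quantities, and $Z_\text{max}$, $Z_\text{min}$ are not themselves martingales. The clean fix is to bound both above by $\sqrt{1-I(W_n)^2}$ type quantities and below by $1-I(W_n)$ type quantities using Lemma~\ref{lem:IZ}, so that polarization of $I_n$ propagates to polarization of every $Z_{x,x'}(W_n)$ simultaneously, after which the max and min inherit the limit.
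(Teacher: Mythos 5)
Your proposal is correct and follows exactly the route the paper intends: the paper derives this theorem (without writing out details) from Propositions~\ref{prop:rdirect} and~\ref{prop:rconv} together with Corollary~\ref{cor:Zminmax}, using the sandwich $Z_{\min}\le Z\le Z_{\max}$ and the fact that $D^{[i]}_{\min}=D^{[i]}_{\max}=D^{[i]}$ for a linear kernel. Your filling-in of the remaining details --- verifying (c1)--(c4) for the $Z_{\max}$ and $Z_{\min}$ processes, identifying $P(X_\infty=0)=I(W)$ via Lemma~\ref{lem:IZ}, and absorbing the base-change factor $\log_2\ell$ into $f(n)$ --- is sound.
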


\section{Reed-Solomon kernel}
Assume that $\mathcal{X}$ is a field and that $\alpha\in\mathcal{X}$ is its primitive element.
For a non-zero element $\gamma\in\mathcal{X}$, let
\begin{equation*}
G=
\begin{bmatrix}
1&1&\dotsc&1&1&0\\
\alpha^{(q-2)(q-2)}&\alpha^{(q-3)(q-2)}&\dotsc&\alpha^{q-2}&1&0\\
\alpha^{(q-2)(q-3)}&\alpha^{(q-3)(q-3)}&\dotsc&\alpha^{q-3}&1&0\\
\vdots&\vdots&\dotsc&\vdots&\vdots&\vdots\\
\alpha^{q-2}&\alpha^{q-3}&\dotsc& \alpha&1&0\\
1&1& \dotsc& 1&1&\gamma\\
\end{bmatrix}.
\end{equation*}
When $q$ is prime, channel polarization phenomenon occurs for any $\gamma\ne 0$.
When $\gamma$ is a primitive element of $\mathcal{X}$, channel polarization phenomenon occurs for any field $\mathcal{X}$.
We call $G$ a Reed-Solomon kernel since its submatrix which consists of $i$-th row to $(q-1)$-th row is a generator matrix of a generalized
Reed-Solomon code for any $i\in\{0,\dotsc,q-1\}$~\cite{macwilliams1988tec}.
Since generalized Reed-Solomon codes are maximum distance separable (MDS) codes, it holds $D^{[i]}=i+1$.
Hence, the exponent of Reed-Solomon kernel is $(1/\ell) \log_\ell(\ell!)$ where $\ell=q$.
Since
\begin{equation*}
\frac1\ell \sum_{i=0}^{\ell-1}\log_\ell (i+1)
\ge
\frac1{\ell\log_\mathrm{e}\ell}\int_1^\ell \log_\mathrm{e} x \mathrm{d}x
= 1- \frac{\ell-1}{\ell\log_\mathrm{e}\ell}
\end{equation*}
the exponent of the Reed-Solomon kernel tends to 1 as $\ell=q$ tends to infinity.
The exponent of the Reed-Solomon kernel of size $2^2$ is $\log 24/(4\log 4)\approx 0.57312$.
In~\cite{korada2009pcc}, the authors showed that, by using large kernels, the exponent can be improved,
and found the best matrix of size 16 whose exponent is about 0.51828.
The exponent of the Reed-Solomon kernel on $\mathbb{F}_4$ of size 4 is larger than
the largest exponent of binary matrices of size 16.

The Reed-Solomon kernel can be regarded as a natural generalization of the $2\times 2$ matrix~\eqref{eq:2x2}.
Note that a generator matrix of the $r$-th order $q$-ary Reed-Muller code of length $q^n$ is constructed
by choosing rows 
\begin{equation*}
\left\{j\in\{0,\dotsc,q^n-1\}~\bigm|~\sum_{i=1}^n b_i(j) \ge (q-1)n - r\right\}
\end{equation*}
from $G^{\otimes n}$ where $b_i(j)$ is the $i$-th element of $q$-ary expansion of $j$.
The relation between binary polar codes and binary Reed-Muller codes was mentioned by Ar{\i}kan~\cite{5075875}, \cite{4542778}.


\backmatter
\chapter*{Summary}\label{chap:sum}
In the thesis, we have seen the channel polarization phenomenon and polar codes.
It is shown that polar codes are constructed with linear complexity in the blocklength for symmetric B-DMC\@.
The channel polarization phenomenon on $q$-ary channels has also been considered.
We see sufficient conditions of kernels on which the channel polarization phenomenon occurs.
We also see that the Reed-Solomon kernel is a natural generalization to $q$-ary alphabet of the $2\times 2$ matrix~\eqref{eq:2x2} as a binary matrix.
The exponent of the Reed-Solomon kernel tends to 1 as $q$ tends to infinity.
The exponent of the Reed-Solomon kernel of size $2^2$ is larger than the largest exponent for binary matrices of size 16.

\bibliographystyle{IEEEtranS}
\bibliography{IEEEabrv,ldpc}

\end{document}